\begin{document}

\title{Distributed Graph Automata}
\author{Fabian Reiter}

\pagenumbering{alph}
\begin{titlepage}
  \nottoggle{print-version}{
    \newgeometry{top=3cm, bottom=1cm, left=3.25cm, right=3.25cm} 
  }{
    \newgeometry{top=3cm, bottom=1cm, left=3.5cm, right=3cm} 
  }
  \begin{flushleft}
    {\LARGE\fontfamily{iwona}\selectfont \hspace{-.13cm} Fabian Reiter}\\[1cm]
    {\fontfamily{iwona}\fontsize{1.4cm}{1cm}\selectfont\textbf{Distributed\\[.5cm] Graph Automata}}\\[.5cm]
    {\color{palegray} \rule{\linewidth}{.5ex}}\\[1.5cm]
  \end{flushleft}
  \begin{center}
\newsavebox\automatonbox
\begin{lrbox}{\automatonbox}
\begin{tikzpicture}[automaton, half row sep]
  \matrix[states] {
        &[4ex]&[7ex] \node[permanent] (q_ap) {$\qap$}; \\
    \node[existential] (q_a) {$\qa$}; & \node[universal] (q_a2) {$\qaprime$};  \\
        &     & \node[permanent] (q_ah) {$\qah$}; \\
        & \node[universal] (q_bkr) {$\qbkr$}; \\
    \node[existential] (q_b) {$\qb$}; &     & \node[permanent] (q_yes) {$\q{yes}$}; \\
        & \node[universal] (q_bk) {$\qbk$}; \\
        \\
    \node[existential] (q_c) {$\qc$}; &     & \node[permanent] (q_no) {$\q{no}$}; \\
  };
  \matrix[symbols] { 
        \\
    \node (a) {$\a$}; \\
        \\
        \\
    \node (b) {$\b$}; \\
        \\
        \\
    \node (c) {$\c$}; \\
  };
  \matrix[accepting sets] {
    \x &    \\
       &    \\
       & \x \\
       &    \\
    \x & \x \\
       &    \\
       &    \\
       &    \\
  };
  \DrawColumnBackground{1}{8}{2}
  \path (a)        edge (q_a)
        (b)        edge (q_b)
        (c)        edge (q_c)
        (q_a)      edge (q_a2)
        (q_b.25)   edge node[above] {$\xnni \qb$} (q_bkr)
        (q_b.0)    edge node[above,xshift=.2ex] {$\xnni \qb$} (q_bk)
        (q_b.330)  edge[bend right=20] node[above=4.2ex,xshift=-9.2ex] {$\xni \qb$} (q_no) 
        (q_c.0)    edge[bend right=25] node[above=-.5ex,xshift=-12ex] {$\xnni \qc ∧ \xnni \qa$} (q_yes)
        (q_c.330)  edge[bend right=22] node[above=.2ex] {$\xni \qc ∨ \xni \qa$} (q_no)
        (q_a2.22)  edge[bend left=8] node[above=.3ex,xshift=.1ex] {$\xeq \{\qbkr,\qbk\}$} (q_ap)
        (q_a2)     edge node[above=.1ex,xshift=.4ex] {$\xeq \{\qbkr,\qbk\}$} (q_ah)
        (q_a2.320) edge[bend right=9] node[above=4.5ex,xshift=1.3ex] {$\xneq \{\qbkr,\qbk\}$} (q_no)
        (q_bkr)    edge (q_yes)
        (q_bk)     edge (q_yes);
\end{tikzpicture}\end{lrbox}
\newsavebox\graphbox
\begin{lrbox}{\graphbox}
%
\pentagraphPic{input graph}{\lnodedistIG}{$\a$}{$\b$}{$\c$}{$\b$}{$\b$}{$\c$}\end{lrbox}
\newsavebox\runbox
\begin{lrbox}{\runbox}
\begin{tikzpicture}[run or game]
  \matrix {
      & & \node[config,pacc] (c3a)
           {\pentagraphPic{configuration}{\lnodedistC}{$\qap$}{$\q{yes}$}{$\q{yes}$}{$\q{yes}$}{$\q{yes}$}{$\q{yes}$}}; \\
    \node[config,exis] (c1)
     {\pentagraphPic{configuration}{\lnodedistC}{$\qa$}{$\qb$}{$\qc$}{$\qb$}{$\qb$}{$\qc$}}; 
      & \node[config,univ] (c2)
         {\pentagraphPic{configuration}{\lnodedistC}{$\qaprime$}{$\qbkr$}{$\q{yes}$}{$\qbkr$}{$\qbk$}{$\q{yes}$}}; \\
      & & \node[config,pacc] (c3b)
           {\pentagraphPic{configuration}{\lnodedistC}{$\qah$}{$\q{yes}$}{$\q{yes}$}{$\q{yes}$}{$\q{yes}$}{$\q{yes}$}}; \\
  };
  \path (c1) edge (c2)
        (c2) edge (c3a)
             edge (c3b);
\end{tikzpicture}\end{lrbox}
\begin{tikzpicture}
  \node (automaton) {\usebox\automatonbox};
  \node[below=1ex of automaton,xshift=-2ex] (Alabel) {\large\fontfamily{iwona}\selectfont \fade{Automaton}};

  \node[right=9ex of automaton] (graph) {\usebox\graphbox};
  \node[below=1ex of graph] (Glabel) {\large\fontfamily{iwona}\selectfont \fade{Input Graph}};

  \node[below=16ex of automaton,xshift=20ex] (run) {\usebox\runbox};
  \node[above=-9ex of run,xshift=0ex] (Rlabel) {\large\fontfamily{iwona}\selectfont \fade{Run}};

  \begin{scope}[->,>=stealth,line width=2ex,lightgray,line cap=round]
    \draw ([xshift=2ex,yshift=-1ex]Alabel.south east) to [out=-30,in=110] ([xshift=0ex,yshift=2ex]Rlabel.north west);
    \draw ([xshift=-2ex,yshift=-1ex]Glabel.south west) to [out=-150,in=70] ([xshift=0ex,yshift=2ex]Rlabel.north east);
  \end{scope}
\end{tikzpicture}
  \end{center}
  \vspace{.5cm}
  \begin{flushright}
    \advance\rightskip -1.5cm
    {\Large\fontfamily{iwona}\selectfont University of Freiburg}\hspace{2ex}
\begin{tikzpicture}[baseline={([yshift=.8ex]b.south)}]
  \matrix[row sep=.75ex,
          nodes={minimum height=2.9ex, minimum width=.75ex,
                 inner sep=0ex, text height=0ex}]
  {
    \node[fill=unired]  {}; \\
    \node[fill=uniblue] {}; \\
    \node[fill=uniblue] (b) {}; \\
    \node[fill=unired]  {}; \\
  };
  \pgfresetboundingbox
\end{tikzpicture}
  \end{flushright}
  \restoregeometry
\end{titlepage}

\clearpage
\thispagestyle{empty}
{\Large\fontfamily{iwona}\selectfont Master's Thesis in Computer Science}
\vfill
\begin{description}[style=nextline]
\item[Supervisors]
  \begin{tabular}{ll}
    Prof. Dr. Fabian Kuhn & (Chair of Algorithms and Complexity) \\
    Prof. Dr. Andreas Podelski & (Chair of Software Engineering)
  \end{tabular}
\item[Institution\vphantom{p}]
  \begin{tabular}{ll}
    Albert Ludwig University of Freiburg & (Germany) \\
    Faculty of Engineering \\
    Department of Computer Science
  \end{tabular}
\item[Date of Submission\quad\textnormal{(original version)}]
  \begin{tabular}{l}
    January 20\textsuperscript{th}, 2014
  \end{tabular}
\item[Date of Revision\quad\textnormal{(this version)}]
  \begin{tabular}{l}
    April 25\textsuperscript{th}, 2014
  \end{tabular}
\item[Author and Contact Address]
  \begin{tabular}{l}
    Fabian Reiter \quad \href{mailto:fabian.reiter@gmail.com}{\nolinkurl{fabian.reiter@gmail.com}}
  \end{tabular}
\end{description}
\vspace{1cm}

\clearpage
\pagenumbering{roman}

\thispagestyle{empty}
\section*{Abstract}
Inspired by distributed algorithms, we introduce a new class of finite
graph automata that recognize precisely the graph languages definable
in monadic second-order logic. For the cases of words and trees, it
has been long known that the regular languages are precisely those
definable in monadic second-order logic. In this regard, the automata
proposed in the present work can be seen, to some extent, as a
generalization of finite automata to graphs.

Furthermore, we show that, unlike for finite automata on words and
trees, the deterministic, nondeterministic and alternating variants of
our automata form a strict hierarchy with respect to their expressive
power. For the weaker variants, the emptiness problem is decidable.

\vspace{.5cm}
\section*{\it\fontfamily{iwona}\selectfont Zusammenfassung}
{\it
  Inspiriert durch verteilte Algorithmen führen wir eine neue Klasse
  von endlichen Graph-Automaten ein, die genau die Graph-Sprachen
  erkennen, die in mo\-na\-di\-scher Prädikatenlogik zweiter Stufe
  definierbar sind. Für Worte und Bäume ist seit langem bekannt, dass
  die regulären Sprachen genau jene sind, die in mo\-na\-di\-scher
  Prädikatenlogik zweiter Stufe definierbar sind. In dieser Hinsicht
  können die in vorliegender Arbeit vorgestellten Automaten gewissermaßen 
  als eine Verallgemeinerung von endlichen Automaten auf Graphen
  betrachtet werden.

  Ferner zeigen wir, dass im Gegensatz zu endlichen Automaten auf
  Worten und Bäumen die de\-ter\-mi\-nis\-tisch\-en,
  nichtdeterministischen und alternierenden Varianten unserer
  Automaten eine strikte Hierarchie bezüglich ihrer Ausdrucksstärke
  bilden. Für die schwächeren Varianten ist das Leerheitsproblem
  entscheidbar.  }

\clearpage
\thispagestyle{empty}
\section*{Acknowledgments}
I would like to thank my supervisors, Fabian Kuhn and Andreas
Podelski, for many helpful and pleasant discussions, and their
continuous support of a project whose outcome was unpredictable in its
early stages. Furthermore, I am very grateful to Jan Leike, who kindly
read and commented on drafts of this thesis.

\tableofcontents

\cleardoublepage
\pagenumbering{arabic}

\chapter{Introduction}
The research for this thesis started with an open-ended (and perhaps
naive) question: \emph{what can we obtain by connecting finite
  automata in a synchronous distributed setting?} As it turns out, a
possible answer is: a new class of automata that can be seen, to some
extent, as a generalization of finite automata to graphs. In order to
substantiate this claim, we begin by reviewing a fundamental result of
formal language theory, and then use that result as a guide within the
less well-explored world of graph languages.

\section{Background and Related Work}
In the early 1960s, a beautiful connection between automata theory and
formal logic was discovered. Independently of each other, Büchi
\cite{Buc60}, Elgot \cite{Elg61} and Trakhtenbrot \cite{Tra61} showed
that the regular languages, recognized by finite automata, are
precisely the languages defined by a certain class of logical
formulas. This idea might be best understood through a simple
example. The following one is borrowed from Thomas \cite{Tho91}.

\begin{example} \label{ex:NFA-MSO}
  Consider the nondeterministic finite automaton $\dA[no]{\b\b}$
  specified in \cref{fig:NFA}. If we exclude the empty word, this
  automaton accepts a finite word $w$ over the alphabet $Σ=\{\a,\b\}$
  \Iff $w$ does not contain the segment $\b$$\b$ and the last symbol
  of $w$ is an $\a$. We can define the same language by the following
  first-order formula:
  \begin{equation*}
    \dphi[no]{\b\b} \coloneqq \;
    \logic{¬\,∃u,v\Bigl(\lab{\b}u \,∧\, u\!\arr\!v \,∧\, \lab{\b}v\Bigr)
      \,∧\, ∃u\Bigl(\lab{\a}u \,∧\, ¬\,∃v\bigl(u\!\arr\!v\bigr)\Bigr)}
  \end{equation*}
  The idea is that we identify each word with a labeled directed graph
  that consists of a single path. For instance, $\a$$\b$$\a$
  corresponds to 
\begin{tikzpicture}[lgraph,baseline=(u.text)]
  \node[lnode] (u) {$\a$};
  \node[lnode] (v) at ([shift={(0:\lnodedistIG)}]u) {$\b$};
  \node[lnode] (w) at ([shift={(0:\lnodedistIG)}]v) {$\a$};
  \path (u) edge (v)
        (v) edge (w);
\end{tikzpicture}\!\!. Such a graph is
  a relational structure over which we can evaluate the truth of the
  formula $\dphi[no]{\b\b}$. Variables like $\lsymb{u}$ and
  $\lsymb{v}$ range over the nodes of the graph, $\lsymb{\arr}$
  represents the edge relation, and the symbols $\lsymb{\lab{\a}}$ and
  $\lsymb{\lab{\b}}$ are to be interpreted as unary relations
  indicating that a node is labeled by an $\a$ and a $\b$,
  respectively.

  The first conjunct of $\dphi[no]{\b\b}$ specifies that no two
  consecutive nodes are both labeled by a $\b$, while the second
  conjunct ensures that the last node is labeled by an $\a$.
\end{example}

\begin{SCfigure}[1.4][h!]
  \alignpic
\begin{tikzpicture}[automaton]
  \matrix[states] {
    \node[NFA state,initial,accepting] (q_a) {$q_\a$}; & \node[NFA state] (q_b) {$q_\b$}; \\
  };
  \path (q_a) edge[loop above] node {$\a$} (q_a)
              edge[bend left] node {$\b$} (q_b)
        (q_b) edge[bend left] node {$\a$} (q_a);
\end{tikzpicture}
  \caption{$\dA[no]{\b\b}$, a nondeterministic finite automaton whose
    language, when restricted to nonempty words, consists of all the
    finite words over the alphabet $Σ=\{\a,\b\}$ that do not contain
    the segment $\b$$\b$ and that end with an $\a$.}
  \label{fig:NFA}
\end{SCfigure}
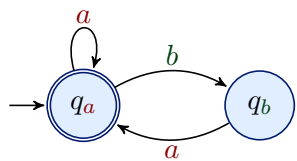

In the formula $\dphi[no]{\b\b}$ of the preceding example, we only
used quantifiers that range over the nodes of a graph. By additionally
allowing quantification over sets of nodes, we reach the full extent
of \emph{monadic second-order} (MSO) logic. The famous result
established by Büchi, Elgot and Trakhtenbrot states that we can
effectively translate every finite automaton to an equivalent
MSO-formula (with relation symbols fixed as in \cref{ex:NFA-MSO}), and
vice versa. (For a proof, see, e.g., \cite[Thm~3.1]{Tho96}.) An
important consequence of this equivalence is that the satisfiability
and validity problems of MSO-logic on words are decidable, because so
are the corresponding problems for finite automata. This application
was the original motivation for establishing a connection between the
two worlds. Nowadays, this and similar connections also play a central
role in model checking, where one needs to decide whether a system,
represented by an automaton, satisfies a given specification,
expressed as a logical formula.

About a decade later, the result was generalized from words to labeled
trees by Thatcher, Wright \cite{TW68} and Doner \cite{Don70} (see,
e.g., \cite[Thm~3.8]{Tho96}). The corresponding tree automata
(which we shall not consider here) can be seen as a natural extension
of finite automata to labeled trees. As far as MSO-logic is concerned,
the generalization to (ordered, directed) labeled trees is
straightforward, since, just like words, these can be regarded as
labeled graphs. We only need to introduce additional edge relation
symbols of the form $\lsymb{\xarr{1}},\lsymb{\xarr{2}},…$, in order to
be able to express that some node is the $i$-th child of another node.

In view of these results, it seems natural to ask whether the bridge
between automata theory and logic persists if we expand our field of
interest from words or trees to arbitrary node labeled graphs
(possibly with multiple edge relations, as for trees). However, the
trouble is that this question is not well-defined. While we can easily
specify what we mean by MSO-logic on graphs, it is not obvious at all
how finite automata should be canonically generalized to graphs that
go beyond trees.

\begin{quote}
  \emph{A result similar to those \emph{[for word and tree languages]}
    does not exist for graph languages, for the trivial reason that
    there is no agreement on what would be the class of “regular graph
    languages”, and, in particular, that there is no accepted notion
    of “finite graph automaton”.} \\
  \hspace*{\fill} (Joost Engelfriet, 1991 \cite[p.~139]{Eng91})
\end{quote}

Nevertheless, graph languages have been an active area of research for
nearly fifty years. In large part, this has been driven by
investigations of generative devices known as \emph{graph grammars}.
The theory of graph grammars is by now well-developed, as can be seen
from the “Handbook of Graph Grammars and Computing by Graph
Transformation” \cite{HGG97-99}, a comprehensive reference consisting
of several volumes. Within this branch of research, MSO-logic has
raised considerable interest. Especially through the work of
Courcelle, MSO-formulas have proven to be particularly useful tools
for obtaining decidability results about graph languages.

\begin{quote}
  \emph{The deep reason why \emph{[monadic second-order]} logic is so
    crucial is that it replaces for graphs \emph{[…\!]} the notion of
    a finite automaton which is very important in the theory of formal
    languages. It “replaces” because no convenient notion of finite
    automaton is known for graphs.} \\
  \hspace*{\fill} (Bruno Courcelle, 1997 \cite[p.~315]{Cou97})
\end{quote}

Supported by the equivalence of regularity and MSO-definability on
words and labeled trees, one might even go as far as referring to the
sets of graphs that can be defined by MSO-formulas as the “regular
graph languages”. Hence, one way to cope with the lack of a canonical
generalization of finite automata to graphs is to search for a model
of computation on graphs that has the same expressive power as
MSO-logic. This is the approach pursued in this thesis.

It must be emphasized that the present work is not, by any means, the
first to investigate graph automata. The definitions suggested in the
literature over the last decades are far too numerous to survey here,
but let us at least mention a small selection. Already in the early
days of graph grammars, mostly in the 1970s, the notion of
graph-accepting machines was explored in parallel to generative
devices. Some examples, among many others, are the models proposed by
Shah, Milgram, Wu and Rosenfeld in \cite{SMR73}, \cite{Mil75} and
\cite{WR79}. However, none of those studies were concerned with
equivalence to MSO-logic, and few of them were pursued much
further. It seems that graph grammars received much more interest than
graph automata. Later, in the early 1990s, Thomas introduced his
\emph{graph acceptors} in \cite{Tho91}, with the explicit goal of an
automata-theoretic investigation of MSO-definable graph properties. It
turned out that Thomas' graph acceptors recognize precisely the graph
languages of bounded degree definable in the existential
fragment\footnote{The existential fragment of MSO-logic consists of
  formulas of the form $\logic{∃U_1,\meta{…}\,,U_n(\meta{φ})}$, where
  $\lsymb{U_1},…,\lsymb{U_n}$ are set variables, and $φ$ is a
  first-order formula (i.e., $φ$ does not contain any set
  quantifiers).} of MSO-logic (see \cite[Thm~3]{Tho97}). This makes
them less expressive than full MSO-logic, and, in particular, their
class of recognizable languages is not closed under
complementation. Also, about a year earlier, Courcelle had introduced
in \cite{Cou90} an algebraic notion of \emph{recognizability}, without
defining any notion of graph automaton. Every MSO-definable graph
language is recognizable in Courcelle's sense, but not vice versa.

The expressiveness of MSO-logic on graphs has thus been approximated
“from below”, by Thomas, and “from above”, by Courcelle, but, to the
author's best knowledge, a perfectly matching automaton model has been
missing so far. Relatively recent remarks by Courcelle and Engelfriet
in \cite{CE12}, as well as the following explicit statement, support
this assumption.

\begin{quote}
  \emph{No existing notion of graph automaton gives an equivalence
    with monadic second-order logic.} \\
  \hspace*{\fill} (Bruno Courcelle, 2008 \cite[p.~8]{Cou08})
\end{quote}

The present work is an attempt to close this gap in the theory of
graph languages. It is successful in the sense that it provides a
class of graph automata equivalent to MSO-logic. However, it must also
be conceded that, up to now, no new results have been inferred from
this alternative characterization. Whether it will prove as fruitful
as classical automata on words and trees remains to be seen.

\section{Structure of this Thesis}
The presentation is organized as follows: After some preliminaries on
graphs in \cref{chap:graphs}, we introduce the alternating variant of
our distributed graph automata (ADGAs) in \cref{chap:adga}, and
discuss some of their properties. Since the capabilities of ADGAs
might not be obvious at first sight, a substantial part of the chapter
is devoted to examples. Then, in \cref{chap:msol}, we review MSO-logic
on graphs, and prove our main result, the equivalence of MSO-logic and
ADGAs. This immediately entails some negative results on ADGAs. We
finish by considering nondeterministic and deterministic variants of
our automata in \cref{chap:ndga_ddga}. Both turn out to be strictly
weaker than ADGAs, and they also form a hierarchy among themselves.
The loss of expressive power is however rewarded by a decidable
emptiness problem.


\chapter{Preliminaries on Graphs} \label{chap:graphs}
Graphs play a central role in this work. On the one hand, they will
serve as input for automata, and as models for logical formulas. On
the other hand, we will use them to describe the behaviour of
automata, and to represent two-player games. In this chapter, we
provide formal definitions, review some common graph properties, and
discuss the notion of graph minors.

\section{Basic Definitions}
As our most general concept, we consider directed graphs with nodes
labeled by an alphabet $Σ$, and multiple edge relations indexed by an
alphabet $Γ$. While node labels are auxiliary, we regard edge labels
as an integral part of the graph structure.

\begin{definition}[$Γ$-Graph]
  Let $Γ$ be a nonempty finite alphabet (i.e., a set of symbols). A
  \defd{$Γ$-graph} $G$ is a structure
  $\bigl\langle\VG,\,⟨\arrG{γ}⟩_{\:\!γ∈Γ}\bigr\rangle$, where
  \begin{itemize}
  \item $\VG$ is a nonempty finite set of \defd{nodes}, and
  \item each ${\arrG{γ}}⊆\VG×\VG$ is a set of directed \defd{edges}
    labeled by $γ∈Γ$.
  \end{itemize}
\end{definition}

If $Γ$ is understood or irrelevant, we refer to $G$ simply as a
\defd{graph}. Note that self-loops are allowed, and that there can be
multiple edges from one node to another, but at most one for every
edge label $γ∈Γ$. If there are no self-loops, i.e., if $v\arrG{γ}v$
does not hold for any $v∈\VG$ and $γ∈Γ$, then we say that $G$ is
\defd{loop-free}. Furthermore, if there is only a single edge
relation, we call $G$ a \defd{simple} graph. In such a case, we set
$Γ=\{\blank\}$ (singleton consisting of a blank symbol), and omit the
superfluous edge labels.
  
\begin{definition}[$Σ$-Labeled $Γ$-Graph]
  Let $Σ$ and $Γ$ be two nonempty finite alphabets. A
  \defd{$Σ$-labeled $Γ$-graph} is a tuple $⟨G,λ⟩$, denoted as
  $\defd{G_λ}$, where
  \begin{itemize}
  \item $G$ is a $Γ$-graph (referred to as the underlying graph),
    and
  \item $λ\colon \VG→Σ$ is a node \defd{labeling} (function).
  \end{itemize}
\end{definition}

Again, we will often relax our terminology, and refer to $G_λ$ simply
as a \defd{labeled graph}, or even as a graph, if the meaning is clear
from the context.

At this point, it is important to mention that we are only interested
in (labeled) graphs \emph{up to isomorphism}. That is, we consider two
$Σ$-labeled $Γ$-graphs $G_λ$ and $G'_{λ'}$ to be \defd{equal} if there
is a bijection $f\colon \VG→\VGpr$, such that $λ(v)=λ'(f(v))$, and
$u\arrG{γ}v$ \Iff $f(u)\arrGpr{γ}f(v)$, for all $u,v∈\VG$ and
$γ∈Γ$. The reason for this is that our automata and logical formulas
cannot distinguish between isomorphic graphs.

In order to draw an analogy to formal language theory on words, we
introduce an extension to graphs of the well-known notation employing
alphabet exponentiation and the Kleene star. In the context of words,
$Σ^n$ designates the set of all words of length $n$ over the alphabet
$Σ$, and $Σ^*$ the set of all words of arbitrary (finite) length over
$Σ$. Now, a word over $Σ$ can be viewed as a $Σ$-labeled linear graph
with a single edge relation. For example, we can identify
$\a$$\c$$\a$$\b$ with 
\begin{tikzpicture}[lgraph,baseline=(u.text)]
  \node[lnode] (u) {$\a$};
  \node[lnode] (v) at ([shift={(0:\lnodedistIG)}]u) {$\c$};
  \node[lnode] (w) at ([shift={(0:\lnodedistIG)}]v) {$\a$};
  \node[lnode] (x) at ([shift={(0:\lnodedistIG)}]w) {$\b$};
  \path (u) edge (v)
        (v) edge (w)
        (w) edge (x);
\end{tikzpicture}\!\!. From this
point of view, the number $n$ in the expression $Σ^n$ refers to the
underlying linear graph of length $n$, and the Kleene star $*$ in
$Σ^*$ can be seen as a placeholder for any linear graph. We can
generalize this notation to arbitrary finite graphs by replacing $n$
with a $Γ$-graph $G$, and $*$ with the symbol $\clouded{Γ}$ (“clouded
$Γ$”), which serves as a placeholder for any $Γ$-graph.

\begin{definition}[Cloud Notation]
  For any nonempty finite alphabets $Σ$ and $Γ$, and any $Γ$-graph
  $G$, we denote by $Σ^G$ the set of all $Σ$-labeled graphs with
  underlying graph $G$, and by $Σ^{\clouded{Γ}}$~the set of all
  $Σ$-labeled $Γ$-graphs, i.e.,
  \begin{equation*}
    \defd{Σ^G} ≔ \{G_λ \mid λ\colon\VG→Σ\},\!
    \quad \text{and} \quad
    \defd{Σ^{\clouded{Γ}}} ≔ \,\smashoperator{\bigcup_{G∈\mathcal{G}(Γ)}}\, Σ^G,
  \end{equation*} \\[-3ex]
  where $\mathcal{G}(Γ)$ is the set of all $Γ$-graphs.
\end{definition}

Occasionally, the need will arise to consider graphs labeled
heterogeneously with two alphabets $Σ_1$ and $Σ_2$, such that at least
one node is labeled by some symbol from $Σ_1$ that is not contained in
$Σ_2$. In such cases, we shall employ the notational shorthands
\begin{align*}
  \swl{\defd{⟨Σ_1,Σ_2⟩^G}}{⟨Σ_1,Σ_2⟩^{\clouded{Γ}}} &\coloneqq (Σ_1∪Σ_2)^G\setminus (Σ_2)^G,
  \quad \text{and} \\
  \defd{⟨Σ_1,Σ_2⟩^{\clouded{Γ}}} &\coloneqq (Σ_1∪Σ_2)^{\clouded{Γ}}\setminus (Σ_2)^{\clouded{Γ}}.
\end{align*}
 
We will use automata and logical formulas to characterize sets of
labeled graphs. Pursuing the analogy with words, such sets are called
graph languages.

\begin{definition}[Graph Language]
  A \defd{graph language} is a set of labeled graphs. More precisely,
  $L$ is a graph language \Iff there are finite alphabets $Σ$ and $Γ$,
  such that $L⊆Σ^{\clouded{Γ}}$.
\end{definition}

In many examples, we will consider graph languages for which the node
labels are irrelevant. In such cases, we fix the node alphabet to be a
singleton $Σ=\{\blank\}$,\footnote{Since $Σ$ and $Γ$ do not need to be
  disjoint, we can use the same blank symbol $\blank$ for both.} and,
to simplify notation, we identify any labeled graph
$G_λ∈Σ^{\clouded{Γ}}$ with its underlying graph $G$.

Furthermore, when reasoning about graphs as structural objects, we
will follow the usual terminology of graph theory. In particular,
given a $Γ$-graph $G$ and two nodes $u,v∈\VG$, we say that $u$ is an
\defd{incoming neighbor} of $v$, and $v$ an \defd{outgoing neighbor}
of $u$, if $u\arrG{γ}v$ for some $γ∈Γ$. A node without incoming
neighbors is called a \defd{source}, and a node without outgoing
neighbors a \defd{sink}. Without further qualification, the term
\defd{neighbor} refers to both incoming and outgoing neighbors. The
(undirected) \defd{neighborhood} of a node is the set of all of its
neighbors. Accordingly, the incoming and outgoing neighborhoods
contain only the incoming and outgoing neighbors, respectively. If we
additionally qualify a neighborhood of a node $v$ as \defd{closed},
it means that we also include $v$ itself into the set.

A (directed) \defd{path} from $u$ to $v$ is a sequence of nodes,
starting with $u$ and ending with $v$, in which each node but the last
is (directly) followed by one of its outgoing neighbors. If a
subsequent neighbor does not necessarily have to be outgoing, we call
the sequence an \defd{undirected path}.

We say that a $Γ$-graph $H$ is a \defd{subgraph} of another $Γ$-graph
$G$ (or that $G$ contains $H$ as a subgraph) if $\VH⊆\VG$ and
${\arrH{γ}}⊆({\arrG{γ}}∩\VHsquared)$ for all $γ∈Γ$. If a subgraph $H$
contains all the edges between nodes in $\VH$ that occur in $G$, i.e.,
if ${\arrH{γ}}=({\arrG{γ}}∩\VHsquared)$ for all $γ∈Γ$, then we call
$H$ the subgraph of $G$ \defd{induced} by $\VH$, and denote it by
$\defd{G[\VH]}$.

\section{Some Graph Properties} \label{sec:graph-properties}
We now briefly recall some standard graph properties, which will serve
us as examples of graph languages in \cref{chap:adga,chap:msol}. Let
$Σ$ and $Γ$ be two nonempty finite alphabets, and $G_λ$ some
$Σ$-labeled $Γ$-graph. In the following, if node labels are irrelevant
for some graph property, we only refer to the underlying graph $G$,
but, of course, the same properties also apply to labeled graphs.

The node labeling $λ$ constitutes a valid \defd{coloring} of $G$ if no
two adjacent nodes (neighbors) share the same label, i.e.,
$u\arrG{γ}v$ implies $λ(u)≠λ(v)$, for all $u,v∈\VG$ and $γ∈Γ$. If
$\card{Σ}=k$, such a labeling $λ$ is called a $k$-coloring of $G$, and
any $Γ$-graph for which a $k$-coloring exists is said to be
\defd{$k$-colorable}. Note that, by definition, a graph that contains
self-loops is not $k$-colorable for any $k≥1$.

We call $G$ (weakly) \defd{connected} if there is an undirected path
between every two nodes $u,v∈\VG$.

If we want to deal with undirected graphs, we can represent them as
directed graphs with bidirectional edges. Formally, $G$ is
\defd{undirected} if for every $u,v∈\VG$ and $γ∈Γ$, it holds that
$u\arrG{γ}v$ \Iff $v\arrG{γ}u$. For undirected graphs, we shall not
distinguish between the $γ$-edge from $u$ to $v$ and the $γ$-edge from
$v$ to $u$. Instead, we refer collectively to both of them as
\emph{the} $γ$-edge between $u$ and $v$.

A (simple) \defd{cycle} in $G$ is a path that starts and ends at the
same node $v∈\VG$, such that $v$ occurs precisely two times, and every
other node in $\VG$ occurs at most once. If such a cycle contains
every node in $\VG$, it is called a \defd{Hamiltonian cycle}.

We say that a set of edges $M⊆\bigcup_{γ∈Γ}({\arrG{γ}})$ is a
\defd{perfect matching} of $G$ if no two edges in $M$ share a common
node, and every node in $\VG$ is covered by some edge in $M$.

A \defd{nontrivial automorphism} of $G_λ$ is a bijection $f\colon
\VG→\VG$ that is not an identity, such that $λ(v)=λ(f(v))$, and
$u\arrG{γ}v$ \Iff $f(u)\arrG{γ}f(v)$, for all $u,v∈\VG$ and $γ∈Γ$.

The graph $G$ is \defd{planar} if it can be drawn in the plane such
that no two edges intersect each other, i.e., they may only meet at
nodes. We will give a more formal characterization of planarity in the
next section (see \cref{thm:kuratowski-wagner}).

\section{Graph Minors}
The issue with the definition of a planar graph given above is that it
refers to the notion of a plane, an object that is not part of the
graph. In order to make planarity accessible to our automata, we need
a specification that involves only the graph structure itself. We will
exploit an important result in graph theory for this purpose: the
characterization of planarity in terms of forbidden minors. In this
context, we only consider unlabeled, simple, undirected graphs.

Graph minors can be defined by means of edge contractions. Given a
simple undirected graph $G$, \defd{contracting} the edge between two
nodes $u,v∈\VG$ means to remove that edge and merge the nodes $u$ and
$v$, such that every node that was a neighbor of $u$ or $v$ becomes a
neighbor of the merged node.

\begin{definition}[Minor]
  Let $G$ and $H$ be two simple undirected graphs, such that $H$ is
  loop-free. We say that $H$ is a \defd{minor} of $G$ (or that $G$
  contains $H$ as a minor) if we can obtain $H$ by taking a subgraph
  of $G$ and repeatedly contracting edges.
\end{definition}

\begin{example}
  Consider the graph in \cref{fig:graph_5_cycle_as_subgraph}. Removing
  the nodes and edges highlighted in red yields the subgraph shown in
  \cref{fig:graph_5_cycle}. Then, by contracting the edges highlighted
  in green, we obtain $\K{3}$, the complete graph with three nodes
  depicted in \cref{fig:graph_k3}. Hence, the considered graph
  contains $\K{3}$ as a minor.
\end{example}
 
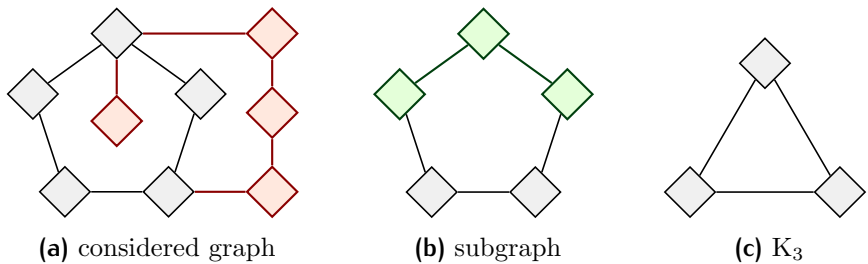
\begin{figure}[h!]
  \alignpic
  \begin{subfigure}[b]{0.35\textwidth}
    \centering
\begin{tikzpicture}[lgraph,undirected]
  \node[lnode,remove] (s) {};
  \node[lnode] (t) at ([shift={(90:\lnodedistIG)}]s) {};
  \node[lnode] (u) at ([shift={(18:\lnodedistIG)}]s) {};
  \node[lnode] (v) at ([shift={(306:\lnodedistIG)}]s) {};
  \node[lnode] (w) at ([shift={(234:\lnodedistIG)}]s) {};
  \node[lnode] (x) at ([shift={(162:\lnodedistIG)}]s) {};
  \node[lnode,remove] (y) at ([shift={(0:1.7634\lnodedistIG)}]t) {};
  \node[lnode,remove] (a) at ([shift={(-90:0.9045\lnodedistIG)}]y) {};
  \node[lnode,remove] (z) at ([shift={(0:1.1756\lnodedistIG)}]v) {};
  \path (t) edge (u)
            edge (x)
        (v) edge (u)
            edge (w)
        (w) edge (x);
  \path [remove] (t) edge (s)
                     edge (y)
                 (a) edge (y)
                     edge (z)
                 (z) edge (v);
\end{tikzpicture}
    \caption{considered graph}
    \label{fig:graph_5_cycle_as_subgraph}
  \end{subfigure}
  \begin{subfigure}[b]{0.3\textwidth}
    \centering
\begin{tikzpicture}[lgraph,undirected]
  \coordinate (s);
  \node[lnode,contract] (t) at ([shift={(90:\lnodedistIG)}]s) {};
  \node[lnode,contract] (u) at ([shift={(18:\lnodedistIG)}]s) {};
  \node[lnode] (v) at ([shift={(306:\lnodedistIG)}]s) {};
  \node[lnode] (w) at ([shift={(234:\lnodedistIG)}]s) {};
  \node[lnode,contract] (x) at ([shift={(162:\lnodedistIG)}]s) {};
  \path (v) edge (u)
            edge (w)
        (w) edge (x);
  \path[contract] (t) edge (u)
                      edge (x);
\end{tikzpicture}
    \caption{subgraph}
    \label{fig:graph_5_cycle}
  \end{subfigure}
  \begin{subfigure}[b]{0.25\textwidth}
    \centering
\begin{tikzpicture}[lgraph,undirected]
  \node[lnode] (u) {};
  \node[lnode] (v) at ([shift={(-120:1.7\lnodedistIG)}]u) {};
  \node[lnode] (w) at ([shift={(-60:1.7\lnodedistIG)}]u) {};
  \path (u) edge (v)
            edge (w)
        (v) edge (w);
\end{tikzpicture}
    \caption{$\K{3}$}
    \label{fig:graph_k3}
  \end{subfigure}
  \caption{A graph that contains the complete graph with three nodes
    ($\K{3}$) as a minor.}
\end{figure}

In the preceding example, we could obtain $\K{3}$ as a minor because
the considered graph contains a sufficiently large cycle. It is easy
to see that this is a necessary and sufficient condition.

\begin{remark} \label{rem:k3-cycle}
  A simple undirected graph contains $\K{3}$ as a minor \Iff it
  contains at least one cycle of three or more nodes.
\end{remark}

While the notion of contracting edges is intuitive, it is not directly
available in the logical and automata-theoretic formalisms that we
will consider. Instead, we will use the following characterization of
minor inclusion given by Courcelle and Engelfriet in \cite[Lemma
1.13]{CE12}.

\begin{lemma}[Minor Inclusion] \label{lem:minor-inclusion}
  Let $G$ and $H$ be two simple undirected graphs, such that $H$ is
  loop-free and $\VH=\{v_1,…,v_n\}$. Then $H$ is a minor of $G$ \Iff
  there exist pairwise disjoint nonempty sets of nodes
  $U_1,…,U_n⊆\VG$, such that 
  \begin{itemize}[topsep=1ex,itemsep=0ex]
  \item each induced subgraph $G[U_i]$ is connected, for $1≤i≤n$, and
  \item for every edge in $H$ between two nodes $v_i,v_j∈\VH$, there
    exists an edge in $G$ between two nodes $u,v∈\VG$ such that
    $u∈U_i$ and $v∈U_j$.
  \end{itemize}
\end{lemma}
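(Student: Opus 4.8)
The plan is to prove the two implications separately, using the \emph{branch sets} $U_1,\dots,U_n$ as the bridge between the combinatorial contraction process and the stated condition. The guiding intuition is that $U_i$ should be precisely the set of nodes of $G$ that get merged together to form the node $v_i$ of $H$; the whole proof amounts to making this correspondence precise in both directions.

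For the ``if'' direction, I would assume such sets $U_1,\dots,U_n$ are given and build an explicit minor. First I pass to the subgraph $G'$ of $G$ on the node set $\bigcup_i U_i$, keeping only (i) the edges of some spanning tree of each $G[U_i]$ (which exists because $G[U_i]$ is connected), and (ii) for each edge $v_iv_j$ of $H$, exactly one witnessing edge of $G$ between $U_i$ and $U_j$ (which exists by hypothesis, and joins distinct sets since $H$ is loop-free). I then contract all the chosen spanning-tree edges. As each such tree is acyclic and spans $U_i$, this merges every $U_i$ into a single node $w_i$, and since the $U_i$ are pairwise disjoint the $w_i$ are distinct. The only edges that survive are the witnessing cross-edges, so there is an edge between $w_i$ and $w_j$ \Iff $v_iv_j$ is an edge of $H$. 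Hence $w_i\mapsto v_i$ is an isomorphism onto $H$, and $H$ is a minor of $G$.

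For the ``only if'' direction, I would assume $H$ is obtained from some subgraph $G'$ of $G$ by a sequence of edge contractions, and recover the $U_i$ as the fibres of the resulting quotient map. The key observation is that contracting a sequence of edges of $G'$ coincides with forming the quotient of $G'$ by the set $F$ of witnessing edges of $G'$ that were contracted: the merged nodes are exactly the connected components of $\langle V_{G'},F\rangle$, two components are joined \Iff $G'$ has an edge between them, and no self-loops ever arise because contraction in a simple graph takes the union of the two neighborhoods. Letting $U_1,\dots,U_n$ be the components mapping to $v_1,\dots,v_n$, they are nonempty and pairwise disjoint; each is connected already in $G'[U_i]$ (the contracted edges inside $U_i$ span it), hence connected in the larger $G[U_i]$; and every edge $v_iv_j$ of $H$ witnesses an edge of $G'$, and therefore of $G$, between $U_i$ and $U_j$. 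This is exactly the stated condition.

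The main obstacle is the bookkeeping in this second direction: the contractions are applied \emph{sequentially}, each to an edge of the \emph{current} graph rather than of $G'$, so I must justify carefully that sequential contraction agrees with the single quotient of $G'$ by the component partition of $\langle V_{G'},F\rangle$. A short induction on the number of contractions, maintaining the invariant that the current graph is the quotient of $G'$ by the components of the edges contracted so far, settles this and simultaneously delivers the connectivity and cross-edge properties. Everything else — the spanning trees, the disjointness, and the final isomorphism — is routine.
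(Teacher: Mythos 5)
The paper does not prove this lemma at all --- it is quoted verbatim from Courcelle and Engelfriet \cite[Lemma 1.13]{CE12} and used as a black box --- so there is no in-paper argument to compare against. Your proof is the standard branch-set characterization and is correct: the ``if'' direction via spanning trees of the $G[U_i]$ plus one witnessing cross-edge per edge of $H$ is sound (discarding the non-tree edges inside each $U_i$ before contracting is exactly what prevents spurious self-loops and extra edges), and you correctly identify the only delicate point in the ``only if'' direction, namely that sequential contractions of edges of the \emph{current} graph must be reconciled with a single quotient of the subgraph $G'$ by the components of the contracted-edge set; the induction you sketch, maintaining the invariant that the current graph is that quotient, settles it and yields the connectivity of each $G'[U_i]$ (hence of $G[U_i]$) and the cross-edge condition simultaneously. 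Nothing is missing.
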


We now come back to planarity. An important theorem by Kuratowski
\cite{Kur30} characterizes the planar graphs in terms of two forbidden
graphs: the complete graph with five nodes $\K{5}$, and the complete
bipartite graph with two times three nodes $\K{3,3}$, both depicted in
\cref{fig:graph_k5_and_k3_3}. In Wagner's variant of the theorem
\cite{Wag37}, which we shall use, those forbidden graphs may not occur
as minors (for a proof, see, e.g., \cite[Thm~4.4.6]{Die10}).

\begin{theorem}[Kuratowski-Wagner Theorem] \label{thm:kuratowski-wagner}
  A simple undirected graph is planar \Iff it contains \emph{neither}
  $\K{5}$ \emph{nor} $\K{3,3}$ as a minor.
\end{theorem}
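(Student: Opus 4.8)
The plan is to prove the two directions separately, with almost all of the difficulty concentrated in the sufficiency direction.

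For \emph{necessity}, I would first show that $\K{5}$ and $\K{3,3}$ are themselves non-planar, and then observe that planarity is inherited by minors. The non-planarity follows from Euler's formula: for any connected plane graph with $n$ vertices, $m$ edges, and $f$ faces, one has $n - m + f = 2$. Since every face of a simple plane graph is bounded by at least three edges and every edge borders at most two faces, a double count gives $3f \le 2m$, hence $m \le 3n - 6$; for $\K{5}$ this reads $10 \le 9$, a contradiction. As $\K{3,3}$ is bipartite and thus triangle-free, every face is bounded by at least four edges, yielding the sharper bound $m \le 2n - 4$, i.e. $9 \le 8$, again a contradiction. To close the direction, I would verify that both operations defining a minor---deleting vertices or edges and contracting an edge---transform a plane drawing of $G$ into a plane drawing of the smaller graph. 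Hence a planar $G$ could have no non-planar minor, in particular neither $\K{5}$ nor $\K{3,3}$.

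For \emph{sufficiency} I would argue by induction on the number of edges, reducing first to the $3$-connected case. The reduction rests on the fact that both planarity and the property of excluding $\K{5}$ and $\K{3,3}$ as minors are preserved when a graph is glued from smaller pieces along a separator of at most two vertices (a clique-sum of order $\le 2$). So if $G$ is not $3$-connected, I would split it along such a small cut, add a virtual edge across the cut, apply the induction hypothesis to the strictly smaller parts, and then reassemble their planar embeddings into one of $G$. The base cases---graphs on at most four vertices---are planar by inspection.

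This leaves the heart of the matter: a $3$-connected graph $G$ excluding both $\K{5}$ and $\K{3,3}$ as minors. Here I would invoke the contraction lemma that every $3$-connected graph on at least five vertices has an edge whose contraction again yields a $3$-connected graph. Contracting such an edge produces a smaller $3$-connected graph that still excludes both forbidden minors, hence is planar by induction; being $3$-connected, it moreover has an essentially unique embedding (Whitney's theorem), whose face boundaries are exactly the non-separating induced cycles. I would then re-expand the contracted vertex and show that the neighbours of the two endpoints can be distributed around the shared face without crossings, an obstruction to doing so being forced to yield a $\K{5}$ or $\K{3,3}$ minor. This $3$-connected step is the main obstacle: the contraction lemma, the uniqueness of the $3$-connected embedding, and above all the case analysis converting an embedding obstruction into one of the two forbidden configurations are each delicate, which is why the full combinatorial argument is best deferred to the textbook treatment in \cite[Thm~4.4.6]{Die10}.
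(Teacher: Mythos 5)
The paper does not prove this theorem at all: it is imported as a classical result, attributed to Kuratowski and Wagner, with the proof explicitly delegated to \cite[Thm~4.4.6]{Die10}. So there is no in-paper argument to compare yours against. Your sketch is the standard textbook proof of Wagner's theorem and is sound in outline: the necessity direction (Euler's formula giving $m\le 3n-6$ and $m\le 2n-4$ for the triangle-free case, plus the observation that deletions and contractions preserve the existence of a plane drawing) is complete as stated, and the sufficiency direction correctly identifies the two standard reductions --- gluing along separators of order at most two with a virtual edge, and Tutte's contractible-edge lemma for the $3$-connected case. Two small cautions: when you add the virtual edge across a $2$-cut you must argue that the pieces still exclude both minors, which works because the discarded side supplies a path simulating that edge; and the re-expansion step in the $3$-connected case, where an obstruction to placing the neighbours of the split vertex is converted into a $\K{5}$ or $\K{3,3}$ minor, is the genuine core of the theorem and remains deferred in your write-up, just as it is in the paper. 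As a proof proposal it is therefore an accurate roadmap rather than a self-contained proof, which matches how the thesis itself treats the statement.
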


\begin{figure}[h]
  \alignpic
  \begin{subfigure}[b]{0.35\textwidth}
    \centering
\begin{tikzpicture}[lgraph,undirected]
  \coordinate (center);
  \node[lnode] (u) at ([shift={(90:\lnodedistIG)}]center) {};
  \node[lnode] (v) at ([shift={(18:\lnodedistIG)}]center) {};
  \node[lnode] (w) at ([shift={(306:\lnodedistIG)}]center) {};
  \node[lnode] (x) at ([shift={(234:\lnodedistIG)}]center) {};
  \node[lnode] (y) at ([shift={(162:\lnodedistIG)}]center) {};
  \path (u) edge (v)
            edge (w)
            edge (x)
            edge (y)
        (v) edge (w)
            edge (x)
            edge (y)
        (w) edge (x)
            edge (y)
        (x) edge (y);
\end{tikzpicture}
    \caption{$\K{5}$}
    \label{fig:graph_k5}
  \end{subfigure}
  \begin{subfigure}[b]{0.35\textwidth}
    \centering
\begin{tikzpicture}[lgraph,undirected]
  \node[lnode] (u) {};
  \node[lnode] (v) at ([shift={(0:\lnodedistIG)}]u) {};
  \node[lnode] (w) at ([shift={(0:\lnodedistIG)}]v) {};
  \node[lnode] (x) at ([shift={(-90:1.5\lnodedistIG)}]u) {};
  \node[lnode] (y) at ([shift={(0:\lnodedistIG)}]x) {};
  \node[lnode] (z) at ([shift={(0:\lnodedistIG)}]y) {};
  \path (u) edge (x)
            edge (y)
            edge (z)
        (v) edge (x)
            edge (y)
            edge (z)
        (w) edge (x)
            edge (y)
            edge (z);
\end{tikzpicture}
    \caption{$\K{3,3}$}
    \label{fig:graph_k3_3}
  \end{subfigure}
  \caption{The complete graph with five nodes $\K{5}$, and the
    complete bipartite graph with two times three nodes $\K{3,3}$.}
  \label{fig:graph_k5_and_k3_3}
\end{figure}
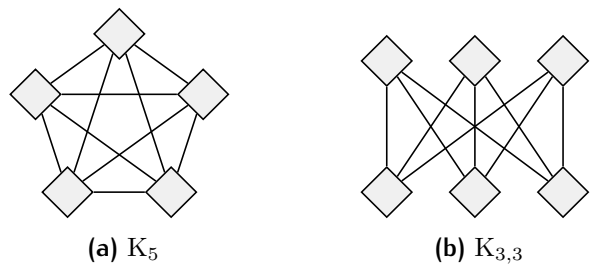

\chapter{Alternating Distributed Graph Automata} \label{chap:adga}
In this chapter, we introduce the main variant of distributed graph
automata investigated in this work, and examine some of their
properties. In particular, we establish a game-theoretic
characterization of their acceptance condition, and derive some
closure properties of their class of recognizable languages, both of
which will be useful for proving our main result in \cref{chap:msol}.

\section{Preview} \label{sec:adga-preview}
We start with an informal description of the automaton model. Formal
definitions follow in \cref{sec:adga-definitions}.

An alternating distributed graph automaton (ADGA) is an abstract
machine that, given a labeled graph as input, can either accept or
reject it, thereby specifying a graph language. Our model of
computation incorporates the following key concepts:
\begin{description}[style=nextline]
\item[Synchronous Distributed Algorithm.] An ADGA operates primarily
  as a distributed algorithm. Each node of the input graph is assigned
  its own local processor, which we shall not distinguish from the
  node itself. Communication takes place in synchronous rounds, in
  which each node receives the current states of its incoming
  neighbors.
\item[Finite-State Machines.] Each local processor is a finite-state
  machine, i.e., an abstract machine that can be in one of a finite
  number of states, and has no additional memory. Its initial state is
  determined by the node label. After each communication round, it
  updates its state according to a nondeterministic transition
  function that depends only on the current state and the states
  received from the incoming neighborhood.
\item[Constant Running Time.] The number of communication rounds is
  limited by a constant. To ensure this, we associate a number, called
  \emph{level}, with every state. In most cases, this number indicates
  the round in which the state may occur. We require that potentially
  initial states are at level $0$, and outgoing transitions from
  states at level $i$ go to states at level $i+1$. There is an
  exception, however: the states at the highest level, called the
  \emph{permanent states}, can also be initial states, and can have
  incoming transitions from any level. Moreover, all their outgoing
  transitions are self-loops. The idea is that, once a node has
  reached a permanent state, it terminates its local computation, and
  waits for the other nodes in the graph to terminate too.
\item[Aggregation of States.] In order to be finitely representable,
  an ADGA treats collections of states as sets, i.e., it abstracts
  away from the multiplicity of states. This aggregation of states
  into sets is applied in two ways:
  \begin{itemize}
  \item First, the information received by the nodes in each round is
    a family of sets of states, indexed by the edge alphabet of the
    graph. That is, for each edge relation, a node knows which states
    occur in its incoming neighborhood, but it cannot distinguish
    between neighbors that are in the same state.
  \item Second, once all the nodes have reached a permanent state, the
    ADGA ceases to operate as a distributed algorithm, and collects
    all the reached permanent states into a set $F$. This set is the
    sole acceptance criterion: if $F$ is part of the ADGA's accepting
    sets, then the input graph is accepted, otherwise it is rejected.
  \end{itemize}
\end{description}

\begin{figure}[h!]
  \alignpic
\begin{tikzpicture}[automaton, half row sep]
  \matrix[states] {
        & \node[existential] (q_p) {$q_\pik$}; &[6ex] \\
        &     & \node[permanent] (q_yes) {$\q{yes}$}; \\
    \node[initial,existential] (q_ini) {$\q{ini}$}; & \node[existential] (q_h) {$q_\herz$}; \\
        &     & \node[permanent] (q_no) {$\q{no}$}; \\
        & \node[existential] (q_k) {$q_\kreuz$}; \\
  };
  \path[use as bounding box]
        (q_ini)   edge (q_p)
                  edge (q_h)
                  edge (q_k)
        (q_p)     edge[bend left=25] node[above=-.3ex,xshift=.4ex] {$\xnni q_\pik$} (q_yes)
                  edge[bend left=15] node[above] {$\xni q_\pik$} (q_no)
        (q_h.10)  edge[bend right=5] node[above left=-.5ex,xshift=-1.5ex] {$\xnni q_\herz$} (q_yes)
        (q_h.350) edge[bend left=5] node[below left=-.5ex,xshift=-1.5ex] {$\xni q_\herz$} (q_no)
        (q_k)     edge[bend right=15] node[below=.3ex] {$\xnni q_\kreuz$} (q_yes)
                  edge[bend right=25] node[below,xshift=.4ex] {$\xni q_\kreuz$} (q_no);
  \matrix[accepting sets] {
       \\
    \x \\
       \\
       \\
       \\
  };
  \DrawColumnBackground{2}{4}{1}
\end{tikzpicture}
  \caption{$\dA[color]{3}$\!, an ADGA over
    $\bigl\langle\{\blank\},\{\blank\}\bigr\rangle$ whose graph
    language consists of the 3-colorable graphs.}
  \label{fig:ADGA_3_colorable}
\end{figure}

\begin{example*}[3-Colorability] \label{ex:ADGA_3_colorable}
  \Cref{fig:ADGA_3_colorable} shows a simple ADGA $\dA[color]{3}$\!,
  represented as a state diagram. The states are arranged in columns
  corresponding to their levels, ascending from left to right.
  $\dA[color]{3}$ expects a $\{\blank\}$-labeled $\{\blank\}$-graph as
  input, and accepts it \Iff it is 3-colorable. The automaton proceeds
  as follows: All nodes of the input graph are initialized to the
  state $\q{ini}$. In the first round, each node nondeterministically
  chooses to go to one of the states $q_\pik$, $q_\herz$ and
  $q_\kreuz$, which represent the three possible colors. Then, in the
  second round, the nodes verify locally that the chosen coloring is
  valid. If the set received from their incoming neighborhood (only
  one, since there is only a single edge relation) contains their own
  state, they go to $\q{no}$, otherwise to $\q{yes}$. The automaton
  then accepts the input graph \Iff all the nodes are in $\q{yes}$,
  i.e., $\{\q{yes}\}$ is its only accepting set. This is indicated by
  the blue bar to the right of the state diagram. We shall refer to
  such a representation of sets using bars as \emph{barcode}.
\end{example*}

One last key concept that enters into the definition of ADGAs is
\emph{alternation}, a generalization of nondeterminism introduced by
Chandra, Kozen and Stockmeyer in \cite{CKS81} (in their case, for
Turing machines and other types of word automata).
\begin{description}[style=nextline]
\item[Alternating Automaton.] In addition to being able to
  nondeterministically choose between different transitions, nodes can
  also explore several choices in parallel. To this end, the
  nonpermanent states of an ADGA are partitioned into two types,
  \emph{existential} and \emph{universal}, such that states on the
  same level are of the same type. If, in a given round, the nodes are
  in existential states, then they nondeterministically choose a
  single state to go to in the next round, as described above. In
  contrast, if they are in universal states, then the run of the ADGA
  is split into several parallel branches, called universal branches,
  one for each possible combination of choices of the nodes. This
  procedure of splitting is repeated recursively for each round in
  which the nodes are in universal states. The ADGA then accepts the
  input graph \Iff its acceptance condition is satisfied in every
  universal branch of the run.
\end{description}

\begin{figure}[h!]
  \alignpic
\begin{tikzpicture}[automaton, half row sep]
  \matrix[states] {
        & \node[universal] (q_p) {$q_\pik$}; &[6ex] \\
        &     & \node[permanent] (q_yes) {$\q{yes}$}; \\
    \node[initial,universal] (q_ini) {$\q{ini}$}; & \node[universal] (q_h) {$q_\herz$}; \\
        &     & \node[permanent] (q_no) {$\q{no}$}; \\
        & \node[universal] (q_k) {$q_\kreuz$}; \\
        & \\
  };
  \path[use as bounding box]
        (q_ini.20) edge (q_p)
        (q_ini)    edge (q_h)
                   edge (q_k)
        (q_p)      edge[bend left=25] node[above=-.3ex,xshift=.1ex] {$\xnni q_\pik$} (q_yes)
                   edge[bend left=15] node[above] {$\xni q_\pik$} (q_no)
        (q_h.10)   edge[bend right=5] node[above left=-.5ex,xshift=-1.5ex] {$\xnni q_\herz$} (q_yes)
        (q_h.350)  edge[bend left=5] node[below left=-.5ex,xshift=-1.3ex] {$\xni q_\herz$} (q_no)
        (q_k)      edge[bend right=15] node[below=.4ex,xshift=-.2ex] {$\xnni q_\kreuz$} (q_yes)
                   edge[bend right=25] node[below,xshift=.5ex] {$\xni q_\kreuz$} (q_no);
  \matrix[accepting sets] {
            \\
       & \x \\
            \\
    \x & \x \\
            \\
            \\
  };
  \DrawColumnBackground{2}{4}{2}
\end{tikzpicture}
\vspace{-2ex}
  \caption{$\dcA[color]{3}$\!, an ADGA over
    $\bigl\langle\{\blank\},\{\blank\}\bigr\rangle$ whose graph
    language consists of the graphs that are \emph{not} 3-colorable.}
  \label{fig:ADGA_not_3_colorable}
\end{figure}

\begin{example*}[Non-3-Colorability]
  To illustrate the notion of universal branching, consider the ADGA
  $\dcA[color]{3}$ shown in \cref{fig:ADGA_not_3_colorable}. It is a
  complement automaton of $\dA[color]{3}$ from
  Example~\ref{ex:ADGA_3_colorable}, i.e., it accepts precisely those
  $\{\blank\}$-labeled $\{\blank\}$-graphs that are \emph{not}
  3-colorable. States represented as red triangles are universal
  (whereas the green squares in \cref{fig:ADGA_3_colorable} stand for
  existential states). Given an input graph with $n$ nodes,
  $\dcA[color]{3}$ proceeds as follows: All nodes are initialized to
  $\q{ini}$. In the first round, the run is split into $3^n$ universal
  branches, each of which corresponds to one possible outcome of the
  first round of $\dA[color]{3}$ running on the same input
  graph. Then, in the second round, in each of the $3^n$ universal
  branches, the nodes check whether the coloring chosen in that branch
  is valid. As indicated by the barcode, the acceptance condition of
  $\dcA[color]{3}$ is satisfied \Iff at least one node is in state
  $\q{no}$, i.e., the accepting sets are $\{\q{no}\}$ and
  $\{\q{yes},\q{no}\}$. Hence, the automaton accepts the input graph
  \Iff no valid coloring was found in any universal branch. Note that
  we could also have chosen to make the states $q_\pik$, $q_\herz$ and
  $q_\kreuz$ existential, since their outgoing transitions are
  deterministic. Regardless of their type, there is no branching in
  the second round.
\end{example*}

\section{Formal Definitions} \label{sec:adga-definitions}
We now repeat and clarify the notions from \cref{sec:adga-preview} in
a more formal setting.
\begin{definition}[Alternating Distributed Graph Automaton] \label{def:adga}
  An \defd{alternating distributed graph automaton} (ADGA) $\A$ over
  $⟨Σ,Γ⟩$ is a tuple $⟨Σ,Γ,\Q,\ab σ,δ,\F⟩$, where
  \begin{itemize}
  \item $Σ$ is a finite nonempty alphabet of node labels,
  \item $Γ$ is a finite alphabet of edge labels,
  \item $\Q=⟨Q_\EE,Q_\AA,Q_\P⟩$, where $Q_\EE$, $Q_\AA$ and $Q_\P$ are
    pairwise disjoint finite sets of \defd{existential} states,
    \defd{universal} states and \defd{permanent} states, respectively,
    with $Q_\P≠∅$, and for notational convenience we use the
    abbreviations
    \begin{itemize}
    \item $\swl{Q}{Q_\N} \coloneqq Q_\EE∪Q_\AA∪Q_\P,$\, for the entire
      set of \defd{states},\, and
    \item $Q_\N \coloneqq \swl{Q_\EE∪Q_\AA,}{Q_\EE∪Q_\AA∪Q_\P,}$\, for
      the set of \defd{nonpermanent} states,
    \end{itemize}
  \item $σ\colon Σ→Q$ is an \defd{initialization function},
  \item $δ\colon Q×(2^Q)^Γ→2^Q$ is a (local) \defd{transition
      function} that allows to unambiguously associate a \defd{level}
    $\defd{\lA(q)}∈ℕ$ with every state $q∈Q$, such that
    \begin{itemize}
    \item transitions between nonpermanent states go only from one
      level to the next, where the lowest level consists of the
      nonpermanent states that have no incoming transitions, which are
      also the only nonpermanent states that can be assigned by the
      initialization function, i.e., for every $q∈Q_\N$,
      \vspace{-1ex}
      \begin{gather*}
        \lA(q)=
        \begin{cases}
          0 & \text{\parbox[t]{.55\textwidth}{if for all $p∈Q$ and
              $\S∈(2^Q)^Γ$\!,
              it holds that $q∉δ(p,\S)$,}} \\[2.5ex]
          i+1 & \text{\parbox[t]{.55\textwidth}{if there are $p∈Q_\N$
              and $\S∈(2^Q)^Γ$ with $\lA(p)=i$ and $q∈δ(p,\S)$,}}
        \end{cases} \\[.3ex]
        ∃a∈Σ\colon σ(a)=q \quad \text{implies} \quad \lA(q)=0,
      \end{gather*}
    \item the permanent states are one level higher than the highest
      nonpermanent ones, and have only self-loops as outgoing
      transitions, i.e., for every $q∈Q_\P$,
      \vspace{-1ex}
      \begin{gather*}
        \lA(q)=
        \begin{cases}
          0                          & \text{if\; $Q_\N=∅$}, \\
          \max\{\lA(q)\mid q∈Q_\N\}+1 & \text{otherwise},
        \end{cases} \\
        δ(q,\S)=\{q\} \quad \text{for every $\S∈(2^Q)^Γ$}\!,
      \end{gather*}
    \item states on the same level are in the same component of
      $\Q$, i.e., for every level $i∈ℕ$,\: $\{q∈Q \mid
      \lA(q)=i\}∈(2^{Q_\EE}∪2^{Q_\AA}∪2^{Q_\P})$,\, and
    \end{itemize}
  \item $\F⊆2^{Q_\P}$ is a set of \defd{accepting sets} of permanent
    states.
  \end{itemize}
\end{definition}

When specifying an ADGA formally, we only need to indicate the
outgoing transitions of the nonpermanent states, since permanent
states are, by definition, self-looping. In many cases, however, we
will opt for a more convenient specification through a state diagram,
as we already did in \cref{sec:adga-preview}. Let us clarify the
details of such a representation by means of a slightly more involved
example.

\begin{example}[ADGA Specification through a State Diagram] \label{ex:state_diagram}
  Consider the ADGA $\sA{centric}=⟨Σ,Γ,\Q,σ,δ,\F⟩$ shown in
  \cref{fig:ADGA_concentric_circles}. (For now, we are not interested
  in the graph language that it recognizes. This will be discussed
  later in \cref{ex:A_centric_language}.) As indicated by the caption,
  $Σ=\{\a,\b,\c\}$, and $Γ=\{\blank\}$. Existential states are
  represented by green squares, universal states by red triangles, and
  permanent states by blue circles. The short arrows mapping node
  labels to states indicate the initialization function. For instance,
  $σ(\a)=\qa$.

  As usual, the arrows between states specify the transition
  function. A label on such a transition arrow indicates a condition
  on the states in the incoming neighborhood of a node, which must be
  satisfied in order for the node to be allowed to take that
  transition. If there is no label, any states in the neighborhood
  are permitted. In this example (and any other example that we shall
  consider) the permitted input graphs have a single edge relation,
  which means that the occurrences of states the incoming
  neighborhood of a node are abstracted as a single set of states
  $S$. Now, the labels on the transition arrows are formulas that
  specify conditions on such a set $S$. In those formulas, $S$ remains
  anonymous, and binary relations like $∈$ and $=$ are written as if
  they were unary, but implicitly refer to $S$. For example,
  “$\xni\qb$” means “$\qb∈S$”, and “$\xeq\{\qbkr,\qbk\}$” means
  “$S=\{\qbkr,\qbk\}$”. Consequently, considering the three outgoing
  arrows of $\qb$ tells us, for instance, that
  $δ\bigl(\qb,⟨\{\qa,\qc\}⟩\bigl)=\{\qbkr,\qbk\}$ and
  $δ\bigl(\qb,⟨\{\qa,\qb\}⟩\bigl)=\{\q{no}\}$. Furthermore, we can
  build up more complex formulas using the usual boolean connectives
  $¬$, $∨$, $∧$, etc. Hence, “$\xnni \qc ∧ \xnni \qa$” characterizes
  all the sets of states that contain neither $\qc$ nor $\qa$. We
  shall refer to such formulas as \emph{set formulas}.

  Finally, as already mentioned in \cref{sec:adga-preview}, the
  \emph{barcode} on the far right specifies the accepting sets. The
  blue bars are aligned with the permanent states to which they
  correspond. Each column represents an accepting set, where a bar
  means that the corresponding permanent state is included in the
  set. Thus, $\F=\bigl\{\{\qap,\q{yes}\},\{\qah,\q{yes}\}\bigr\}$.
\end{example}

\begin{figure}
  \alignpic
\begin{tikzpicture}[automaton, half row sep]
  \matrix[states] {
        &[4ex]&[7ex] \node[permanent] (q_ap) {$\qap$}; \\
    \node[existential] (q_a) {$\qa$}; & \node[universal] (q_a2) {$\qaprime$};  \\
        &     & \node[permanent] (q_ah) {$\qah$}; \\
        & \node[universal] (q_bkr) {$\qbkr$}; \\
    \node[existential] (q_b) {$\qb$}; &     & \node[permanent] (q_yes) {$\q{yes}$}; \\
        & \node[universal] (q_bk) {$\qbk$}; \\
        \\
    \node[existential] (q_c) {$\qc$}; &     & \node[permanent] (q_no) {$\q{no}$}; \\
  };
  \matrix[symbols] { 
        \\
    \node (a) {$\a$}; \\
        \\
        \\
    \node (b) {$\b$}; \\
        \\
        \\
    \node (c) {$\c$}; \\
  };
  \matrix[accepting sets] {
    \x &    \\
       &    \\
       & \x \\
       &    \\
    \x & \x \\
       &    \\
       &    \\
       &    \\
  };
  \DrawColumnBackground{1}{8}{2}
  \path (a)        edge (q_a)
        (b)        edge (q_b)
        (c)        edge (q_c)
        (q_a)      edge (q_a2)
        (q_b.25)   edge node[above] {$\xnni \qb$} (q_bkr)
        (q_b.0)    edge node[above,xshift=.2ex] {$\xnni \qb$} (q_bk)
        (q_b.330)  edge[bend right=20] node[above=4.2ex,xshift=-9.2ex] {$\xni \qb$} (q_no) 
        (q_c.0)    edge[bend right=25] node[above=-.5ex,xshift=-12ex] {$\xnni \qc ∧ \xnni \qa$} (q_yes)
        (q_c.330)  edge[bend right=22] node[above=.2ex] {$\xni \qc ∨ \xni \qa$} (q_no)
        (q_a2.22)  edge[bend left=8] node[above=.3ex,xshift=.1ex] {$\xeq \{\qbkr,\qbk\}$} (q_ap)
        (q_a2)     edge node[above=.1ex,xshift=.4ex] {$\xeq \{\qbkr,\qbk\}$} (q_ah)
        (q_a2.320) edge[bend right=9] node[above=4.5ex,xshift=1.3ex] {$\xneq \{\qbkr,\qbk\}$} (q_no)
        (q_bkr)    edge (q_yes)
        (q_bk)     edge (q_yes);
\end{tikzpicture}
  \caption{$\sA{centric}$, an ADGA over
    $\bigl\langle\{\a,\b,\c\},\{\blank\}\bigr\rangle$ whose graph
    language consists of the labeled graphs that satisfy the following
    conditions: the labeling constitutes a valid 3-coloring, there is
    precisely one $\a$-labeled node $v_\a$, the undirected
    neighborhood of $v_\a$ contains only $\b$-labeled nodes, and
    $v_\a$ has at least two incoming neighbors.}
  \label{fig:ADGA_concentric_circles}
\end{figure}
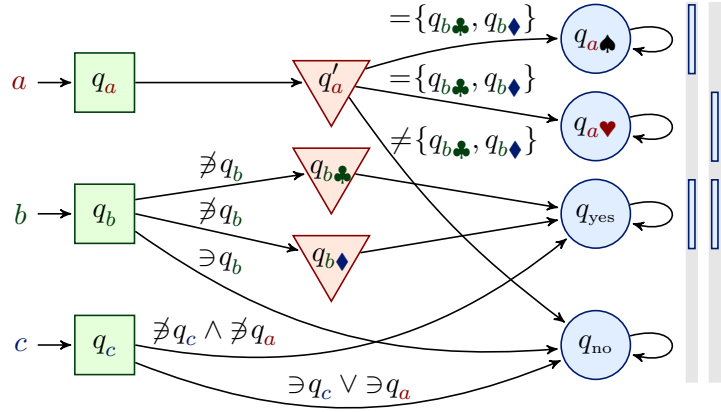

For any ADGA $\A=⟨Σ,Γ,\ab\Q,\ab σ,\ab δ,\F⟩$, we define its
\defd{size} $\siz(\A)$ to be its number of states, i.e.,
$\defd{\siz(\A)}\coloneqq\card{Q}$, and its \defd{length} $\len(\A)$
to be its highest level, i.e., $\defd{\len(\A)} \coloneqq
\max\{\lA(q)\mid q∈Q\}$. For example, we get $\siz(\sA{centric})=10$
and $\len(\sA{centric})=2$, for the automaton from
\cref{fig:ADGA_concentric_circles}.

As a further abbreviation, we shall use $\defd{\lev_i(\A)} \coloneqq
\{q∈Q \mid \lA(q)=i\}$, for the set of states at level $i$, with
$0≤i≤\len(\A)$. For instance,
$\lev_1(\sA{centric})=\{\qaprime,\qbkr,\qbk\}$. We say that level $i$
of $\A$ is existential if \,$\lev_i(\A)⊆Q_\EE$, and analogously for
universal and permanent levels. For $\sA{centric}$, level $0$ is
existential, level $1$ is universal, and level $2$ is permanent.

Next, we want to give a formal definition of a run. For this, we need
the notion of a configuration, which can be seen as the global state
of an ADGA.

\begin{definition}[Configuration]
  Consider an ADGA $\A=⟨Σ,Γ,\ab\Q,\ab σ,\ab δ,\F⟩$. We call any
  $Q$-labeled $Γ$-graph $G_κ∈Q^{\clouded{Γ}}$ a \defd{configuration}
  of $\A$ on $G$. If every node is labeled by a permanent state, i.e.,
  if $G_κ∈(Q_\P)^{\clouded{Γ}}$, we call $G_κ$ a \defd{permanent}
  configuration. Otherwise, if $G_κ$ is a nonpermanent configuration
  whose nodes are labeled exclusively by existential and permanent
  states, i.e., if $G_κ∈⟨Q_\EE,Q_\P⟩^{\clouded{Γ}}$, we say that $G_κ$
  is an \defd{existential} configuration. Analogously, if
  $G_κ∈⟨Q_\AA,Q_\P⟩^{\clouded{Γ}}$, the configuration is called
  \defd{universal}.

  Additionally, we say that a permanent configuration $G_κ$ is
  \defd{accepting} if the set of states occurring in it is accepting,
  i.e., if $\{κ(v) \mid v∈\VG\}∈\F$. Any other permanent configuration
  is called \defd{rejecting}. Nonpermanent configurations are neither
  accepting nor rejecting.
\end{definition}

The (local) transition function of an ADGA specifies for each state a
set of potential successors, for a given family of sets of
states. This can be naturally extended to configurations, which leads
us to the definition of a global transition function.

\begin{definition}[Global Transition Function]
  The \defd{global transition function} $δ^\cloud$ of an ADGA
  $\A=⟨Σ,Γ,\ab\Q,\ab σ,\ab δ,\F⟩$ assigns to each configuration $G_κ$
  of $\A$ the set of all of its \defd{successor configurations} $G_μ$,
  by combining all possible outcomes of local transitions on $G_κ$,
  i.e.,
  \begin{align*}
    δ^\cloud \colon Q^{\clouded{Γ}} &→ 2^{(Q^{\clouded{Γ}})} \\
    G_κ &↦ \biggl\{G_μ \biggm|
    \bigwedge_{v∈\VG} μ(v)∈ δ\Bigl(κ(v),\,\bigl\langle\{κ(u)\mid u\arrG{γ} v\}\bigr\rangle_{γ∈Γ}\Bigr)\biggr\}.
  \end{align*}
\end{definition}

A configuration $G_μ$ that can be obtained from $G_κ$ by iteratively
choosing some successor configuration shall be referred to as a
\defd{descendant configuration} of $G_κ$.

We now have everything at hand to formalize the notion of a run. As
mentioned in \cref{sec:adga-preview}, a run can be split into several
parallel branches whenever the nodes of the input graph are in
universal states. It thus may seem natural to define a run as a tree
whose nodes are labeled by configurations of the automaton. We could
then interpret the branches of such a tree as “non-communicating
parallel universes”. However, since an ADGA has no “memory of the
past” other than its current configuration, there is no need to keep
apart branches that are in the same configuration in a given round. By
merging such branches, we obtain a directed acyclic graph in which
every node is labeled by a \emph{unique} configuration (since a
configuration cannot occur in more than one round). This has the
advantage that we can identify the nodes of a run with the
configurations of an automaton, which will make it easier to refer to
particular nodes and paths of a run in subsequent proofs.

\begin{definition}[Run]
  A \defd{run} of an ADGA $\A=⟨Σ,Γ,\Q,σ,δ,\F⟩$ on a labeled graph
  $G_λ∈Σ^{\clouded{Γ}}$ is a directed acyclic graph $R=⟨K,\arr⟩$ whose
  nodes are configurations of $\A$ on $G$, i.e., $K⊆Q^G$, such that
  \begin{itemize}
  \item the \defd{initial configuration} $G_{σ∘λ}∈K$ is the only
    source,\tablefootnote{Here, the operator $∘$ denotes function
      composition, such that $(σ∘λ)(v)=σ(λ(v))$.}
  \item every existential configuration $G_κ∈(K∩⟨Q_\EE,Q_\P⟩^G)$ has
    exactly one outgoing neighbor $G_μ∈δ^\cloud(G_κ)$,
  \item every universal configuration $G_κ∈(K∩⟨Q_\AA,Q_\P⟩^G)$ with
    $δ^\cloud(G_κ)=\{G_{μ_1},…,G_{μ_m}\}$ has exactly $m$ outgoing
    neighbors $G_{μ_1},…,G_{μ_m}$,\, and
  \item every permanent configuration $G_κ∈(K∩Q_\P^G)$ is a sink.
  \end{itemize}
  Such a run $R$ is \defd{accepting} if every occurring permanent
  configuration is accepting, i.e., if $K$ contains a node
  $G_κ∈Q_\P^G$, then $\{κ(v)\mid v∈\VG\}∈\F$. Otherwise, $R$ is called
  \defd{rejecting}.
\end{definition}

\begin{example}[Two Runs of $\sA{centric}$] \label{ex:A_centric_runs}
  We take up \cref{ex:state_diagram}, and consider again the ADGA
  $\sA{centric}$ from \mbox{\cref{fig:ADGA_concentric_circles}}.

  The graph in \cref{fig:run_accepting} is a run of $\sA{centric}$ on
  the labeled graph $(G_1)_{λ_1}$ shown in
  \cref{fig:graph_labeled_pentagon}. (The figures are also depicted
  together on the \hyperlink{page.a}{cover} of this thesis.) We have
  adopted the same coloring scheme as for (automaton) states, i.e., a
  green configuration is existential, a red one is universal, and a
  blue one is permanent. In the first round, the three nodes that are
  in state $\qb$ have a nondeterministic choice between $\qbkr$ and
  $\qbk$. Hence, the second configuration is one of eight possible
  choices. The branching in the second round is due to the node in
  state $\qaprime$ which goes simultaneously to $\qap$ and $\qah$. In
  both branches, an accepting configuration is reached, since
  $\{\qap,\q{yes}\}$ and $\{\qah,\q{yes}\}$ are both accepting
  sets. This is also visually indicated by the double circles around
  the configurations. We conclude that the run is accepting.

  As an example of a rejecting run, consider \cref{fig:run_rejecting}
  which shows a run of the same automaton $\sA{centric}$ on the
  labeled graph $(G_2)_{λ_2}$ from
  \cref{fig:graph_labeled_square}. Again, the configuration chosen
  during the first round is one of several (four) possibilities. In
  the second round, the run is split into four universal branches,
  corresponding to the four possible combinations of choices of the
  two nodes that are in state $\qaprime$. The permanent configurations
  reached in the two middle branches are rejecting because
  $\{\qap,\qah,\q{yes}\}$ is not an accepting set of $\sA{centric}$.
  The occurrence of these rejecting configurations causes the entire
  run to be rejecting.
\end{example}

\begin{figure}[h!]
  \alignpic
  \begin{subfigure}[b]{0.35\textwidth}
    \centering
%
\pentagraphPic{input graph}{\lnodedistIG}{$\a$}{$\b$}{$\c$}{$\b$}{$\b$}{$\c$}
    \caption{$(G_1)_{λ_1}$}
    \label{fig:graph_labeled_pentagon}
  \end{subfigure}
  \begin{subfigure}[b]{0.3\textwidth}
    \centering
%
\squaregraphPic{input graph}{\lnodedistIG}{$\b$}{$\a$}{$\b$}{$\a$}
    \vspace{2.1ex}
    \caption{$(G_2)_{λ_2}$}
    \label{fig:graph_labeled_square}
  \end{subfigure}
  \caption{Two $\{\a,\b,\c\}$-labeled $\{\blank\}$-graphs.}
\end{figure}

\begin{figure}[p]
  \alignpic
\begin{tikzpicture}[run or game]
  \matrix {
      & & \node[config,pacc] (c3a)
           {\pentagraphPic{configuration}{\lnodedistC}{$\qap$}{$\q{yes}$}{$\q{yes}$}{$\q{yes}$}{$\q{yes}$}{$\q{yes}$}}; \\
    \node[config,exis] (c1)
     {\pentagraphPic{configuration}{\lnodedistC}{$\qa$}{$\qb$}{$\qc$}{$\qb$}{$\qb$}{$\qc$}}; 
      & \node[config,univ] (c2)
         {\pentagraphPic{configuration}{\lnodedistC}{$\qaprime$}{$\qbkr$}{$\q{yes}$}{$\qbkr$}{$\qbk$}{$\q{yes}$}}; \\
      & & \node[config,pacc] (c3b)
           {\pentagraphPic{configuration}{\lnodedistC}{$\qah$}{$\q{yes}$}{$\q{yes}$}{$\q{yes}$}{$\q{yes}$}{$\q{yes}$}}; \\
  };
  \path (c1) edge (c2)
        (c2) edge (c3a)
             edge (c3b);
\end{tikzpicture}
  \caption{An accepting run of the ADGA $\sA{centric}$ from
    \cref{fig:ADGA_concentric_circles} on the labeled graph
    $(G_1)_{λ_1}$ from \cref{fig:graph_labeled_pentagon}.}
  \label{fig:run_accepting}
\end{figure}

\begin{figure}[p]
  \alignpic
\begin{tikzpicture}[run or game, small configs]
  \matrix {
      & & \node[config,pacc] (c3a) {\squaregraphPic{configuration}{\lnodedistC}{$\q{yes}$}{$\qap$}{$\q{yes}$}{$\qap$}}; \\ \\
      & & \node[config,prej] (c3b) {\squaregraphPic{configuration}{\lnodedistC}{$\q{yes}$}{$\qap$}{$\q{yes}$}{$\qah$}}; \\
    \node[config,exis] (c1) {\squaregraphPic{configuration}{\lnodedistC}{$\qb$}{$\qa$}{$\qb$}{$\qa$}}; 
      & \node[config,univ] (c2) {\squaregraphPic{configuration}{\lnodedistC}{$\qbkr$}{$\qaprime$}{$\qbk$}{$\qaprime$}}; \\
      & & \node[config,prej] (c3c) {\squaregraphPic{configuration}{\lnodedistC}{$\q{yes}$}{$\qah$}{$\q{yes}$}{$\qap$}}; \\ \\
      & & \node[config,pacc] (c3d) {\squaregraphPic{configuration}{\lnodedistC}{$\q{yes}$}{$\qah$}{$\q{yes}$}{$\qah$}}; \\
  };
  \path (c1) edge (c2)
        (c2) edge (c3a)
             edge (c3b)
             edge (c3c)
             edge (c3d);
\end{tikzpicture}
  \caption{A rejecting run of the ADGA $\sA{centric}$ from
    \cref{fig:ADGA_concentric_circles} on the labeled graph
    $(G_2)_{λ_2}$ from \cref{fig:graph_labeled_square}.}
  \label{fig:run_rejecting}
\end{figure}

For any ADGA $\A=⟨Σ,Γ,\Q,σ,δ,\F⟩$ and labeled graph
$G_λ∈Σ^{\clouded{Γ}}$, a configuration $G_κ∈Q^{\clouded{Γ}}$ is called
\defd{reachable} by $\A$ on $G_λ$ if either $G_κ=G_{σ∘λ}$ or there is
a configuration $G_μ∈Q^{\clouded{Γ}}$ reachable by $\A$ on $G_λ$ such
that $G_κ∈δ^\cloud(G_μ)$. If $G_λ$ is irrelevant, we simply say that
$G_κ$ is reachable by $\A$. Note that only existential, universal and
permanent configurations can satisfy this property, i.e., “mixed”
configurations with both existential and universal states are never
reachable. Furthermore, any reachable existential configuration
$G_κ∈⟨Q_\EE,Q_\P⟩^{\clouded{Γ}}$ has existential states of uniform
level, i.e., there is a level $i$, such that $\lA(κ(v))=i$ for all
nodes $v∈\VG$ with $κ(v)∈Q_\EE$. The analogous observation holds if
$G_κ$ is a reachable universal configuration.

In the following definition, we transfer the usual terminology of
automata theory to ADGAs.

\begin{definition}[ADGA-Recognizability]
  Let $\A=⟨Σ,Γ,\Q,σ,δ,\F⟩$ be an ADGA. A labeled graph
  $G_λ∈Σ^{\clouded{Γ}}$ is \defd{accepted} by $\A$ \Iff there exists
  an accepting run $R$ of $\A$ on $G_λ$. The graph language
  \defd{recognized} by $\A$ is the set
  \begin{equation*}
    \defd{\L(\A)} \coloneqq \bigl\{ G_λ∈Σ^{\clouded{Γ}} \bigm| \text{$\A$ accepts $G_λ$} \bigr\}.
  \end{equation*}
  Every graph language that is recognized by some ADGA is called
  \defd{ADGA-recognizable}. We denote by \defd{$\LL_\ADGA$} the class
  of all such graph languages.
\end{definition}

If two ADGAs recognize the same graph language, we say that they are
\defd{equivalent}.

\begin{example}[Language Recognized by $\sA{centric}$] \label{ex:A_centric_language}
  We get back to the example automaton $\sA{centric}$ from
  \cref{fig:ADGA_concentric_circles}, this time turning our attention
  to the graph language that it recognizes.

  In the first round, the $\a$-labeled nodes do nothing but update
  their state, while the $\b$- and $\c$-labeled nodes verify that the
  graph coloring is valid from their point of view. The $\c$-labeled
  nodes additionally check that they do not see any $\a$'s, and then
  directly terminate. Meanwhile, the $\b$-labeled nodes
  nondeterministically choose one of the markers $\kreuz$ and
  $\karo$. In the second round, only the $\a$-labeled nodes are
  busy. They verify that their incoming neighborhood consists
  exclusively of $\b$-labeled nodes, and that both of the markers
  $\kreuz$ and $\karo$ occur, thus ensuring that they have at least
  two incoming neighbors. Then they simultaneously pick the markers
  $\pik$ and $\herz$, thereby creating different universal branches,
  and the run of the automaton terminates. Finally, the ADGA checks
  that all the nodes approve of the graph (meaning that none of them
  has reached state $\q{no}$), and that in each universal branch,
  precisely one of the markers $\pik$ and $\herz$ occurs, which
  implies that there is a unique $\a$-labeled node.

  To sum up, the graph language $\L(\sA{centric})$ consists of all the
  $\{\a,\b,\c\}$-labeled $\{\blank\}$-graphs such that
  \begin{itemize}
    \item the labeling constitutes a valid 3-coloring,
    \item there is precisely one $\a$-labeled node $v_\a$, and
    \item $v_\a$ has only $\b$-labeled nodes in its undirected
      neighborhood, and at least two incoming neighbors.
  \end{itemize}
  The name “$\sA{centric}$” refers to the fact that, in the (weakly)
  connected component of $v_\a$, the $\b$- and $\c$-labeled nodes form
  con\emph{centric} circles around $v_\a$, i.e., nodes at distance~1
  of $v_\a$ are labeled with $\b$, nodes at distance~2 (if existent)
  with $\c$, nodes at distance~3 (if existent) with $\b$, and so
  forth.

  An example of a labeled graph that lies in $\L(\sA{centric})$ is the
  graph $(G_1)_{λ_1}$ from \cref{fig:graph_labeled_pentagon}, for
  which we have seen an accepting run in \cref{fig:run_accepting}. On
  the other hand, the labeled graph $(G_2)_{λ_2}$ from
  \cref{fig:graph_labeled_square} is not an element of
  $\L(\sA{centric})$, since it contains two $\a$-labeled nodes.
  Either this fact is detected through the universal branching in the
  second round (as in the run in \cref{fig:run_rejecting}), or the two
  $\b$-labeled nodes fail to choose two different markers in the first
  round, leading to refusal by the $\a$-labeled nodes. In any case,
  the resulting run is rejecting.
\end{example}

\section{Further Examples}
The automaton $\sA{centric}$ that accompanied us through
\cref{sec:adga-definitions} has been useful for illustrating several
features of ADGAs in a reasonably small example, but its recognized
graph language might seem a bit artificial. In this section, we focus
on more natural graph properties like being connected, containing a
cycle, or being planar, in order to give the reader a better idea of
what is possible with ADGAs, and how it can be accomplished.

\subsection*{Connected Graphs}
The fact that ADGAs aggregate the states reached in the last round of
a run gives them the ability to recognize a particular graph property
that a purely distributed algorithm, by its very nature, could not
perceive: the property of being connected. For any node alphabet $Σ$
and edge alphabet $Γ$, we can construct an ADGA $\A_\conn(Σ,Γ)$ that
recognizes the language of all (weakly) connected $Σ$-labeled
$Γ$-graphs. The following example shows this for $Σ=Γ=\{\blank\}$, but
other cases are completely analogous.

\begin{example}[Recognizing Connected Graphs] \label{ex:ADGA_weakly_connected}
  The ADGA $\A_\conn\bigl(\{\blank\},\{\blank\}\bigr)$ is specified in
  \cref{fig:ADGA_weakly_connected}. It proceeds as follows: In the
  first round, the nodes simultaneously pick the markers $\pik$ and
  $\herz$, creating a universal branch for each combination of
  choices. Then, in the second round, they check in each branch
  whether all their incoming neighbors have chosen the same marker as
  themselves. If this is the case, they simply retain their
  marker. Otherwise, they signalize a discordance in the affected
  branch by switching to $\q{acc}$. Afterwards, the automaton accepts
  the input graph if and only if, in each universal branch, either all
  the nodes have chosen the same marker, or a discordance has been
  signalized.

  If the input graph is (weakly) connected, and not all nodes have
  chosen the same marker, then there are always two adjacent nodes with
  different markers, and at least one of them will signalize a
  discordance. However, if the graph is not connected, then nodes in
  different connected components can choose different markers, without
  any of them detecting it, and hence some branches of the run will
  reach rejecting configurations (corresponding to
  $\{\qpikprime,\qherzprime\}$).
\end{example}

\begin{figure}
  \alignpic
\begin{tikzpicture}[automaton]
  \matrix[states] {
                                                  &  \node[existential] (q_p1) {$q_\pik$};  & \node[permanent] (q_p2) {$\qpikprime$}; \\
    \node[initial,universal] (q_ini) {$\q{ini}$}; &                                          & \node[permanent] (q_acc) {$\qacc$}; \\
                                                  &  \node[existential] (q_h1) {$q_\herz$}; & \node[permanent] (q_h2) {$\qherzprime$}; \\
  };
  \matrix[accepting sets] {
    \x &    &    & \x &    & \x \\
       & \x &    & \x & \x & \x \\
       &    & \x &    & \x & \x \\
  };
  \DrawColumnBackground{1}{3}{6}
  \path (q_ini.15) edge (q_p1)
        (q_ini)    edge (q_h1)
        (q_p1)     edge node[above] {$\xnni q_\herz$} (q_p2)
                   edge node[below left=-.7ex] {$\xni q_\herz$} (q_acc)
        (q_h1)     edge node[above left=-.7ex] {$\xni q_\pik$} (q_acc)
                   edge node[below] {$\xnni q_\pik$} (q_h2);
\end{tikzpicture}
  \caption{$\A_\conn\bigl(\{\blank\},\{\blank\}\bigr)$, an ADGA over
    $\bigl\langle\{\blank\},\{\blank\}\bigr\rangle$ whose graph
    language consists of the (weakly) connected graphs.}
  \label{fig:ADGA_weakly_connected}
\end{figure}

In some cases, it is convenient to restrict the allowed input graphs
of an automaton to (weakly) connected graphs. To this end, we define
the \defd{connected graph language} of an ADGA $\A=⟨Σ,Γ,\Q,σ,δ,\F⟩$ as
\begin{equation*}
  \defd{\L_\conn(\A)}\coloneqq\L(\A)∩\L\bigl(\A_\conn(Σ,Γ)\bigr).
\end{equation*}
As we will see in \cref{sec:closure-properties}, ADGA-recognizable
graph languages are effectively closed under intersection, thus we can
always replace $\A$ by an intersection automaton of $\A$ and
$\A_\conn(Σ,Γ)$. Therefore, the definition above is only a
convenience.

\subsection*{Directed Trees}
One situation in which it is helpful to require connected input graphs
is when recognizing the language of directed trees. By a
\defd{directed tree} we mean a connected simple graph, such that there
is a unique source, called the \emph{root}, and any other node has
exactly one incoming neighbor, called its \emph{parent}.

\begin{example}[Recognizing Directed Trees]
  The connected graph language of the ADGA $\sA{tree}$ specified in
  \cref{fig:ADGA_directed_tree} is precisely the set of all directed
  trees. The automaton follows quite naturally from the above
  definition of a directed tree. First, each node checks whether it
  has any incoming neighbors. If not, then it expects to be the
  unique root (indicated by state $q^\notlarr$), otherwise it must
  verify that it has only one parent (indicated by state $q^←$). In
  any case, in the second round, each node simultaneously picks the
  markers $\pik$ and $\herz$, which causes a universal branching of
  the run. Any potential root node terminates at that point. The other
  nodes continue for another round, and verify in each branch that
  they see precisely one state in their incoming neighborhood (the
  symbol “$\#$” in the state diagram refers to the cardinality of the
  received set). If a node has more than one incoming neighbor, it
  will see several states in some of the branches, and signalize this
  error by going to $\q{no}$. Otherwise it goes to $\q{yes}$. Finally,
  the ADGA checks that the last configuration in every branch is
  error-free and contains only one of the markers $\pik$ and $\herz$,
  thereby ensuring that there is precisely one node that claims to be
  the root.

  Note that the entire approach relies heavily on the requirement that
  the input graph be connected: without this condition, there could be
  a directed cycle disconnected from the root.
\end{example}

\begin{figure}
  \alignpic
\begin{tikzpicture}[automaton, half row sep]
  \matrix[states] {
        &     &     &[4ex] \node[permanent] (q_np) {$q^\notlarr_\pik$}; \\
        & \node[universal] (q_n) {$q^\notlarr\!\!$}; \\
        &     &     & \node[permanent] (q_nh) {$q^\notlarr_\herz$}; \\
    \node[initial,existential] (q_ini) {$\q{ini}$}; \\
        &     & \node[existential] (q_rp) {$q^←_\pik$}; & \node[permanent] (q_yes) {$\q{yes}$}; \\
        & \node[universal] (q_r) {$q^←\!\!$}; \\
        &     & \node[existential] (q_rh) {$q^←_\herz$}; & \node[permanent] (q_no) {$\q{no}$}; \\
  };
  \matrix[accepting sets] {
    \x &    \\
       &    \\
       & \x \\
       &    \\
    \x & \x \\
       &    \\
       &    \\
  };
  \DrawColumnBackground{1}{7}{2}
  \path (q_ini)    edge node[above left,xshift=-.7ex,yshift=-.2ex] {$\xeq ∅$} (q_n)
                   edge node[below left] {$\xneq ∅$} (q_r)
        (q_n.15)   edge (q_np)
        (q_n.345)  edge (q_nh)
        (q_r.15)   edge (q_rp)
        (q_r)      edge (q_rh)
        (q_rp.25)  edge[bend left=10] node[above] {$\#=1$} (q_yes)
        (q_rp)     edge[bend left=25] node[below left,yshift=.5ex] {$\#>1$} (q_no)
        (q_rh)     edge[bend right=25] node[above left,yshift=-.7ex] {$\#=1$} (q_yes)
        (q_rh.335) edge[bend right=10] node[below] {$\#>1$} (q_no);
\end{tikzpicture}
  \caption{$\sA{tree}$, an ADGA over
    $\bigl\langle\{\blank\},\{\blank\}\bigr\rangle$ whose
    \emph{connected} graph language $\L_\conn(\sA{tree})$ consists of
    the rooted directed trees with all edges directed away from the
    root.}
  \label{fig:ADGA_directed_tree}
\end{figure}

\subsection*{Undirected Graphs}
Up to now, our examples have only involved graphs with directed edges.
But, as mentioned in \cref{sec:graph-properties}, we can represent
undirected graphs as directed graphs with bidirectional edges. The
next example shows an ADGA that checks whether a $\{\blank\}$-labeled
$\{\blank\}$-graph is undirected. Again, the principle can be
generalized to any node alphabet $Σ$ and edge alphabet $Γ$, thus
giving us an entire family of ADGAs. We denote by $\A_\undir(Σ,Γ)$ the
automaton that recognizes the language of all undirected $Σ$-labeled
$Γ$-graphs.

\newpage

\begin{example}[Recognizing Undirected Graphs]
  The ADGA $\A_\undir\bigl(\{\blank\},\{\blank\}\bigr)$ is specified
  in \cref{fig:ADGA_undirected}. It uses a universal branching in the
  first round, where each node can either send a message to all of its
  outgoing neighbors (state $\qs$) or remain silent (state
  $\qns$). In the second round, the silent nodes check whether or not
  they have received any message, which they indicate by going to
  $\qr$ or $\qnr$, respectively. In the last round, the nodes that
  have sent a message verify that none of their incoming neighbors
  report not to have received any message. The automaton then accepts
  the input graph \Iff every test turns out positive in each universal
  branch.
\end{example}

Analogously to the connected graph language, we define the
\defd{undirected graph language} of an ADGA $\A=⟨Σ,Γ,\Q,σ,δ,\F⟩$ as
\begin{equation*}
  \defd{\L_\undir(\A)}\coloneqq\L(\A)∩\L\bigl(\A_\undir(Σ,Γ)\bigr).
\end{equation*}

\begin{figure}
  \alignpic
\begin{tikzpicture}[automaton, half row sep]
  \matrix[states] {
        & \node[existential] (q_s) {$\qs$}; & \node[existential] (q_spr) {$\qspr$}; & \node[permanent] (q_no) {$\qno$}; \\
        \\
    \node[initial,universal] (q_ini) {$\q{ini}$}; \\
        &     & \node[existential] (q_r) {$\qr$}; & \node[permanent] (q_yes) {$\qyes$}; \\
        & \node[existential] (q_ns) {$\qns$}; & &  \\
        &     & \node[existential] (q_nr) {$\qnr$}; \\
  };
  \matrix[accepting sets] {
       \\
       \\
       \\
    \x \\
       \\
       \\
  };
  \DrawColumnBackground{1}{4}{1}
  \path (q_ini.15)  edge (q_s)
        (q_ini)     edge (q_ns)
        (q_s)       edge (q_spr)
        (q_ns)      edge node[above=.5ex] {$\xni\qs$} (q_r)
                    edge node[below] {$\xnni\qs$} (q_nr)
        (q_spr)     edge node[above] {$\xni\qnr$} (q_no)
        (q_spr.335) edge node[above right=-.6ex] {$\xnni\qnr$} (q_yes)
        (q_r)       edge (q_yes)
        (q_nr)      edge (q_yes);
\end{tikzpicture}
  \caption{$\A_\undir\bigl(\{\blank\},\{\blank\}\bigr)$, an ADGA over
    $\bigl\langle\{\blank\},\{\blank\}\bigr\rangle$ whose graph
    language consists of the undirected graphs.}
  \label{fig:ADGA_undirected}
\end{figure}

\subsection*{Graph Minors}
To finish our series of examples, we show how ADGAs can check their
input graphs for particular minors, and then make use of this to
recognize the graphs that contain a cycle and the planar graphs. The
approach is heavily inspired by the book \cite{CE12}, where it is
shown how planarity can be expressed in monadic second-order logic. In
this context, we only consider unlabeled, simple, undirected graphs.

For any given loop-free graph $H$, we can construct an ADGA
$\dA[minor]{H}$, such that, for every graph $G$, the graph language
$\L_\undir(\dA[minor]{H})$ contains $G$ \Iff $H$ is a minor of
$G$. Our construction follows from the characterization of minor
inclusion given in \cref{lem:minor-inclusion}. The automaton proceeds
as follows: First, it nondeterministically partitions some subset of
$\VG$ into sets $U_1,…,U_n$, corresponding to the $n$ nodes of
$H$. Then, it checks that each induced subgraph $G[U_i]$ is connected,
and that for each edge in $H$ between two nodes $v_i$ and $v_j$, there
is an edge in $G$ connecting the corresponding subgraphs $G[U_i]$ and
$G[U_j]$. In order to verify that a subgraph is connected, we use a
slightly adapted version of the automaton
$\A_\conn\bigl(\{\blank\},\{\blank\}\bigr)$ from
\cref{fig:ADGA_weakly_connected} as a building block.

The following example shows the construction for the complete graph
$\K{3}$, which by \cref{rem:k3-cycle} gives us an ADGA for the
language of all simple undirected graphs that contain at least one
cycle of three or more nodes.

\begin{figure}[p]
  \vspace{-2ex}
  \alignpic
\begin{tikzpicture}[automaton]
  \matrix[states] {
        &     &     &[18ex] \node[permanent] (q_p1r) {$q_{\pik1}^←$}; \\
        &     & \node[existential] (q_p1) {$q_{\pik1}$}; & \node[permanent] (q_p1n) {$q_{\pik1}^\notlarr$}; \\
        & \node[universal] (q_1) {$q_1$}; & & \node[permanent] (q_acc1) {$\q{acc}$}; \\
        &     & \node[existential] (q_h1) {$q_{\herz1}$}; & \node[permanent] (q_h1r) {$q_{\herz1}^←$}; \\
        &     &     & \node[permanent] (q_h1n) {$q_{\herz1}^\notlarr$}; \\[2ex]
        &     &     & \node[permanent] (q_p2r) {$q_{\pik2}^←$}; \\
        &     & \node[existential] (q_p2) {$q_{\pik2}$}; & \node[permanent] (q_p2n) {$q_{\pik2}^\notlarr$}; \\
    \node[initial,existential] (q_ini) {$\q{ini}$}; & \node[universal] (q_2) {$q_2$}; & & \node[permanent redundant] (q_acc2) {$\q{acc}$}; \\
        &     & \node[existential] (q_h2) {$q_{\herz2}$}; & \node[permanent] (q_h2r) {$q_{\herz2}^←$}; \\
        &     &     & \node[permanent] (q_h2n) {$q_{\herz2}^\notlarr$}; \\[2ex]
        &     &     & \node[permanent] (q_p3r) {$q_{\pik3}^←$}; \\
        &     & \node[existential] (q_p3) {$q_{\pik3}$}; & \node[permanent] (q_p3n) {$q_{\pik3}^\notlarr$}; \\
        & \node[universal] (q_3) {$q_3$}; & & \node[permanent redundant] (q_acc3) {$\q{acc}$}; \\
        &     & \node[existential] (q_h3) {$q_{\herz3}$}; & \node[permanent] (q_h3r) {$q_{\herz3}^←$}; \\
        &     &     & \node[permanent] (q_h3n) {$q_{\herz3}^\notlarr$}; \\[2ex]
        &     &     & \node[permanent] (q_out) {$\q{out}$}; \\
  };
  \node[accepting formula] {
    $\xni \q{acc} ∨\:
     \smashoperator{\bigwedge_{1≤j≤3}}\,
     \Bigl(\bigl(\xni q_{\pik j}^← ∧ \xnni q_{\herz j}^← ∧ \xnni q_{\herz j}^\notlarr\bigr) ∨
     \bigl(\xni q_{\herz j}^← ∧ \xnni q_{\pik j}^← ∧ \xnni q_{\pik j}^\notlarr\bigr)\Bigr)$
  };
  \path[every node/.style={sloped}]
      (q_ini)    edge (q_1)
                 edge (q_2)
                 edge (q_3)
                 edge[bend right=56] (q_out)
      (q_1.15)   edge (q_p1)
      (q_1)      edge (q_h1)
      (q_2.15)   edge (q_p2)
      (q_2)      edge (q_h2)
      (q_3.15)   edge (q_p3)
      (q_3)      edge (q_h3)
      (q_p1.40)  edge node[above=.2ex] {$\xnni q_{\herz1}∧(\xni q_{\pik2}∨\xni q_{\herz2})$} (q_p1r)
      (q_p1)     edge node[above=.2ex,xshift=.8ex] {$\xnni q_{\herz1}∧¬(\xni q_{\pik2}∨\xni q_{\herz2})$} (q_p1n)
      (q_p1.320) edge node[above=.2ex,sloped=false] {$\xni q_{\herz1}$} (q_acc1)
      (q_h1.40)  edge node[above=.4ex,sloped=false] {$\xni q_{\pik1}$} (q_acc1)
      (q_h1)     edge node[above=.2ex] {$\xnni q_{\pik1}∧(\xni q_{\pik2}∨\xni q_{\herz2})$} (q_h1r)
      (q_h1.320) edge node[above=.2ex,xshift=.5ex] {$\xnni q_{\pik1}∧¬(\xni q_{\pik2}∨\xni q_{\herz2})$} (q_h1n)
      (q_p2.40)  edge node[above=.2ex] {$\xnni q_{\herz2}∧(\xni q_{\pik3}∨\xni q_{\herz3})$} (q_p2r)
      (q_p2)     edge node[above=.2ex,xshift=.8ex] {$\xnni q_{\herz2}∧¬(\xni q_{\pik3}∨\xni q_{\herz3})$} (q_p2n)
      (q_p2.320) edge node[above=.2ex,sloped=false] {$\xni q_{\herz2}$} (q_acc2)
      (q_h2.40)  edge node[above=.4ex,sloped=false] {$\xni q_{\pik2}$} (q_acc2)
      (q_h2)     edge node[above=.2ex] {$\xnni q_{\pik2}∧(\xni q_{\pik3}∨\xni q_{\herz3})$} (q_h2r)
      (q_h2.320) edge node[above=.2ex,xshift=.5ex] {$\xnni q_{\pik2}∧¬(\xni q_{\pik3}∨\xni q_{\herz3})$} (q_h2n)
      (q_p3.40)  edge node[above=.2ex] {$\xnni q_{\herz3}∧(\xni q_{\pik1}∨\xni q_{\herz1})$} (q_p3r)
      (q_p3)     edge node[above=.2ex,xshift=.8ex] {$\xnni q_{\herz3}∧¬(\xni q_{\pik1}∨\xni q_{\herz1})$} (q_p3n)
      (q_p3.320) edge node[above=.2ex,sloped=false] {$\xni q_{\herz3}$} (q_acc3)
      (q_h3.40)  edge node[above=.4ex,sloped=false] {$\xni q_{\pik3}$} (q_acc3)
      (q_h3)     edge node[above=.2ex] {$\xnni q_{\pik3}∧(\xni q_{\pik1}∨\xni q_{\herz1})$} (q_h3r)
      (q_h3.320) edge node[above=.2ex,xshift=.5ex] {$\xnni q_{\pik3}∧¬(\xni q_{\pik1}∨\xni q_{\herz1})$} (q_h3n);
\end{tikzpicture}
  \caption{$\dA[minor]{\K{3}}$\!, an ADGA over
    $\bigl\langle\{\blank\},\{\blank\}\bigr\rangle$ whose
    \emph{undirected} graph language $\L_\undir(\dA[minor]{\K{3}})$
    consists of the graphs that contain $\K{3}$ as a minor, or in
    other words, the graphs that contain at least one cycle of three
    or more nodes.}
  \label{fig:ADGA_K3_minor}
\end{figure}

\begin{example}[Recognizing Graphs with a Cycle]
  The ADGA $\dA[minor]{\K{3}}$ is specified in
  \cref{fig:ADGA_K3_minor}. Since it has too many accepting sets to
  represent them with a barcode, its acceptance condition is given as
  a set formula. Also, for the sake of better readability, there are
  three occurrences of $\q{acc}$ in the state diagram, but they all
  represent the same state.

  Let $U_1$, $U_2$ and $U_3$ be three sets corresponding to the three
  nodes of $\K{3}$. In the first round, each node of the input graph
  nondeterministically decides whether to participate and join one of
  those sets (states $q_1$, $q_2$, $q_3$), or not to interfere at all
  and terminate right away (state $\q{out}$). Then, in the second
  round, there is a universal branching in which each participating
  node simultaneously picks the markers $\pik$ and $\herz$. In the
  third and last round, in every universal branch, each participating
  node checks whether it has any neighbor that is in the same set as
  itself but has chosen a different marker. If this is the case, it
  goes to $\q{acc}$ to signalize that the affected branch of the run
  contains a discordance and must be treated accordingly by the
  automaton. (This is analogous to the behaviour of the ADGA from
  \cref{ex:ADGA_weakly_connected}.) Otherwise, the node assumes that
  the subgraph $G[U_i]$ to which it belongs is connected, and it
  checks whether any of its neighbors is part of the subgraph
  $G[U_{(i\bmod3)+1}]$. If so, it switches to a state with a
  superscript “$←$”, otherwise to a state with a
  “$\notlarr$”. Finally, the automaton decides on acceptance as
  follows: Branches in which a discordance has been signalized (by the
  state $\q{acc}$) are inconclusive, and thus the configurations
  reached in such branches are considered to be accepting. In all the
  other branches, the automaton expects that in each subgraph
  $G[U_i]$, the nodes agree on one of the markers $\pik$ and $\herz$,
  and at least one of them signalizes that it is connected to a node
  in the “next” subgraph $G[U_{(i\bmod3)+1}]$.
\end{example}

We can proceed similarly to construct the automaton $\dA[minor]{H}$
for any other loop-free graph $H$. Levels~0, 1 and 2 are completely
analogous and depend only on the number of nodes of $H$. Level~4, on
the other hand, must be adapted to the structure of the graph. For
each edge in $H$ between two nodes $v_i$ and $v_j$, it must be
verified that there is a corresponding edge in the input graph $G$,
connecting $G[U_i]$ and $G[U_j]$. Either the nodes in $U_i$ or the
nodes in $U_j$ must perform this verification. However, the number of
permanent states required by the nodes in each set $U_i$ grows
exponentially with the number of edges for which those nodes are
responsible. This is because a single node in $U_i$ might have
neighbors in several other sets, and each combination must be encoded
in a separate state. Hence, if we want to keep the total number of
states low, we have to balance the load among the sets as evenly as
possible. The exact specification of $\dA[minor]{H}$ can thus be
optimized for each graph $H$, but the construction principle is always
the same as for $\dA[minor]{\K{3}}$.

The possibility to check for arbitrary minors is a powerful tool. As
another application example, we outline how to recognize the language
of planar graphs.

\begin{example}[Recognizing Planar Graphs]
  By the Kuratowski-Wagner Theorem (\cref{thm:kuratowski-wagner}), a
  graph is planar \Iff it contains neither the complete graph $\K{5}$
  nor the complete bipartite graph $\K{3,3}$ as a minor. Moreover, as
  we will see in \cref{sec:closure-properties}, ADGA-recognizable
  graph languages are effectively closed under boolean set
  operations. Thus, by constructing the union automaton of
  $\dA[minor]{\K{5}}$ and $\dA[minor]{\K{3,3}}$ and then complementing
  it, we obtain an ADGA $\sA{planar}$ over
  $\bigl\langle\{\blank\},\{\blank\}\bigr\rangle$ whose undirected
  graph language is precisely the set of all simple undirected planar
  graphs.
\end{example}

\section{Normal Forms} \label{sec:normal-forms}
In this section, we establish some normal forms of ADGAs, which will
prove helpful for the closure constructions in
\cref{sec:closure-properties}.

The notion of a nonblocking ADGA is analogous to that of a nonblocking
finite automaton on words: it guarantees that the automaton cannot
“get stuck” during execution, which for an ADGA means that all of its
runs eventually reach a permanent configuration in each universal
branch.

\begin{definition}[Nonblocking ADGA]
  An ADGA $\A=⟨Σ,Γ,\Q,σ,δ,\F⟩$ is called \defd{nonblocking} \Iff every
  configuration $G_κ∈Q^{\clouded{Γ}}$ that is reachable by $\A$ has at
  least one successor configuration, i.e., $δ^\cloud(G_κ)≠∅$.
\end{definition}

A sufficient (but not necessary) condition for $\A$ to be nonblocking
is that its transition function is complete, i.e., $δ(q,\S)≠∅$ for
every $q∈Q$ and $\S∈(2^Q)^Γ$\!. This gives us an effective way of
transforming any given ADGA into an equivalent nonblocking one.

\begin{remark} \label{rem:nonblocking}
  For every ADGA $\A$, we can effectively construct an equivalent ADGA
  $\A'$ that is nonblocking. Moreover, $\siz(\A')≤\siz(\A)+\len(\A)$\,
  and \,$\len(\A')=\len(\A)$.
\end{remark}

\begin{proof}
  \newcommand{\qstop}[1]{q_{\,#1}^{\scriptscriptstyle\textnormal{stop}}}
  \newcommand{\Qstop}{Q_{\scriptscriptstyle\textnormal{stop}}}
  Let $\A=⟨Σ,Γ,\Q,σ,δ,\F⟩$. We extend $\A$ such that its transition
  function becomes complete, giving us an equivalent ADGA that is
  guaranteed to be nonblocking. To this end, we introduce an
  additional permanent state $\qstop{i}$ for every nonpermanent level
  $i$. If a node was blocked at level $i$, it now simply moves to
  state $\qstop{i}$, and waits there for the other nodes to
  terminate. A permanent configuration is then accepting if it already
  was so previously, or if it contains states indicating that the
  lowest level at which some node would have been blocked, is
  universal. Formally, we fix the set $\Qstop=\{\qstop{i}\mid
  0≤i<\len(\A)\}$, and construct $\A'=⟨Σ,Γ,\Q',σ,δ',\F'⟩$, with
  \begin{itemize}[topsep=1ex,itemsep=0ex]
  \item $Q'_\EE=Q_\EE$, \quad $Q'_\AA=Q_\AA$, \quad
    $Q'_\P=Q_\P∪\Qstop$,
   \item
    $δ'(q,\S) =
    \begin{cases}
      \,δ(q,\S) & \text{if $\S∈(2^Q)^Γ$ and $δ(q,\S)≠∅$}, \\
      \bigl\{\qstop{\lA(q)}\bigr\} & \text{otherwise},
    \end{cases}$ \\
    for every $q∈Q'_\N$ and $\S∈(2^{Q'})^Γ$\!,\, and 
  \item $\F' = \F ∪ \bigl\{ F⊆Q'_P \bigm| \text{$\min\{i\mid \qstop{i}∈F\}$ is a universal level of $\A$} \bigr\}$.
  \end{itemize}
  As a slight optimization, if $Q_\EE=∅$ or $Q_\AA=∅$, the states in
  $\Qstop$ can be merged into a single state $\q{stop}$.
\end{proof}

Next, again in analogy to finite automata on words, we say that an
ADGA is trim if it does not have any states that are obviously
useless. In the case of (nondeterministic) finite automata on words,
a state is considered useless if it is not reachable from any initial
state or if no accepting state is reachable from it. However, for
ADGAs the notion of reachability of states is more involved, since it
is subject to the reachability of configurations. For our purposes, it
will be enough to consider a necessary condition for reachability,
which can be easily checked for every state: reachability within the
state diagram, ignoring the requirements on the transition
arrows. States that do not satisfy this condition are obviously
useless, since they cannot occur in any run. We can thus safely remove
them from the automaton, without affecting the graph language it
recognizes.

\begin{definition}[Trim ADGA]
  Let $\A=⟨Σ,Γ,\Q,σ,δ,\F⟩$ be an ADGA. We consider a state $q∈Q$
  to be \defd{potentially reachable} if
  \begin{itemize}[topsep=1ex,itemsep=0ex]
    \item $q=σ(a)$ for some $a∈Σ$, or
    \item $q∈δ(p,\S)$ for some $p∈Q$ and $\S=⟨S_γ⟩_{γ∈Γ}∈(2^Q)^Γ$\!,
      such that $p$ and every state $p'∈\bigcup_{γ∈Γ}S_γ$ are
      potentially reachable.
  \end{itemize}
  The automaton $\A$ is said to be \defd{trim} if all of its states
  are potentially reachable.
\end{definition}

\begin{remark} \label{rem:trim}
  For every ADGA $\A$, we obtain an equivalent ADGA $\A'$ that is
  trim, by removing all states from $\A$ that are not potentially
  reachable (and adapting the transition function and acceptance
  condition accordingly). Moreover, if $\A$ is nonblocking, then so is
  $\A'$.
\end{remark}

Last, we introduce the notion of alternating normal form, which
requires that successive nonpermanent levels of an ADGA are
alternately existential and universal. Every ADGA can be transformed
into an equivalent automaton in alternating normal form by inserting
“dummy” levels between any two consecutive levels that are of the same
type.

\begin{definition}[Alternating Normal Form]
  An ADGA $\A=⟨Σ,Γ,\ab \Q,\ab σ,\ab δ,\F⟩$ is in \defd{alternating
    normal form} if for every level $i∈\{0,…,\len(\A)-2\}$,
  \begin{gather*}
    \lev_i(\A)⊆Q_\mEE \quad \text{implies} \quad \lev_{i+1}(\A)⊆Q_\AA, \quad \text{and} \\
    \lev_i(\A)⊆Q_\mAA \quad \text{implies} \quad \lev_{i+1}(\A)⊆Q_\EE.
  \end{gather*}
\end{definition}

\begin{remark} \label{rem:ANF}
  For every ADGA $\A$, we can effectively construct an equivalent ADGA
  $\A'$ that is in alternating normal form. If $\A$ is nonblocking or
  trim, then these properties carry over to $\A'$. Moreover,
  $\siz(\A')<2\siz(\A)$\, and \,$\len(\A')<2\len(\A)$.
\end{remark}

\begin{proof}
  Let $\A=⟨Σ,Γ,\Q,σ,δ,\F⟩$.  We construct $\A'=⟨Σ,Γ,\Q',σ,δ',\F⟩$ as
  an extended version of $\A$, with $Q_\EE⊆Q'_\EE$,\, $Q_\AA⊆Q'_\AA$,
  and $Q_\P=Q'_\P$.

  Suppose that there is some level $i$, with $0≤i≤\len(\A)-2$, such
  that both $i$ and $i+1$ are existential levels in $\A$. We remedy
  this in $\A'$ by inserting a disjoint copy $Q'_{i+1}$ of
  $\lev_{i+1}(\A)$\:\!\! “between” $\lev_i(\A)$ and $\lev_{i+1}(\A)$,
  and defining the states in $Q'_{i+1}$ to be universal, i.e.,
  $Q'_{i+1}⊆Q'_\AA$. Then, we redirect the outgoing transitions of
  states in $\lev_i(\A)$, such that, instead of going to states in
  $\lev_{i+1}(\A)$, they go to the corresponding copies in
  $Q'_{i+1}$. Turning to these copies, we direct all their outgoing
  transitions to the matching original states in $\lev_{i+1}(\A)$.

  More formally, for every state $q'∈Q'_{i+1}$, we denote by $q$ the
  state in $\lev_{i+1}(\A)$ to which it corresponds (i.e., $q'$ is the
  copy of $q$). Now, for every $p∈\lev_i(\A)$ and $\S∈(2^Q)^Γ$\!, we
  define
  \begin{equation*}
    δ'(p,\S) = \Bigl\{ q'∈Q'_{i+1} \Bigm| q∈δ(p,\S) \Bigr\}
    ∪ \Bigl( δ(p,\S)∩Q_\P \Bigr),
  \end{equation*}
  and for every $q'∈Q'_{i+1}$ and $\S∈(2^{Q'})^Γ$\!, we set
  \begin{equation*}
    δ'(q',\S)=\{q\}.
  \end{equation*}

  In the dual case, where the levels $i$ and $i+1$ are both universal
  in $\A$, we proceed analogously. By doing so for every level in $\A$
  that is directly followed by a level of the same type, and otherwise
  retaining the original transitions of $\A$, we achieve that $\A'$ is
  in alternating normal form. Each level $i∈\{1,…,\len(\A)-1\}$ is
  duplicated at most once, thus $\A'$ cannot exceed twice the size and
  length of $\A$.

  Since the additional levels that we have introduced merely cause the
  runs of $\A'$ to be longer than those of $\A$, without affecting
  which permanent configurations are eventually reached, $\A'$ is
  obviously equivalent to $\A$. It is also easy to see that if $\A$ is
  trim, so is $\A'$. Finally, we observe that a configuration can only
  be reachable by $\A'$ if it is also reachable by $\A$, or if it
  comprises only states from $Q'\setminus Q$ (i.e., states from the
  additional levels that we have introduced). By construction, any
  reachable configuration of $\A$ has as many successor configurations
  in $\A$ as in $\A'$. Furthermore, every configuration comprising
  only states from $Q'\setminus Q$ has precisely one successor
  configuration in $\A'$. Hence, if $\A$ is nonblocking, so is $\A'$.
\end{proof}

\section{Game-Theoretic Characterization} \label{sec:game-theo}
In this section, we give an alternative characterization of the
acceptance behaviour of ADGAs, using a game-theoretic approach. This
different point of view will be useful on two occasions: in
\cref{sec:closure-properties}, where we will show that ADGAs can be
easily complemented, and in \cref{sec:adga=mso}, where we will encode
the behaviour of a given ADGA into a logical formula.

The entire approach is heavily inspired by the work of Löding and
Thomas in \cite{LT00}, where they investigated the complementation of
finite automata on infinite words. A simplified variant for automata
on finite words can be found in \cite{Kum06}.

We consider games with two players: the \emph{automaton}
(player~$\EE$), and the \emph{pathfinder} (player~$\AA$).\footnote{The
  custom of calling the players “automaton” and “pathfinder” was
  introduced by Gurevich and Harrington in \cite{GH82}.} Given an ADGA
$\A$ and a labeled graph $G_λ$, the goal of the automaton is to accept
$G_λ$, while the pathfinder tries to reject it. In a way, the
automaton wants to come up with an accepting run of $\A$ on $G_λ$, and
the pathfinder seeks to refute any possible run, by finding (a path
to) a rejecting configuration. Thus, the automaton is responsible for
the nondeterministic (existential) choices, whereas the pathfinder
picks among the universal branches.

The game associated with $\A$ and $G_λ$ is represented by a directed
acyclic graph whose nodes are configurations of $\A$ on $G$. This
graph can be thought of as a superposition of all possible runs of
$\A$ on $G_λ$. We refer to its nodes as (game) \defd{positions}, but
keep the terminology used for configurations (e.g., “existential”,
“permanent”, etc.). The nonpermanent positions are divided among the
two players: existential ones belong to the automaton, universal ones
to the pathfinder. Starting at the initial configuration of $\A$ on
$G_λ$, the two players move through the graph together. At each
position, the player owning that position has to choose the next move
along one of the outgoing edges. This continues until some permanent
position is reached. If that position is accepting, the automaton
wins, otherwise the pathfinder wins. Also, if a nonpermanent position
is reached from which no move is possible, the owner of that position
loses. (This cannot happen if $\A$ is nonblocking.)

\begin{definition}[Game]
  Let $\A=⟨Σ,Γ,\Q,σ,δ,\F⟩$ be an ADGA and $G_λ$ a $Σ$-labeled
  $Γ$-graph. The \defd{game} $\defd{\J(\A,G_λ)}$ associated with $\A$
  and $G_λ$ is the tuple $⟨\wh{K},G_{κ_0},\arr⟩$ defined as follows:
  \begin{itemize}
  \item $\wh{K}=⟨K_\EE,K_\AA,\Kacc,\Krej⟩$, where
    \begin{itemize}
    \item $K_\EE⊆⟨Q_\EE,Q_\P⟩^G$ and $K_\AA⊆⟨Q_\AA,Q_\P⟩^G$ are the
      sets of existential and universal configurations, respectively,
      reachable by $\A$ on $G_λ$,
    \item $\Kacc,\Krej⊆Q_\P^G$ are the sets of accepting and rejecting
      configurations, respectively, reachable by $\A$ on $G_λ$.
    \end{itemize}
    We use the abbreviation $K \coloneqq K_\EE∪K_\AA∪\Kacc∪\Krej$.
  \item $G_{κ_0}=G_{σ∘λ}$ is the \defd{starting position} of the game.
  \item ${\arr}⊆K×K$ is the set of directed edges for which $⟨K,\arr⟩$
    constitutes a directed acyclic graph, such that
    \begin{itemize}
    \item $G_{κ_0}$ is the only source,
    \item every node $G_κ∈(K_\EE∪K_\AA)$ with
      $δ^\cloud(G_κ)=\{G_{μ_1},…,G_{μ_m}\}$ has exactly $m$ outgoing
      neighbors $G_{μ_1},…,G_{μ_m}∈K$,\, and
    \item every node $G_κ∈(\Kacc∪\Krej)$ is a sink.
    \end{itemize}
  \end{itemize}
\end{definition}

If $\A$ and $G_λ$ are not relevant in a given context, we refer to
$\J(\A,G_λ)$ simply as \emph{a} game $J=⟨\wh{K},G_{κ_0},\arr⟩$. For
convenience, we will apply graph-theoretic notions directly to $J$,
referring implicitly to its underlying graph $⟨K,\arr⟩$, e.g., by “a
path in $J$” we mean “a path in $⟨K,\arr⟩$”.

\newpage

\begin{example}[Game associated with $\sA{centric}$]
  \Cref{fig:game} represents the game
  $\J\bigl(\sA{centric},(G_2)_{λ_2}\bigr)$ associated with the ADGA
  from \cref{fig:ADGA_concentric_circles} and the labeled graph from
  \cref{fig:graph_labeled_square}. The green configuration is the
  starting position and belongs to the automaton, whereas the red
  positions belong to the pathfinder. Just as for runs, the blue
  positions with a double circle are accepting, and the other blue
  positions rejecting.

  The game is a superposition of all possible runs of $\sA{centric}$
  on $(G_2)_{λ_2}$, in the sense that it contains all of these runs as
  subgraphs. For instance, we obtain the run from
  \cref{fig:run_rejecting} as the subgraph induced by the green
  position, the third red position from the top, and the four
  bottommost blue positions.
\end{example}

\begin{figure}[htb]
  \alignpic
\begin{tikzpicture}[run or game, small configs]
  \matrix {
      & \node[config,univ] (c2a) {\squaregraphPic{configuration}{\lnodedistC}{$\qbkr$}{$\qaprime$}{$\qbkr$}{$\qaprime$}};
        & \node[config,prej] (c3a) {\squaregraphPic{configuration}{\lnodedistC}{$\q{yes}$}{$\q{no}$}{$\q{yes}$}{$\q{no}$}}; \\ \\
      & \node[config,univ] (c2b) {\squaregraphPic{configuration}{\lnodedistC}{$\qbk$}{$\qaprime$}{$\qbk$}{$\qaprime$}};
        & \node[config,pacc] (c3b) {\squaregraphPic{configuration}{\lnodedistC}{$\q{yes}$}{$\qap$}{$\q{yes}$}{$\qap$}}; \\
    \node[config,exis] (c1) {\squaregraphPic{configuration}{\lnodedistC}{$\qb$}{$\qa$}{$\qb$}{$\qa$}}; \\
      & \node[config,univ] (c2c) {\squaregraphPic{configuration}{\lnodedistC}{$\qbkr$}{$\qaprime$}{$\qbk$}{$\qaprime$}};
        & \node[config,prej] (c3c) {\squaregraphPic{configuration}{\lnodedistC}{$\q{yes}$}{$\qap$}{$\q{yes}$}{$\qah$}}; \\ \\
      & \node[config,univ] (c2d) {\squaregraphPic{configuration}{\lnodedistC}{$\qbk$}{$\qaprime$}{$\qbkr$}{$\qaprime$}};
        & \node[config,prej] (c3d) {\squaregraphPic{configuration}{\lnodedistC}{$\q{yes}$}{$\qah$}{$\q{yes}$}{$\qap$}}; \\ \\
      & & \node[config,pacc] (c3e) {\squaregraphPic{configuration}{\lnodedistC}{$\q{yes}$}{$\qah$}{$\q{yes}$}{$\qah$}}; \\
  };
  \coordinate (init) at ([xshift=-3ex,yshift=3ex]c1.north west);
  \path (init) edge (c1)
        (c1)   edge (c2a)
               edge (c2b)
               edge (c2c)
               edge (c2d)
        (c2a)  edge[pathfinder strat] (c3a)
        (c2b)  edge[pathfinder strat] (c3a)
        (c2c)  edge (c3b)
               edge (c3c)
               edge[pathfinder strat,bend left=5] (c3d)
               edge[bend left=5] (c3e)
        (c2d)  edge[bend right=5] (c3b)
               edge[pathfinder strat,bend right=5] (c3c)
               edge (c3d)
               edge (c3e);
\end{tikzpicture}
  \caption{The game associated with the ADGA $\sA{centric}$ from
    \cref{fig:ADGA_concentric_circles} and the labeled graph
    $(G_2)_{λ_2}$ from \cref{fig:graph_labeled_square}. The edges
    highlighted in red represent a winning strategy for the
    pathfinder.}
  \label{fig:game}
\end{figure}
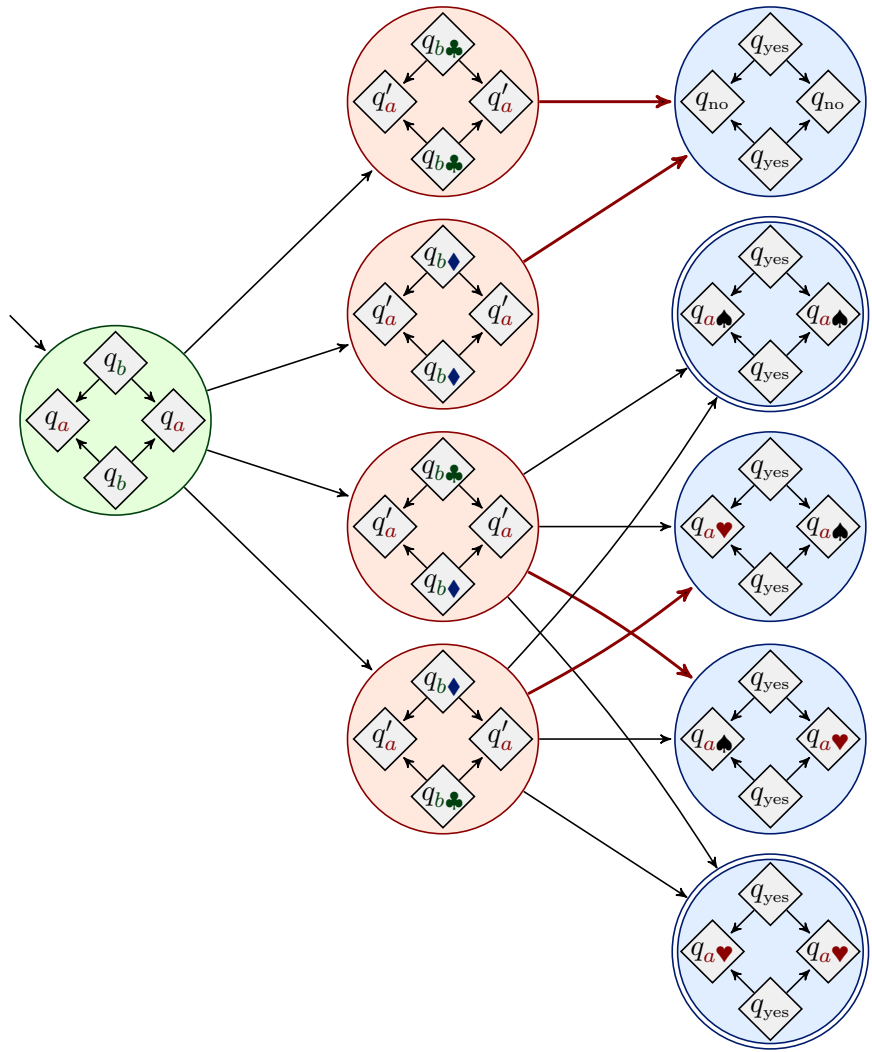

It remains to formalize how a game is played, and how the winner is
determined.

\begin{definition}[Play]
  A \defd{play} $π$ in a game $J$ is a path from the starting position
  to some sink (i.e., a maximal path). The \defd{winner} of the play
  $π=G_{κ_0}\cdots G_{κ_n}$ (with respect to $J$) is
  \begin{itemize}
  \item the automaton, if $G_{κ_n}$ is either accepting or universal,
  \item the pathfinder, if $G_{κ_n}$ is either rejecting or
    existential.
  \end{itemize}
\end{definition}

We will refer to $G_{κ_n}$ as a \defd{goal position} of the automaton
or the pathfinder, depending on which player wins when reaching that
position.

The moves chosen by the two players in a play $π$ are determined by
their respective strategies.

\begin{definition}[Strategy]
  A (memoryless or positional) \defd{strategy} for player
  $\X∈\{\EE,\AA\}$ in a game $J=⟨\wh{K},G_{κ_0},\arr⟩$ is a partial
  function $f_\X\colon K_\X → K$, such that $f_\X(G_κ)$ is an outgoing
  neighbor of $G_κ$, for every position $G_κ$ in $J$ that belongs to
  player~$\X$ and is not a sink. A play $π=G_{κ_0}\cdots G_{κ_n}$ is
  \defd{played according} to $f_\X$ \Iff $f_\X(G_{κ_i})=G_{κ_{i+1}}$
  for every node $G_{κ_i}$ on $π$ that belongs to player~$\X$, where
  $0≤i<n$.

  We say that $f_\X$ is a \defd{winning strategy} for player~$\X$ \Iff
  that player wins every play $π$ in $J$ played according to $f_\X$.
\end{definition}

\begin{example}[Winning Strategy for the Pathfinder]
  We consider again the game $\J\bigl(\sA{centric},(G_2)_{λ_2}\bigr)$
  from \cref{fig:game}. No matter which position the automaton chooses
  in the first round, the pathfinder can always move to a rejecting
  position in the second round. One possible winning strategy for the
  pathfinder is represented by the edges highlighted in red.

  The fact that the pathfinder has a winning strategy in this game is
  essentially a restatement of the observation made in
  \cref{ex:A_centric_language}: every run of $\sA{centric}$ on
  $(G_2)_{λ_2}$ is rejecting.
\end{example}

The previous example already suggests a strong relationship between
the acceptance behaviour of ADGAs and the winning strategies of the
two players. To no great surprise, both concepts turn out to be
equivalent.

\begin{lemma}[Acceptance and Winning Strategy] \label{lem:acc-win}
  Let $\A$ be an ADGA and $G_λ$ a labeled graph. Then $\A$ accepts
  $G_λ$ \Iff the automaton has a winning strategy in the game
  $\J(\A,G_λ)$.
\end{lemma}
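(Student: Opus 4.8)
The plan is to prove the two directions of the equivalence separately, exploiting the fact that a run and a winning strategy record essentially the same data: the existential (automaton) choices, together with a commitment to follow \emph{every} universal (pathfinder) branch. Throughout, I would use that both a run $R$ and the game $\J(\A,G_λ)$ live on the same finite set of configurations reachable by $\A$ on $G_λ$, so every maximal path is finite and ends at a sink, and that the winner rule declares the automaton the winner exactly at accepting or universal sinks, and the pathfinder the winner at rejecting or existential sinks.

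For the forward direction, I would start from an accepting run $R=⟨K,\arr⟩$ and read off a strategy $f_\EE$ for the automaton: at every existential configuration $G_κ∈K$, which by definition has a unique outgoing neighbor in $R$, let $f_\EE(G_κ)$ be that neighbor; on the remaining (non-sink) existential positions of the game I would define $f_\EE$ arbitrarily, which is harmless. The key claim is the invariant that every play played according to $f_\EE$ stays inside the node set of $R$. This follows by induction along the play: the starting position $G_{σ∘λ}$ is the source of $R$; at an existential position the automaton moves along the unique $R$-edge; and at a universal position the pathfinder can only move to some $G_μ∈δ^\cloud(G_κ)$, but $R$ contains all of these by the universal-branching clause of the run definition. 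Since $R$ is accepting, every permanent configuration it contains is accepting, so any such play that reaches a permanent sink reaches an accepting one; moreover a play cannot terminate at an existential position of $R$ (those have a successor, hence are non-sinks in the game), nor is it lost at a universal sink (the winner rule favours the automaton there). Hence $f_\EE$ is winning.

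For the converse, I would start from a winning strategy $f_\EE$ and build $R$ as the subgraph of $\J(\A,G_λ)$ consisting of the positions reachable from $G_{σ∘λ}$ when the automaton plays $f_\EE$ while the pathfinder is allowed every move: keep the single edge $G_κ\arr f_\EE(G_κ)$ at each existential node, and keep all outgoing edges at each universal node. I would then verify the four clauses of the run definition — $G_{σ∘λ}$ is the only source, existential nodes have one successor, universal nodes have all $m$ successors, and permanent nodes are sinks — where the only point needing the hypothesis is that every existential node of $R$ is a non-sink (so that $f_\EE$ is defined there): if some existential node of $R$ were a blocking sink, the $f_\EE$-consistent path reaching it would be a full play lost by the automaton, contradicting that $f_\EE$ is winning. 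Finally, to see that $R$ is accepting, suppose it contained a rejecting permanent configuration $G_κ$; since every node of $R$ is reached by some $f_\EE$-consistent path and $G_κ$ is a sink, that path is a full play ending at a rejecting position, again contradicting the winning hypothesis.

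The routine-but-delicate part is the bookkeeping of sinks that are not permanent, i.e., the blocking configurations, and checking that the winner rule keeps the two pictures aligned without assuming $\A$ is nonblocking; once this case analysis is in place, acyclicity of $R$ is inherited for free from the game, and the two constructions are visibly inverse on the existential choices. I expect no genuine obstacle beyond this careful handling of the sinks.
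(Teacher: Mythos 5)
Your proposal is correct and follows essentially the same route as the paper's proof: in the forward direction you read off the automaton's strategy from the unique existential successors in the accepting run and argue that every $f_\EE$-consistent play traces a maximal path of $R$, and in the converse you unfold the strategy into a run by keeping the single strategic edge at existential positions and all successors at universal ones, using the winning hypothesis exactly where the paper's footnote does (to rule out undefined strategy values at blocking existential sinks) and to exclude rejecting permanent configurations. The extra care you take with non-permanent sinks and the winner rule matches the paper's case analysis, so there is nothing to add.
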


\begin{proof} We give a very simple proof, for the sake of
  completeness.
  \begin{itemize}
  \item[($⇒$)] If $\A$ accepts $G_λ$, there is an accepting run $R$ of
    $\A$ on $G_λ$, from which we construct the following strategy
    $f_\EE$ for the automaton in $J=\J(\A,G_λ)$: For every position
    $G_κ$ in $J$ that belongs to the automaton and is not a sink,
    $f_\EE(G_κ)$ is the unique outgoing neighbor of $G_κ$ in $R$,
    provided that $G_κ$ occurs in $R$. Otherwise $f_\EE(G_κ)$ is some
    arbitrary outgoing neighbor of $G_κ$ in $J$. Hence, any play
    $π=G_{σ∘λ}\cdots G_κ$ in $J$ played according to $f_\EE$
    corresponds to a maximal path in $R$ (starting at
    $G_{σ∘λ}$). Since $R$ is a run and $G_κ$ has no outgoing
    neighbors, $G_κ$ is either universal or permanent. Furthermore,
    since $R$ is accepting, if $G_κ$ is permanent, then it is also
    accepting. In any case, the automaton wins the play $π$, i.e., it
    wins every play in $J$ played according to $f_\EE$, which means
    that $f_\EE$ is a winning strategy for that player.
  \item[($⇐$)] If the automaton has a winning strategy $f_\EE$ in
    $J=\J(\A,G_λ)$, we can construct the following run $R$ of $\A$ on
    $G_λ$: Starting at $G_{σ∘λ}$, for every node $G_κ$ in $R$, if
    $G_κ$ belongs to the automaton, then its unique outgoing neighbor
    in $R$ is $f_\EE(G_κ)$\;\!\footnote{The value of $f_\EE(G_κ)$
      cannot be undefined, because otherwise any path in $R$ from
      $G_{σ∘λ}$ to $G_κ$ would be a play played according to $f_\EE$
      that is lost by the automaton (since $G_κ$ is existential),
      which would contradict the assumption that $f_\EE$ is a winning
      strategy for the automaton.}, and if $G_κ$ belongs to the
    pathfinder, then its outgoing neighbors in $R$ are given by
    $δ^\cloud(G_κ)$. Hence, every maximal path $π=G_{σ∘λ}\cdots G_κ$
    in $R$ corresponds to a play in $J$ played according to
    $f_\EE$. Since permanent configurations do not have any outgoing
    neighbors, the only node on $π$ that might be permanent is
    $G_κ$. Furthermore, since $f_\EE$ is a winning strategy for the
    automaton, the configuration $G_κ$ is either accepting or
    universal. Thus every permanent configuration occurring on some
    (maximal) path in $R$ is accepting, which implies that $R$ is an
    accepting run, from which follows that $\A$ accepts $G_λ$.
    \qedhere
  \end{itemize}
\end{proof}

Dually to \cref{lem:acc-win}, an ADGA rejects a labeled graph \Iff the
pathfinder has a winning strategy in the associated game. Instead of
proving this directly, we can infer it from the following determinacy
result.

\begin{lemma}[Determinacy] \label{lem:either-win}
  In every game $J$, either the automaton or the pathfinder has a
  winning strategy.
\end{lemma}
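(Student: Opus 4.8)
The plan is to exploit the fact that the arena $⟨K,\arr⟩$ underlying any game $J$ is a \emph{finite} directed acyclic graph: since $K⊆Q^G$ is finite and $\arr$ is acyclic, every play is a finite path and must terminate at a sink (a permanent position) or at a position from which no move is possible. This rules out infinite plays, so I can avoid any appeal to general infinite-game determinacy and instead argue by backward induction from the sinks. Concretely, I would assign to each position $G_κ∈K$ a \emph{winner} $W(G_κ)∈\{\EE,\AA\}$, intended to be the player who wins the subgame started at $G_κ$, and then show that the player named by $W(G_{κ_0})$ has a winning strategy in all of $J$.

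First I would define $W$ by induction along $\arr$, for instance by inducting on the length of the longest path from $G_κ$ to a sink, which is well-founded by acyclicity. At the sinks the value is forced by the definition of the winner of a play: an accepting permanent position and a \emph{stuck} universal position (one with no outgoing move) are won by the automaton $\EE$, whereas a rejecting permanent position and a stuck existential position are won by the pathfinder $\AA$. For a non-sink nonpermanent position I propagate the values upward: an existential position $G_κ∈K_\EE$ receives $W(G_κ)=\EE$ exactly when at least one of its $\arr$-successors is won by the automaton (the automaton may steer there), and a universal position $G_κ∈K_\AA$ receives $W(G_κ)=\EE$ exactly when \emph{all} of its successors are won by the automaton (the pathfinder cannot escape an all-automaton region); otherwise $W(G_κ)=\AA$.

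Second, I would read off a strategy for whichever player $W(G_{κ_0})$ designates, and prove it winning. Suppose $W(G_{κ_0})=\EE$. I define $f_\EE$ by letting $f_\EE(G_κ)$, for each non-sink $G_κ∈K_\EE$ with $W(G_κ)=\EE$, be some successor $G_μ$ with $W(G_μ)=\EE$ (such a $G_μ$ exists by the existential clause above), and choosing $f_\EE$ arbitrarily on the remaining positions. The crux is the invariant that any play played according to $f_\EE$ and starting in an $\EE$-won position stays in $\EE$-won positions: at an existential step $f_\EE$ moves to an $\EE$-won successor by construction, and at a universal step \emph{every} successor is $\EE$-won by the definition of $W$, so the pathfinder's choice is harmless. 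Since such a play is finite and ends at a sink, that terminal sink is $\EE$-won, which by the base cases means it is accepting or (stuck) universal, so the automaton wins. The dual construction, swapping the roles of existential and universal positions and the values $\EE$ and $\AA$, yields a winning $f_\AA$ in the case $W(G_{κ_0})=\AA$. Either way, one of the two players has a winning strategy.

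The hard part here is not conceptual depth — this is simply the finite-game specialization of the classical backward-induction determinacy argument — but careful bookkeeping. I must treat the blocking (stuck) nonpermanent positions uniformly with the permanent sinks in the base case, matching them to the play-winner convention ("the owner of a stuck position loses"), and I must verify that the $W$-invariant is genuinely preserved at universal positions, where the move is selected by the pathfinder rather than by the strategy. Because the arena is a finite DAG with no infinite plays, no attractor fixpoint or topological-sort machinery beyond the longest-path induction is needed.
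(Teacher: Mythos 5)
Your proof is correct and is essentially the same backward-induction (Zermelo-style) argument the paper uses: the paper also inducts over the finite DAG of positions, showing for each position that either the owner wins some successor subgame or the opponent wins all of them, which is exactly your propagation rule for $W$. Your formulation via a global winner-labelling $W$ even sidesteps a small bookkeeping point the paper handles in a footnote (choosing among overlapping subgame strategies when combining the opponent's strategies), but the underlying idea is identical.
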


\begin{proof}
  We consider every position $G_κ$ of the game
  $J=⟨\wh{K},G_{κ_0},\arr⟩$ as the starting position of a subgame
  $J_κ$ which is obtained by restricting $J$ to the subgraph induced
  by $G_κ$ and all its descendant configurations. Note that this
  implies that $J_{κ_0}=J$. We show by induction on the structure of
  the game, that for every position $G_κ$ in $J$, either the automaton
  or the pathfinder has a winning strategy in the induced subgame
  $J_κ$.
  \begin{itemize}
  \item[(\texttt{BC})] If $G_κ$ is a sink, then $J_κ$ consists only of
    the single position $G_κ$, and the only possible play in $J_κ$ is
    $π=G_κ$. The automaton wins that play if $G_κ$ is either accepting
    or universal, otherwise the pathfinder wins. In any case, one of
    the two players has a (trivial) winning strategy.
  \item[(\texttt{IS})] Now consider the case that $G_κ$ has $m>0$
    outgoing neighbors $G_{μ_1},…,\ab G_{μ_m}$. Let
    player~$\X∈\{\EE,\AA\}$ be the player who has to make a move at
    position $G_κ$, i.e., $G_κ∈K_\X$, and let player~$\Y$ be the
    opponent. By induction hypothesis, we know that for each of the
    subgames $J_{μ_1},…,J_{μ_m}$, either player~$\X$ or player~$\Y$
    has a winning strategy. There are two possible cases:
    \begin{itemize}
    \item If player~$\X$ has a winning strategy $f_\X$ in some subgame
      $J_{μ_i}$ ($1≤i≤m$), this strategy can be extended to a winning
      strategy $f'_\X$ in $J_κ$, where for every node $G_ν$ in $J_κ$
      that belongs to player~$\X$ and has at least one outgoing
      neighbor,
      \begin{equation*}
        f'_\X(G_ν) =
        \begin{cases}
          G_{μ_i} & \text{if\, $G_ν=G_κ$}, \\
          f_\X(G_ν) & \text{if\, $G_ν$ is in $J_{μ_i}$}, \\
          G_{ν'} & \text{\parbox[t]{.5\textwidth}{otherwise, where
              $G_{ν'}$ is some arbitrary outgoing neighbor of $G_ν$
              in $J_κ$.}}
        \end{cases}
      \end{equation*}
    \item Otherwise, player~$\Y$ has winning strategies $f_{\Y
        1},…,f_{\Y m}$ for each of the subgames $J_{μ_1},…,J_{μ_m}$,
      respectively. These can be combined into a winning strategy
      $f'_\Y$ in $J_κ$, such that for every node $G_ν$ in $J_κ$ that
      belongs to player~$\Y$ and has at least one outgoing neighbor,
      $f'_\Y(G_ν)=f_{\Y i}(G_ν)$, where $i∈\{1,…,m\}$ is the
      smallest\footnote{The subgames $J_{μ_1},…,J_{μ_m}$ are not
        necessarily disjoint. If a position occurs in several
        subgames, player~$\Y$ can arbitrarily choose which winning
        strategy to follow at that position, since any choice will
        lead the play one step closer to some goal position of
        player~$\Y$.} index for which the corresponding subgame
      $J_{μ_i}$ contains $G_ν$.
    \end{itemize}
    In both cases, either the automaton or the pathfinder has a
    winning strategy in $J_κ$.
    \qedhere
  \end{itemize}
\end{proof}

\section{Closure Properties} \label{sec:closure-properties}
Building on the results from \cref{sec:normal-forms,sec:game-theo}, we
can now establish some closure properties of the class of
ADGA-recognizable graph languages.

Complementation can be achieved by a simple dualization construction,
which does not involve any blow-up. We have already used this
implicitly in the examples of \cref{sec:adga-preview}, where the ADGA
$\dA[color]{3}$ from \cref{fig:ADGA_3_colorable} was complemented by
changing its existential states to universal ones and complementing
its acceptance condition. This resulted in the ADGA $\dcA[color]{3}$
from \cref{fig:ADGA_not_3_colorable}. The following definition
generalizes this construction for arbitrary ADGAs.

\begin{definition}[Dual Automaton]
  Let $\A=\bigl\langle Σ,Γ,⟨Q_\EE,Q_\AA,Q_\P⟩,σ,δ,\F \bigr\rangle$ be
  an ADGA. Its \defd{dual automaton} $\cA$ is obtained by swapping the
  existential and universal states, and complementing the set of
  accepting states, i.e.,
  \begin{equation*}
    \cA=\bigl\langle Σ,Γ,⟨Q_\AA,Q_\EE,Q_\P⟩,σ,δ,2^{Q_\P}\setminus\F \bigr\rangle.
  \end{equation*}
\end{definition}

To show that the dual automaton is always a complement automaton, we
first look at this construction from the game-theoretic point of view.

\begin{lemma*} \label{lem:dualization}
  Consider an ADGA $\A$ over $⟨Σ,Γ⟩$ and a labeled graph
  $G_λ∈Σ^{\clouded{Γ}}$. Then the automaton has a winning strategy in
  the game $\J(\A,G_λ)$ \Iff the pathfinder has a winning strategy in
  the dual game $\J(\cA,G_λ)$.
\end{lemma*}

\begin{proof}
  Let
  $J=\J(\A,G_λ)=\bigl\langle⟨K_\EE,K_\AA,\Kacc,\Krej⟩,G_{κ_0},\arr\bigr\rangle$.
  We observe that the dual game $\cJ=\J(\cA,G_λ)$ has the same
  underlying graph and starting position as $J$, only the roles and
  winning conditions of the two players have been interchanged, i.e.,
  $\cJ=\bigl\langle⟨K_\AA,K_\EE,\Krej,\Kacc⟩,G_{κ_0},\arr\bigr\rangle$. Hence,
  every play $π$ in $J$ is also a play in $\cJ$, and vice versa. Due
  to the complementarity of the winning conditions,
  player~$\X∈\{\EE,\AA\}$ wins $π$ in $J$ \Iff its opponent,
  player~$\Y$, wins $π$ in $\cJ$. Moreover, the reversal of roles
  ensures that a strategy $f$ for player~$\X$ in the one game is a
  strategy for player~$\Y$ in the other game. Thus, player~$\X$ wins
  every play played according $f$ in $J$ \Iff player~$\Y$ wins every
  play played according $f$ in $\cJ$.
\end{proof}

It is now straightforward to prove the desired result.

\begin{lemma}[Complementation] \label{lem:complementation}
  For every ADGA $\A$ over $⟨Σ,Γ⟩$, the dual automaton $\cA$
  recognizes the complement language of $\A$, i.e.,
  \begin{equation*}
    \L(\cA) = Σ^{\clouded{Γ}} \setminus \L(\A).
  \end{equation*}
\end{lemma}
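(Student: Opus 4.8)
The plan is to chain together the three game-theoretic results just obtained, exploiting the fact that dualization is an involution. First I would record that the dual of $\cA$ is again $\A$: swapping $Q_\EE$ with $Q_\AA$ is self-inverse, and complementing $\F$ inside $2^{Q_\P}$ twice restores $\F$. This lets me apply the dualization lemma with $\cA$ playing the role of the primary automaton.

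Second, I would strengthen \cref{lem:either-win} to the observation that in every game \emph{exactly} one of the two players has a winning strategy. Indeed, both cannot have one simultaneously: the unique play in which each player follows its own strategy is determined move by move, terminates at a sink, and has a single winner, so it cannot be won by both. Together with \cref{lem:either-win}, this shows that the automaton has a winning strategy in a game \Iff the pathfinder has none.

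With these observations, the result follows by a direct chain of equivalences. For any $G_λ∈Σ^{\clouded{Γ}}$,
\begin{align*}
  G_λ∈\L(\cA)
  &\iff \text{the automaton has a winning strategy in } \J(\cA,G_λ)
    && \text{(\cref{lem:acc-win})} \\
  &\iff \text{the pathfinder has a winning strategy in } \J(\A,G_λ)
    && \text{(dualization lemma)} \\
  &\iff \text{the automaton has \emph{no} winning strategy in } \J(\A,G_λ)
    && \text{(\cref{lem:either-win})} \\
  &\iff G_λ∉\L(\A)
    && \text{(\cref{lem:acc-win}),}
\end{align*}
where the second equivalence applies the dualization lemma to $\cA$ and uses that its dual is $\A$. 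This establishes $\L(\cA)=Σ^{\clouded{Γ}}\setminus\L(\A)$.

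I do not anticipate any serious obstacle, since the substantive work already lives in the lemmas of \cref{sec:game-theo}. The only delicate points are purely organizational: applying the dualization lemma to $\cA$ rather than $\A$ so that the two players land on the correct sides of the equivalence, and upgrading determinacy from ``at least one'' to ``exactly one'' winning strategy.
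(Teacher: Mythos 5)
Your proof is correct and follows essentially the same route as the paper: the same chain of equivalences through \cref{lem:acc-win}, the dualization lemma, and \cref{lem:either-win}, merely traversed starting from $\cA$ (using that dualization is an involution) rather than from $\A$. Your two ``organizational'' observations — that the dual of $\cA$ is $\A$, and that determinacy upgrades to \emph{exactly} one winner because the play in which each player follows its own winning strategy would otherwise be won by both — are correct and in fact make explicit a step (``pathfinder wins $\Rightarrow$ automaton does not'') that the paper's proof uses silently.
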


\begin{proof}
  Let $G_λ∈Σ^{\clouded{Γ}}$. By \cref{lem:acc-win}, $\A$ accepts $G_λ$
  \Iff the automaton has a winning strategy in the game
  $\J(\A,G_λ)$. By Lemma~\ref{lem:dualization}, this is equivalent to
  the pathfinder having a winning strategy in the dual game
  $\J(\cA,G_λ)$.  By \cref{lem:either-win}, this is the case \Iff the
  automaton does not have a winning strategy in $\J(\cA,G_λ)$. Again
  by \cref{lem:acc-win}, this is equivalent to saying that $\cA$ does
  not accept $G_λ$. Hence, $\A$ accepts $G_λ$ \Iff $\cA$ does not
  accept $G_λ$.
\end{proof}

Next, we prove closure under union and intersection. The following
constructions exploit the power of nondeterminism and universal
branching, and are, in principle, very similar to the corresponding
constructions for alternating automata on words. However, because of
the distributed nature of ADGAs, they are slightly more technical
(local choices must be coordinated).

\begin{lemma}[Union and Intersection] \label{lem:union-intersect}
  For every two ADGAs $\A_1$ and $\A_2$ over $⟨Σ,Γ⟩$, we can
  effectively construct ADGAs $\A_∪$ and $\A_∩$ that recognize the
  union language and intersection language, respectively, of $\A_1$
  and $\A_2$, i.e.,
  \begin{equation*}
    \L(\A_∪) = \L(\A_1)∪\L(\A_2),\!
    \quad \text{and} \quad
    \L(\A_∩) = \L(\A_1)∩\L(\A_2).
  \end{equation*}
  Moreover, \vspace{-1.5ex}
  \begin{align*}
    \siz(\A_∪)&=\swr{\siz}{\len}(\A_∩)=\siz(\A_1)+\siz(\A_2)+\card{Σ}+1,
    \quad \text{and} \\
    \len(\A_∪)&=\len(\A_∩)=\max\bigl\{\len(\A_1),\,\len(\A_2)\bigr\}+1.
  \end{align*}
\end{lemma}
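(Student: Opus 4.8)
The idea is to build each combined automaton so that, in a single additional initial round, every node nondeterministically (for union) or universally (for intersection) commits to one of the two component automata, and thereafter the run simply proceeds as a run of whichever automaton was chosen. The new level~$0$ consists of fresh states, one per node label (whence the $\card{Σ}$ summand) plus one permanent ``reject'' state (the $+1$), and the original states of $\A_1$ and $\A_2$ are shifted up by one level. Because the two component runs must agree on a \emph{single} automaton globally --- a purely distributed choice would let different nodes pick different components --- the coordination has to be enforced across the graph. This is the point the lemma statement already flags as ``slightly more technical.''

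First I would construct $\A_∩$. I would add a fresh universal initial level: for each label $a∈Σ$ a new universal state $\q{ini}_a$ with $σ_∩(a)=\q{ini}_a$, and one extra permanent state $\q{err}$. From $\q{ini}_a$ there are exactly two outgoing transitions, one leading (in the next round) into $σ_1(a)$'s successor configuration of $\A_1$ and one into that of $\A_2$; universality of this level splits the run so that \emph{both} components are explored in parallel branches. The new accepting sets of $\A_∩$ are exactly the old accepting sets of $\A_1$ together with those of $\A_2$ (viewed inside the disjoint union $Q_{1,\P}∪Q_{2,\P}∪\{\q{err}\}$), so that a branch accepts iff the component it committed to accepts. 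Global coordination is obtained for free here: since each node independently going universal produces, among its $2^{\card{\VG}}$ combinations, branches where the nodes disagree on which component to enter, I route every ``mixed'' choice to $\q{err}$ and make $\{\q{err}\}$ (and any set containing it) non-accepting, so disagreement branches are rejecting and the pathfinder cannot exploit them. Thus the pathfinder wins iff one of the two honest branches rejects, i.e.\ iff $G_λ∉\L(\A_1)∩\L(\A_2)$. By \cref{lem:acc-win} this gives $\L(\A_∩)=\L(\A_1)∩\L(\A_2)$.

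For $\A_∪$ I would dualize the construction, making the new initial level \emph{existential}: each node chooses one component, disagreement again routes to $\q{err}$, but now $\q{err}$-branches are made accepting (they are simply never chosen by the automaton when a genuinely accepting component run exists), and the automaton accepts iff \emph{some} uniformly-chosen component run accepts. The cleanest way to handle both cases uniformly, and to avoid re-proving the correctness of the coordination gadget twice, is to build only $\A_∩$ directly and then obtain $\A_∪$ as the dual automaton of the intersection of the duals: by \cref{lem:complementation}, $\L(\A_∪)=Σ^{\clouded{Γ}}\setminus\bigl(\L(\cA_1)∩\L(\cA_2)\bigr)=\L(\A_1)∪\L(\A_2)$, and dualization changes neither size nor length, which accounts for the stated symmetry $\siz(\A_∪)=\siz(\A_∩)$ and $\len(\A_∪)=\len(\A_∩)$.

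The size and length bookkeeping then falls out: $\siz=\siz(\A_1)+\siz(\A_2)+\card{Σ}+1$ counts the two disjoint state sets, the $\card{Σ}$ new initial states, and the single $\q{err}$ state, while $\len=\max\{\len(\A_1),\len(\A_2)\}+1$ reflects the one inserted round (after which the shorter component can be padded, e.g.\ via the nonblocking completion of \cref{rem:nonblocking} or harmlessly left to reach its permanent states early, since permanent states self-loop). The main obstacle I anticipate is \emph{rigorously} proving that the $\q{err}$-routing makes global agreement the only winning option for the committing player --- i.e.\ that a node cannot be ``fooled'' into a productive branch where the components silently diverge. I would settle this at the level of configurations: show that the reachable non-$\q{err}$ descendant configurations of the initial configuration are exactly the disjoint union of the reachable configurations of $\A_1$ and of $\A_2$, so that every play either enters one component cleanly or hits a $\q{err}$-sink, and then invoke \cref{lem:acc-win,lem:either-win} to translate this structural fact into the desired language equalities.
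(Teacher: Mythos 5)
Your overall architecture (a fresh committing level of $\card{Σ}$ states plus one extra permanent state, shifted copies of $\A_1$ and $\A_2$, and a coordination gadget that punishes disagreement) matches the paper's, but the polarity of your error handling is inverted in both constructions, and this is fatal. In $\A_∩$ the committing level is universal, so the \emph{pathfinder} chooses the branch; on any input with at least two nodes there is always a mixed branch, and if you make mixed branches rejecting (as you propose: $\{\q{err}\}$ and every set containing it non-accepting), the pathfinder simply steers into one and wins, so your $\A_∩$ would reject every graph with two or more nodes. Mixed branches must instead be declared \emph{accepting} (inconclusive), which is exactly what the paper does by adding $2^{(Q_∩)_\P}\setminus\bigl(2^{(Q_1)_\P}∪2^{(Q_2)_\P}\bigr)$ to $\F_∩$. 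Dually, in $\A_∪$ the committing level is existential, so the \emph{automaton} chooses; if mixed outcomes are accepting (as you propose), the automaton cheats by deliberately producing a mixed configuration and accepts every graph with two or more nodes, breaking soundness whenever neither component accepts. There, mixed outcomes must be rejecting. Your parenthetical “they are simply never chosen when a genuinely accepting component run exists” addresses completeness but not soundness.

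Two further gaps. First, your proposed structural invariant — that every non-$\q{err}$ descendant of the initial configuration lies cleanly in one component — is false for disconnected inputs: two connected components of $G$ can silently run $\A_1$ and $\A_2$ respectively without any node ever seeing a mixed neighborhood, so no local $\q{err}$ is ever produced. Such “impure” permanent configurations can only be caught by the \emph{global} acceptance condition (their state set lies in neither $2^{(Q_1)_\P}$ nor $2^{(Q_2)_\P}$), and the polarity matters here again: they must count as rejecting for $\A_∪$ and accepting for $\A_∩$; one also needs both components nonblocking so that impure branches actually reach a permanent configuration rather than getting stuck and vacuously accepting. Second, you never arrange for $\A_1$ and $\A_2$ to agree on the sequence of existential and universal levels; without first converting both to alternating normal form (and trimming them), the merged automaton can have existential and universal states on the same level, violating \cref{def:adga}. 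The De Morgan shortcut for deriving one construction from the other is fine and is what the paper does, in the opposite direction (building $\A_∪$ directly and obtaining $\A_∩$ by dualization).
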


\begin{proof}
  Let $\A_1=⟨Σ,Γ,\Q_1,σ_1,δ_1,\F_1⟩$ and $\A_2=⟨Σ,Γ,\Q_2,\ab
  σ_2,δ_2,\F_2⟩$. Without loss of generality, we may assume that
  \begin{itemize}[topsep=1ex,itemsep=0ex]
  \item both automata are nonblocking and trim (see
    \cref{rem:nonblocking,rem:trim}), and
  \item they agree on the sequence of quantifiers, i.e., for\,
    $0≤i<\min\{\len(\A_1), \ab \len(\A_2)\}$, the state sets
    $\lev_i(\A_1)$ and $\lev_i(\A_2)$ are either both existential or
    both universal, in the respective automata. By \cref{rem:ANF}, a
    simple way to ensure this is to transform both automata into
    alternating normal form, and possibly inserting an additional
    “dummy” level into one of them.
  \end{itemize}
  Further, let $Q_Σ$ be a set of states with the same cardinality as
  $Σ$, where $q_a∈Q_Σ$ denotes the state corresponding to $a∈Σ$, and
  let $\qacc$ and $\qrej$ be two additional states. We assume that
  $Q_1$, $Q_2$, $Q_Σ$ and $\{\qacc,\qrej\}$ are pairwise disjoint.

  First, we construct the union automaton. The idea is that, in the
  first round, each node in the input graph nondeterministically and
  independently decides whether to behave like in $\A_1$ or in
  $\A_2$. If there is a consensus, then the run continues as it would
  in the unanimously chosen automaton $\A_j$, and it is accepting \Iff
  it corresponds to an accepting run of $\A_j$. Otherwise, a conflict
  is detected, either locally by adjacent nodes, or at the latest,
  when acceptance is checked globally, and in either case the run is
  rejecting. Formally, we define $\A_∪=⟨Σ,Γ,\Q_∪,σ_∪,δ_∪,\F_∪⟩$,
  where
  \begin{itemize}[topsep=1ex,itemsep=0ex]
  \item $(Q_∪)_\mEE = (Q_1)_\mEE ∪ (Q_2)_\mEE ∪ Q_Σ$,
  \item $(Q_∪)_\mAA = (Q_1)_\mAA ∪ (Q_2)_\mAA$,
  \item $(Q_∪)_\mP  = (Q_1)_\mP ∪ (Q_2)_\mP ∪ \{\qrej\}$,
  \item $σ_∪(a) = q_a, \quad \text{for every $a∈Σ$}$,
  \item
    $δ_∪(q,\S) =
    \begin{cases}
      \{σ_1(a),\,σ_2(a)\} & \text{if \,$q=q_a∈Q_Σ$\, and \,$\S∈(2^{Q_Σ})^Γ$}\!, \\
      \,δ_1(q,\S) & \text{if \,$q∈Q_1$\, and \,$\S∈(2^{Q_1})^Γ$}\!, \\
      \,δ_2(q,\S) & \text{if \,$q∈Q_2$\, and \,$\S∈(2^{Q_2})^Γ$}\!, \\
      \{\qrej\} & \text{otherwise},
    \end{cases}$ \\
    for every $q∈(Q_∪)_\N$ and $\S∈(2^{Q_∪})^Γ$\!,
  \item $\F_∪ = \F_1 ∪ \F_2$.
  \end{itemize}
  Note that $δ_∪$ satisfies the properties required by the definition
  of an ADGA, in particular, that states on the same level are in the
  same component of $\Q_∪$, which is guaranteed by the assumptions
  that $\A_1$ and $\A_2$ are both trim and agree on the sequence of
  quantifiers.

  Next, we verify that, for any $G_λ∈Σ^{\clouded{Γ}}$,\:\! $\A_∪$
  accepts $G_λ$ \Iff $\A_1$ or $\A_2$ accepts $G_λ$.
  \begin{itemize}
  \item[($⇐$)] Let one of the two automata, say $\A_j$,\, $j∈\{1,2\}$,
    have an accepting run $R=⟨K,\arr⟩$ on $G_λ$. By construction, the
    initial configuration of $\A_∪$ on $G_λ$ is existential, and there
    is a global transition to the initial configuration of $\A_j$ on
    $G_λ$, i.e., $G_{σ_j∘λ}∈δ_∪^\cloud(G_{σ_∪∘λ})$. Moreover, any
    transition of $\A_j$ is also a transition of $\A_∪$, and
    configurations common to $\A_j$ and $\A_∪$ have the same type
    (e.g., existential, etc.) in both automata. Thus,
    \begin{equation*}
      R'=\bigl\langle K∪\{G_{σ_∪∘λ}\}, \;\, {\arr}∪\{⟨G_{σ_∪∘λ},G_{σ_j∘λ}⟩\} \bigr\rangle
    \end{equation*}
    is a run of $\A_∪$ on $G_λ$, and since $\F_j⊆\F_∪$, any permanent
    configuration occurring in $R'$ is accepting, which entails that
    $R'$ is also accepting.
  \item[($⇒$)] Now let $\A_∪$ have an accepting run $R'$ on
    $G_λ$. Since the initial configuration of $\A_∪$ on $G_λ$ is
    existential, it must have exactly one outgoing neighbor $G_κ$ in
    $R'$. By construction of $\A_∪$, every state occurring in $G_κ$
    belongs to either $\A_1$ or $\A_2$, i.e., $G_κ∈(Q_1∪Q_2)^G$.

    Assume that states of both automata occur in $G_κ$. We generalize
    this property by calling any configuration $G_μ$ of $\A_∪$
    \emph{impure} if
    $G_μ∈(Q_1∪Q_2∪\{\qrej\})^G\setminus(Q_1^G∪Q_2^G)$.
    \begin{itemize}[topsep=0ex,itemsep=0ex]
    \item Any successor configuration of an impure configuration is
      also impure, because there are no local transitions from
      $Q_2∪\{\qrej\}$ to $Q_1$, or from $Q_1∪\{\qrej\}$ to $Q_2$.
    \item Further, any impure permanent configuration $G_μ$ is
      rejecting, because $G_μ$ being impure means that $\{μ(v) \mid
      v∈\VG\}∉(2^{Q_1}∪2^{Q_2})$, and consequently the acceptance
      condition given by $\F_1∪\F_2$ cannot be fulfilled.
    \end{itemize}
    Hence, the above assumption implies that all the permanent
    configurations among $G_κ$ and its descendant configurations under
    $(δ_∪)^\cloud$ are rejecting. Since we required $\A_1$ and $\A_2$
    to be nonblocking, such permanent configurations must exist (every
    nonpermanent configuration has at least one successor
    configuration). It follows that $R'$ is not accepting, which is a
    contradiction.

    We conclude that only states of one automaton, say $\A_j$,\,
    $j∈\{1,2\}$, can occur in $G_κ$. More precisely, $G_κ$ is the
    initial configuration of $\A_j$ on $G_λ$, i.e.,
    $G_κ=G_{σ_j∘λ}$. Since $\A_∪$ behaves like $\A_j$ on
    configurations of $\A_j$, this means that $R'$ is exactly of the
    same form as the run constructed in the previous part of this
    proof, i.e.,
    \begin{equation*}
      R'=\bigl\langle \{G_{σ_∪∘λ}\}\djun K, \;\, \{⟨G_{σ_∪∘λ},G_{σ_j∘λ}⟩\}\djun{\arr} \bigr\rangle,
    \end{equation*}
    where $K⊆Q_j^G$ and ${\arr}⊆K×K$. By removing the initial
    configuration (and its outgoing edge), we get a run $R=⟨K,\arr⟩$
    of $\A_j$ on $G_λ$, which is accepting because
    $\F_∪∩2^{Q_j}=\F_j$.
  \end{itemize}
  Finally, we turn our attention to the intersection automaton. By De
  Morgan's law, $\overline{\L(\A_1)∩\L(\A_2)} =
  \overline{\L(\A_1)}∪\overline{\L(\A_2)}$,\:\! hence we can simply
  combine the available constructions for complementation and union,
  which leads to the intersection automaton
  $\A_∩=⟨Σ,Γ,\Q_∩,σ_∩,δ_∩,\F_∩⟩$, where
  \begin{itemize}[topsep=1ex,itemsep=0ex]
  \item $(Q_∩)_\mEE = (Q_1)_\mEE ∪ (Q_2)_\mEE$,
  \item $(Q_∩)_\mAA = (Q_1)_\mAA ∪ (Q_2)_\mAA ∪ Q_Σ$,
  \item $(Q_∩)_\mP  = (Q_1)_\mP ∪ (Q_2)_\mP ∪ \{\qacc\}$,
  \item $σ_∩(a) = q_a, \quad \text{for every $a∈Σ$}$,
  \item
    $δ_∩(q,\S) =
    \begin{cases}
      \{σ_1(a),\,σ_2(a)\} & \text{if \,$q=q_a∈Q_Σ$\, and \,$\S∈(2^{Q_Σ})^Γ$}\!, \\
      \,δ_1(q,\S) & \text{if \,$q∈Q_1$\, and \,$\S∈(2^{Q_1})^Γ$}\!, \\
      \,δ_2(q,\S) & \text{if \,$q∈Q_2$\, and \,$\S∈(2^{Q_2})^Γ$}\!, \\
      \{\qacc\} & \text{otherwise},
    \end{cases}$ \\
    for every $q∈(Q_∩)_\N$ and $\S∈(2^{Q_∩})^Γ$\!,
  \item $\F_∩ = \F_1 ∪ \F_2 ∪ \bigl(2^{(Q_∩)_\P} \setminus
    (2^{(Q_1)_\P}∪2^{(Q_2)_\P})\bigr)$.
    \qedhere
  \end{itemize}
\end{proof}

As a last type of operation on graph languages, we consider uniform
relabelings of nodes, which we call node projections and formally
define as follows.

\begin{definition}[Projection]
  Let $Σ$ and $Σ'$ be two nonempty node alphabets and $Γ$ an edge
  alphabet. A (node) \defd{projection} from $Σ$ to $Σ'$ is a mapping
  $h\colon Σ → Σ'$. With a slight abuse of notation, this mapping is
  extended to labeled graphs by applying it to each node label, and to
  graph languages by applying it to each labeled graph. More
  precisely, for every $G_λ∈Σ^{\clouded{Γ}}$ and $L⊆Σ^{\clouded{Γ}}$,
  \begin{equation*}
    h(G_λ) \coloneqq G_{h∘λ},\!
    \quad \text{and} \quad  
    h(L) \coloneqq \{h(G_λ)\mid G_λ∈L\}.
  \end{equation*}
\end{definition}

Again exploiting the power of nondeterminism, we can easily show that
ADGA-recognizable graph languages are closed under arbitrary
projections.

\begin{lemma}[Projection] \label{lem:projection}
  For every ADGA $\A$ over $⟨Σ,Γ⟩$ and projection\, $h\colon Σ → Σ'$,
  we can effectively construct an ADGA $\A'$ that recognizes the
  projected language of $\A$ through $h$, i.e.,
  \begin{equation*}
    \L(\A') = h(\L(\A)).
  \end{equation*}
  Moreover, \vspace{-1.5ex}
  \begin{equation*}
    \siz(\A')=\siz(\A)+\card{Σ'}
    \quad \text{and} \quad
    \len(\A')=\len(\A)+1.
  \end{equation*}
\end{lemma}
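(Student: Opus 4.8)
The plan is to add a single new existential level at the bottom of $\A$ that "guesses" a pre-image label for every node. Formally, I would introduce a set $Q_{Σ}=\{q_a\mid a∈Σ\}$ of fresh existential states, one for each original label, and construct $\A'=⟨Σ',Γ,\Q',σ',δ',\F⟩$ where the new initialization maps each target label $b∈Σ'$ to a state encoding $b$, and the new transition function takes each such state nondeterministically to $σ(a)$ for every $a∈Σ$ with $h(a)=b$. Concretely, for a node labeled $b$, the guessing state should branch to $\{σ(a)\mid a∈Σ,\ h(a)=b\}$, after which the automaton behaves exactly like $\A$. The set $Q_{Σ}$ contributes $\card{Σ'}$ new states (I would index the guessing states by the target alphabet $Σ'$, which matches the stated bound $\siz(\A')=\siz(\A)+\card{Σ'}$), and the single extra level accounts for $\len(\A')=\len(\A)+1$.

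First I would verify that $\A'$ is a well-formed ADGA: the new states form an existential level $0$ with no incoming transitions, the old level-$0$ states of $\A$ are pushed up to level $1$ (and crucially are no longer initial under $σ'$), and permanent states stay permanent. I should check that the level-consistency requirement (states on the same level lie in one component of $\Q'$) still holds; since the entire new bottom level is existential, this is immediate, and the rest of the level structure is inherited unchanged from $\A$. The accepting sets $\F$ are left untouched, since acceptance still depends only on the permanent configurations reached, exactly as in $\A$.

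The core of the argument is the language equality $\L(\A')=h(\L(\A))$. The key observation is that a run of $\A'$ on a graph $G_{λ'}∈(Σ')^{\clouded{Γ}}$ consists of one deterministic-looking first move in which each node independently guesses an original label, followed by a run of $\A$ on the resulting guessed graph. For the inclusion $h(\L(\A))⊆\L(\A')$, given $G_λ∈\L(\A)$ with an accepting run $R$ of $\A$, I would have every node of $\A'$ on $h(G_λ)=G_{h∘λ}$ guess precisely its original label $λ(v)$ (which is possible since $h(λ(v))=(h∘λ)(v)$), reaching the initial configuration $G_{σ∘λ}$ of $\A$, and then prepend this guessing step to $R$ to obtain an accepting run of $\A'$. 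For the converse, any accepting run of $\A'$ on $G_{λ'}$ selects via its first (existential) level some labeling $λ\colon\VG→Σ$ with $h∘λ=λ'$, such that the remainder is an accepting run of $\A$ on $G_λ$; hence $G_λ∈\L(\A)$ and $G_{λ'}=h(G_λ)∈h(\L(\A))$.

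The main subtlety—mirroring the coordination remark preceding the union construction—is that the first level is \emph{existential}, so the branching nodes make \emph{independent} nondeterministic guesses; I must confirm this yields exactly the labelings $λ$ with $h∘λ=λ'$ and no spurious "mixed" behaviour. Because each node's guess is confined to the fiber $h^{-1}(λ'(v))$ and the subsequent transition function is just that of $\A$, there is no interaction to coordinate and no conflict to rule out, so this step is genuinely routine here (unlike in the union case, where a global consensus was needed). I therefore expect no real obstacle, only the bookkeeping of re-indexing levels and checking the two size bounds.
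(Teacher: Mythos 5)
Your construction is essentially the paper's: a fresh existential level of guessing states indexed by $Σ'$ at the bottom, with $δ'(q_b,\cdot)=\{σ(a)\mid h(a)=b\}$, the rest of $\A$ simulated unchanged, and $\F$ untouched; the language equality is argued the same way (prepend the guessing step to an accepting run of $\A$, and conversely read off a labeling $λ$ with $h∘λ=λ'$ from the first move).

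There is one point you declare ``immediate'' that is not: the well-formedness of the level structure. You assert that the old level-$0$ states of $\A$ are all pushed up to level~$1$, but a state moves up only if it acquires an incoming transition from the new guessing level, i.e.\ only if it lies in the image of $σ$. A nonpermanent level-$0$ state of $\A$ that is not initial for any label would keep having no incoming transitions in $\A'$, so it would sit at level~$0$ next to the guessing states (breaking the ``states on the same level are of the same type'' requirement if it is universal) while its outgoing transitions now jump to level~$2$, violating the level-increment condition. The paper handles this by assuming, without loss of generality, that $\A$ is trim (\cref{rem:trim}): trimness forces every level-$0$ state to be $σ(a)$ for some $a$, which is exactly what your argument needs. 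Adding that one normalization step closes the gap; everything else in your proposal matches the paper's proof.
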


\begin{proof}
  The idea is simple: For every $b∈Σ'$, each node labeled with $b$
  nondeterministically chooses a new label $a∈Σ$, such that
  $h(a)=b$. Then the automaton $\A$ is simulated on that new input.

  Without loss of generality, we may assume that $\A$ is trim (see
  \cref{rem:trim}). Let $\A=⟨Σ,Γ,\Q,σ,δ,\F⟩$, and let $Q'_{Σ'}$ be a
  set of states with the same cardinality as $Σ'$ and disjoint from
  $Q$, where $q_b∈Q'_{Σ'}$ denotes the state corresponding to
  $b∈Σ'$. We construct the projection automaton $\A'=\bigl\langle
  Σ',Γ,⟨Q'_\EE,Q_\AA,Q_\P⟩,σ',δ',\F \bigr\rangle$, where
  \begin{itemize}[topsep=1ex,itemsep=0ex]
  \item $Q'_\EE = Q_\EE ∪ Q'_{Σ'}$,
  \item $σ'(b) = q_b, \quad \text{for every $b∈Σ'$}$,
  \item
    $δ'(q,\S) =
    \begin{cases}
      \{σ(a)\mid h(a)=b\} & \text{if \,$q=q_b∈Q'_{Σ'}$\, and \,$\S∈(2^{Q'_{Σ'}})^Γ$}\!, \\
      \,δ(q,\S) & \text{if \,$q∈Q$\, and \,$\S∈(2^Q)^Γ$}\!, \\
      \,∅ & \text{otherwise},
    \end{cases}$ \\
    for every $q∈(Q_\N∪Q'_{Σ'}\:\!\!)$ and $\S∈(2^{Q\:\!∪\:\!Q'_{Σ'}}\:\!\!)^Γ$\!.
  \end{itemize}
  Note that, as required, states on the same level of $\A'$ are of the
  same type (e.g., existential, etc.), because we have assumed that
  $\A$ is trim. Consider any $G_{λ'}∈Σ^{\prime\:\!\clouded{Γ}}$. The
  initial configuration $G_{σ'∘λ'}$ of $\A'$ on $G_{λ'}$ is
  existential. Its successor configurations are the initial
  configurations of $\A$ on the $Σ$-labeled $Γ$-graphs that are mapped
  to $G_{λ'}$ by $h$, i.e.,
  \begin{equation*}
    δ^{\prime\:\!\cloud}(G_{σ'∘λ'}) = \bigl\{ G_{σ∘λ} \bigm| G_λ∈Σ^G \;\!∧\; h(G_λ)=G_{λ'} \bigr\}.   
  \end{equation*}
  Moreover, the behaviours of $\A$ and $\A'$ on configurations of $\A$
  are the same. Hence, arguing similarly as for the union construction
  (\cref{lem:union-intersect}), we can show that $\A'$ accepts
  $G_{λ'}$ \Iff $\A$ accepts some $G_λ∈Σ^G$ such that $h(G_λ)=G_{λ'}$.
\end{proof}

The following theorem summarizes the closure properties stated in
\cref{lem:complementation,lem:union-intersect,lem:projection}.

\begin{theorem}[Closure Properties] \label{thm:closure}
  The class $\LL_\ADGA$ of ADGA-recognizable graph languages is
  effectively closed under boolean set operations and under
  projection.
\end{theorem}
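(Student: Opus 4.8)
The plan is to recognize that this theorem is essentially a bookkeeping summary of the three preceding lemmas, so the proof reduces to assembling them and checking that nothing is left out. Every boolean set operation on subsets of a fixed universe $Σ^{\clouded{Γ}}$ can be expressed through complementation, union, and intersection (for example, set difference via $L_1 \setminus L_2 = L_1 ∩ (Σ^{\clouded{Γ}} \setminus L_2)$, and symmetric difference by combining these). Hence it suffices to treat those three operations, together with projection.

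First I would invoke \cref{lem:complementation}: for any ADGA $\A$ over $⟨Σ,Γ⟩$, the dual automaton $\cA$ recognizes $Σ^{\clouded{Γ}} \setminus \L(\A)$, which settles closure under complementation — understood, as it must be, relative to the ambient universe $Σ^{\clouded{Γ}}$. Next, \cref{lem:union-intersect} supplies, for any two ADGAs over the same $⟨Σ,Γ⟩$, automata $\A_∪$ and $\A_∩$ recognizing the union and intersection languages; combined with complementation, this also yields closure under difference and symmetric difference. Finally, \cref{lem:projection} provides, for any projection $h\colon Σ → Σ'$, an ADGA recognizing $h(\L(\A))$. Since each of the three lemmas is stated constructively — each exhibits an explicit automaton together with size and length bounds — the combined closure is \emph{effective}: given automata for the input languages, one can mechanically build an automaton for the result of any finite combination of these operations.

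The only points requiring a word of care are conventions already baked into the lemma statements: two languages can be combined by the boolean constructions only when they share the same alphabet pair $⟨Σ,Γ⟩$, and complementation is taken with respect to $Σ^{\clouded{Γ}}$. I do not expect any genuine obstacle here, since all the technical content lives in the preceding lemmas; the proof is in effect a one-line citation of \cref{lem:complementation,lem:union-intersect,lem:projection}.
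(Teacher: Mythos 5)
Your proposal is correct and matches the paper exactly: the theorem is introduced there with the sentence ``The following theorem summarizes the closure properties stated in \cref{lem:complementation,lem:union-intersect,lem:projection}'', so the intended proof is precisely the assembly of those three lemmas that you describe. Your additional remarks on reducing general boolean operations to complement, union, and intersection, and on the alphabet conventions, are sound and do not change the argument.
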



\chapter{Monadic Second-Order Logic on Graphs} \label{chap:msol}
In this chapter, we review monadic second-order (MSO) logic on labeled
graphs. Then, building on the results of \cref{chap:adga}, we prove
our main result: the ADGA-recognizable graph languages are precisely
the MSO-definable ones. This, in turn, allows us to infer some
negative properties of ADGAs.

\section{Basic Definitions}
Throughout this work, we fix two disjoint, countably infinite sets of
(object language) variables: the supply of node variables
$\Vnode=\{\lsymb{u},\lsymb{v},…,\lsymb{u_1},…\}$, and the supply of
set variables $\Vset=\{\lsymb{U},\lsymb{V},…,\lsymb{U_1},…\}$. Node
variables will always be represented by lower-case letters, and set
variables by upper-case ones (sometimes with
subscripts).\footnote{Concrete instances of such object language
  variables will be typeset in a bright blue sans-serif font, to
  better distinguish them from meta-language variables, which can
  refer (amongst others) to arbitrary object language variables.}

\begin{definition}[MSO-Logic: Syntax] \label{def:mso-syntax}
  Let $Σ$ be a node alphabet and $Γ$ an edge alphabet. The set
  \defd{$\MSO(Σ,Γ)$} of \defd{monadic second-order formulas} (on
  graphs) over $⟨Σ,Γ⟩$ is built up from the \defd{atomic} formulas
  \begin{itemize}[topsep=1ex,itemsep=0ex]
  \item $\swl{\logic{\lab{\meta{a}}\meta{x}}}{\logic{\meta{x}\xarr{\meta{γ}}\meta{y}}}$
    \quad (“$x$ has label $a$”),
  \item $\logic{\meta{x}\xarr{\meta{γ}}\meta{y}}$
    \quad (“$x$ has a $γ$-edge to $y$”),
  \item $\swl{\logic{\meta{x}=\meta{y}}}{\logic{\meta{x}\xarr{\meta{γ}}\meta{y}}}$
    \quad (“$x$ is equal to $y$”),
  \item $\swl{\logic{\meta{x}∈\meta{X}}}{\logic{\meta{x}\xarr{\meta{γ}}\meta{y}}}$
    \quad (“$x$ is an element of $X$”),
  \end{itemize}
  for all\, $x,y\!∈\!\Vnode$,\, $X\!∈\!\Vset$,\, $a\!∈\!Σ$,\, and
  $γ\!∈\!Γ$, using the usual propositional connectives and
  quantifiers, which can be applied to both node and set
  variables. More precisely, if $φ$ and $ψ$ are $\MSO(Σ,Γ)$-formulas,
  then so are \noheight{\mbox{$\logic{¬\meta{φ}}$,\,
      $\logic{\meta{φ}∨\meta{ψ}}$,\, $\logic{\meta{φ}∧\meta{ψ}}$,\,
      $\logic{\meta{φ}⇒\meta{ψ}}$,\, $\logic{\meta{φ}⇔\meta{ψ}}$,\,
      $\logic{∃\meta{x}(\meta{φ})}$,\, $\logic{∀\meta{x}(\meta{φ})}$,\,
      $\logic{∃\meta{X}(\meta{φ})}$, and $\logic{∀\meta{X}(\meta{φ})}$}}, \\
  for all $x∈\Vnode$ and $X∈\Vset$.
\end{definition}

We will consistently represent MSO-formulas in the typographic style
used above, to distinguish object language from
meta-language.\footnote{ In order to make a clear distinction between
  MSO-formulas and formal statements at the meta-level (where some of
  the same symbols are used), the former will always be represented on
  a light blue background, using a bright blue font for symbols that
  directly occur in the considered formula. In contrast to this, other
  symbols must be interpreted at the meta-level to get the intended
  MSO-formulas. As usual, notations like
  $\logic{\bigwedge_{\meta{\one≤i≤n}}\meta{φ_i}}$ and
  $\logic{\bigexists_{\meta{\one≤i≤n}}\meta{x_i}(\meta{φ})}$ are used
  to represent $\logic{\meta{φ_\one}∧\meta{\cdots}∧\meta{φ_n}}$ and
  $\logic{∃\meta{x_\one}(\meta{\cdots}\,∃\meta{x_n}(\meta{φ}))}$,
  respectively.}

An occurrence of a variable $x∈\Vnode$ or $X∈\Vset$ in a formula $φ$
is said to be \defd{free} if it is not within the scope of a
quantifier. We denote by $\defd{\free(φ)}$ the set of variables that
occur freely in $φ$. If $φ$ has no free occurrences of variables,
i.e., if\, $\free(φ)=∅$, we say that $φ$ is a
\defd{sentence}. Moreover, we will use the notation
\defd{$φ[x_1,…,x_m,X_1,…,X_n]$} to indicate that at most the variables
given in brackets occur freely in $φ$, i.e.,
$\free(φ)⊆\{x_1,…,x_m,X_1,…,X_n\}$. This notation will also
occasionally be abused to instantiate a formula (schema) with concrete
variables.\footnote{Strictly speaking, if $x∈\Vnode$ and $X∈\Vset$ are
  unspecified, an object like $φ[x,X]=\logic{\meta{x}∈\meta{X}}$ is a
  formula schema. We can instantiate it with concrete object language
  variables, for instance $\lsymb{v}$ and $\lsymb{U}$, to obtain the
  formula $\logic{v∈U}$, which (by slight abuse of notation) will be
  denoted by $φ[\lsymb{v},\lsymb{U}]$. To simplify matters, we shall
  henceforth not explicitly distinguish formula schemata from
  formulas.}

\begin{definition}[MSO-Logic: Semantics]
  The truth of an $\MSO(Σ,Γ)$-formula $φ$ is evaluated with respect to
  a labeled graph $G_λ∈Σ^{\clouded{Γ}}$ and a variable assignment\,
  $α\colon \free(φ)→\VG ∪ 2^\VG$ that assigns a node $v∈\VG$ to each
  node variable in $\free(φ)$, and a set of nodes $S⊆\VG$ to each set
  variable in $\free(φ)$. We write $\defd{⟨G_λ,α⟩⊨φ}$ to denote that
  $G_λ$ and $α$ \defd{satisfy} $φ$. If $φ$ is a sentence, the variable
  assignment is superfluous, and we simply write $\defd{G_λ⊨φ}$ if
  $G_λ$ satisfies $φ$. The meaning of the atomic formulas is as hinted
  informally in \cref{def:mso-syntax}, i.e.,
  \begin{itemize}[topsep=1ex,itemsep=0ex]
  \item
    $⟨G_λ,α⟩⊨\swl{\logic{\lab{\meta{a}}\meta{x}}}{\logic{\meta{x}\xarr{\meta{γ}}\meta{y}}}
    \quad⇔\quad λ(α(x))=a$,
  \item $⟨G_λ,α⟩⊨\logic{\meta{x}\xarr{\meta{γ}}\meta{y}}
    \quad⇔\quad α(x)\arrG{γ}α(y)$,
  \item
    $⟨G_λ,α⟩⊨\swl{\logic{\meta{x}=\meta{y}}}{\logic{\meta{x}\xarr{\meta{γ}}\meta{y}}}
    \quad⇔\quad α(x)=α(y)$,
  \item
    $⟨G_λ,α⟩⊨\swl{\logic{\meta{x}∈\meta{X}}}{\logic{\meta{x}\xarr{\meta{γ}}\meta{y}}}
    \quad⇔\quad α(x)∈α(X)$.
  \end{itemize}
  for all\, $x,y\!∈\!\Vnode$,\, $X\!∈\!\Vset$,\, $a\!∈\!Σ$,\, and
  $γ\!∈\!Γ$. Composed formulas are interpreted according to the usual
  semantics of second-order logic, i.e.,
  \begin{itemize}[topsep=1ex,itemsep=0ex]
  \item $⟨G_λ,α⟩⊨\swl{\logic{¬\meta{φ}}}{\logic{\meta{φ}⇔\meta{ψ}}}
    \quad⇔\quad \text{$⟨G_λ,α⟩⊭φ$}$,
  \item $⟨G_λ,α⟩⊨\swl{\logic{\meta{φ}∨\meta{ψ}}}{\logic{\meta{φ}⇔\meta{ψ}}}
    \quad⇔\quad \text{$⟨G_λ,α⟩⊨φ$ \,or\, $⟨G_λ,α⟩⊨ψ$}$,
  \item $⟨G_λ,α⟩⊨\swl{\logic{\meta{φ}∧\meta{ψ}}}{\logic{\meta{φ}⇔\meta{ψ}}}
    \quad⇔\quad \text{$⟨G_λ,α⟩⊨φ$ \,and\, $⟨G_λ,α⟩⊨ψ$}$,
  \item $⟨G_λ,α⟩⊨\logic{\meta{φ}⇒\meta{ψ}}
    \quad⇔\quad \text{$⟨G_λ,α⟩⊭φ$ \,or\, $⟨G_λ,α⟩⊨ψ$}$,
  \item $⟨G_λ,α⟩⊨\logic{\meta{φ}⇔\meta{ψ}}
    \quad⇔\quad \text{$⟨G_λ,α⟩⊨\logic{\meta{φ}⇒\meta{ψ}}$ \,and\, $⟨G_λ,α⟩⊨\logic{\meta{ψ}⇒\meta{φ}}$}$,
  \item $⟨G_λ,α⟩⊨\swl{\logic{∃\meta{x}(\meta{φ})}}{\logic{\meta{φ}⇔\meta{ψ}}}
    \quad⇔\quad
    \text{$⟨G_λ,α[x\!↦\!v]⟩⊨φ$ \,for some\, $v∈\VG$}$,
  \item $⟨G_λ,α⟩⊨\swl{\logic{∀\meta{x}(\meta{φ})}}{\logic{\meta{φ}⇔\meta{ψ}}}
    \quad⇔\quad
    \text{$⟨G_λ,α[x\!↦\!v]⟩⊨φ$ \,for all\, $v∈\VG$}$,
  \item $⟨G_λ,α⟩⊨\swl{\logic{∃\meta{X}(\meta{φ})}}{\logic{\meta{φ}⇔\meta{ψ}}}
    \quad⇔\quad
    \text{$⟨G_λ,α[X\!↦\!U]⟩⊨φ$ \,for some\, $U⊆\VG$}$,
  \item $⟨G_λ,α⟩⊨\swl{\logic{∀\meta{X}(\meta{φ})}}{\logic{\meta{φ}⇔\meta{ψ}}}
    \quad⇔\quad
    \text{$⟨G_λ,α[X\!↦\!U]⟩⊨φ$ \,for all\, $U⊆\VG$}$,
  \end{itemize}
  for all\, $φ,ψ∈\MSO(Σ,Γ)$, $x∈\Vnode$ and $X∈\Vset$. Here,
  $α[x\!↦\!v]$ designates the variable assignment that coincides with
  $α$ except for $x$, which is mapped to $v$, and analogously,
  $α[X\!↦\!U]$ coincides with $α$ except for $X$, which is mapped to
  $U$.
\end{definition}

We will omit some unnecessary parentheses by following some of the
usual precedence rules for propositional connectives: $\lsymb{¬}$
binds stronger than $\lsymb{∨}$ and $\lsymb{∧}$, which in turn bind
stronger than $\lsymb{⇒}$ and $\lsymb{⇔}$.

\begin{definition}[MSO-Definability]
  The graph language $\Lf{Σ,Γ}(φ)$ \defd{defined} by an
  $\MSO(Σ,Γ)$-sentence $φ$, with respect to $⟨Σ,Γ⟩$, is the set of all
  $Σ$-labeled $Γ$-graphs for which the sentence is satisfied, i.e.,
  \begin{equation*}
    \defd{\Lf{Σ,Γ}(φ)} \coloneqq \bigl\{ G_λ∈Σ^{\clouded{Γ}} \bigm| G_λ⊨φ \bigr\}.
  \end{equation*}
  Every graph language that is defined by some MSO-sentence is called
  \defd{MSO-definable}. We denote by \defd{$\LL_\MSO$} the class of
  all such graph languages.
\end{definition}

An $\MSO(Σ,Γ)$-sentence $φ$ is \defd{equivalent} to an ADGA $\A$ over
$⟨Σ,Γ⟩$ if it defines the same graph language as $\A$ recognizes,
i.e., $\Lf{Σ,Γ}(φ)=\L(\A)$.

We now revisit two of the graph languages considered in
\cref{chap:adga}, and define them by MSO-sentences.

\begin{example}[Translation of $\sA{centric}$ to MSO-Logic] \label{ex:A_centric_MSO}
  We fix $Σ=\{\a,\b,\c\}$ and $Γ=\{\blank\}$. The following
  $\MSO(Σ,Γ)$-sentence $\sphi{centric}$ is equivalent to the ADGA
  $\sA{centric}$ from \cref{fig:ADGA_concentric_circles} (see
  \cref{ex:A_centric_language} for a discussion of the recognized
  graph language).
  \newcommand{\va}{v_{\hspace{-.2ex}\meta{\a}}}
  \begin{align*}
    \sphi{centric} \coloneqq \;
    &\logic{∀u,v\biggl( u\!\arr\!v
      \;\,⇒\; ¬\bigl(\lab{\b}u∧\lab{\b}v\bigr) ∧ 
      ¬\bigl(\lab{\c}u∧\lab{\c}v\bigr) \!\biggr) \;∧} \\[-.9ex]
    &\logic{∃\va\biggl( ∀u\Bigl(\bigl(\lab{\a}u \:\!⇔\:\! u\!=\!\va\bigr)
      \,∧\,\bigl(u\!\arr\!\va ∨ \va\!\arr\!u \,⇒\, \lab{\b}u\bigr) \Bigr) \;∧} \\[-1.7ex]
    &\hspace{20ex}\logic{∃u_1,u_2\Bigl(
      u_1\!\arr\!\va \,∧\, u_2\!\arr\!\va \,∧\, ¬\,u_1\:\!\!\!=\!u_2 \Bigr) \biggr)}
  \end{align*}
  The first line ensures that no two adjacent nodes are both
  $\b$-labeled or both $\c$-labeled. The other two lines specify the
  existence of the “center”, a node $v_\a$ such that
  \begin{itemize}[topsep=1ex,itemsep=0ex]
  \item $v_\a$ is the only $\a$-labeled node in the graph and has only
    $\b$-labeled nodes in its undirected neighborhood (second line),
    and
  \item $v_\a$ has at least two distinct incoming neighbors (third
    line).
  \end{itemize}
\end{example}

In the preceding example, we did not exploit the possibility of
quantifying over set variables. This makes $\sphi{centric}$ a
\emph{first}-order formula. The next example (slightly adapted from
\cite{CE12}) shows an application that requires second-order
quantification.

\begin{example}[3-Colorability] \label{ex:MSO_3_colorable}
  Let $Σ=Γ=\{\blank\}$. The following $\MSO(Σ,Γ)$-sentence
  $\dphi[color]{3}$ defines (with respect to $⟨Σ,Γ⟩$) the language of
  3-colorable graphs. It is thus equivalent to the ADGA
  $\dA[color]{3}$ from \cref{fig:ADGA_3_colorable}.
  \begin{align*}
    \hspace{-\mathindent}
    \dphi[color]{3} \coloneqq \logic{∃U_\mpik,U_\mherz,U_\mkreuz\biggl(}
    &\logic{∀u\Bigl(\bigl(u∈U_\mpik ∨ u∈U_\mherz ∨ u∈U_\mkreuz\bigr)
      \,∧\, ¬\bigl(u∈U_\mpik ∧ u∈U_\mherz\bigr) \,∧ \vphantom{\biggl(} } \hspace{-5ex} \\[-1.5ex]
    &\hspace{5ex}\logic{¬\bigl(u∈U_\mpik ∧ u∈U_\mkreuz\bigr)
      \,∧\, ¬\bigl(u∈U_\mherz ∧ u∈U_\mkreuz\bigr)\:\!\Big) \;∧} \\[-.9ex]
    &\logic{∀u,v\Big(u\!\arr\!v
      \;\,⇒\;\, ¬\bigl(u∈U_\mpik∧v∈U_\mpik\bigr) \,∧} \\[-1.5ex]
    &\hspace{11.2ex}\logic{¬\bigl(u∈U_\mherz∧v∈U_\mherz\bigr)
      \,∧\, ¬\bigl(u∈U_\mkreuz∧v∈U_\mkreuz\bigr)\:\!\Bigr)\,\biggr)} \hspace{-5ex}
  \end{align*}
  The existentially quantified set variables $\lsymb{U_\mpik}$,
  $\lsymb{U_\mherz}$ and $\lsymb{U_\mkreuz}$ represent the three
  possible colors. In the first two lines, we specify that the sets
  assigned to these variables form a partition of the set of nodes
  (possibly with empty components). The remaining two lines constitute
  the actual definition of a valid coloring: no two adjacent nodes
  share the same color, which means that adjacent nodes are in
  different sets.
\end{example}

Similarly to \cref{ex:A_centric_MSO,ex:MSO_3_colorable}, we could
translate all the ADGAs seen in the examples of \cref{chap:adga} to
equivalent MSO-sentences. In the next section, we generalize this for
arbitrary ADGAs.

\section{Equivalence of MSO-Logic and ADGAs} \label{sec:adga=mso}
We can now formally state and prove our main theorem.

\begin{theorem}[$\LL_\ADGA=\LL_\MSO$] \label{thm:adga=mso}
  A graph language is ADGA-recognizable \Iff it is
  MSO-definable. There are effective translations in both directions.
\end{theorem}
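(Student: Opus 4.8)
The plan is to prove the two inclusions $\LL_\MSO \subseteq \LL_\ADGA$ and $\LL_\ADGA \subseteq \LL_\MSO$ separately, following the classical Büchi–Elgot–Trakhtenbrot strategy but adapting it to the distributed, alternating setting. For the first inclusion I would proceed by structural induction on MSO-formulas, leaning on the closure properties of \cref{thm:closure} and on the projection construction of \cref{lem:projection}. For the second inclusion I would exploit the game-theoretic characterization of acceptance (\cref{lem:acc-win}), encoding the existence of a winning strategy for the automaton as a bounded alternation of second-order quantifiers over configurations.

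For $\LL_\MSO \subseteq \LL_\ADGA$, the first step is to eliminate node variables in favour of set variables, replacing each $x$ with a set variable $X$ constrained to be a singleton (a property that is both MSO-expressible and easily ADGA-checkable). Working then with formulas whose free variables are all set variables $X_1,\dots,X_n$, I would encode the variable assignment directly into the node labels: a subformula $\varphi$ is associated with the language, over the extended alphabet $\Sigma \times \{0,1\}^n$, of all labeled graphs whose $i$-th extra bit marks exactly the set assigned to $X_i$. The induction then runs as follows: atomic formulas correspond to small explicitly constructed ADGAs over the extended alphabet; the connectives $\neg$, $\vee$, $\wedge$ are handled by complementation, union and intersection (\cref{thm:closure}), after padding so that the alphabets of the two operands agree; and $\exists X_i(\varphi)$ is handled by the projection of \cref{lem:projection} that forgets the $i$-th bit. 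Universal set quantification is obtained via $\forall = \neg\,\exists\,\neg$. This direction is essentially routine once the closure toolbox is in place.

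For $\LL_\ADGA \subseteq \LL_\MSO$, the key observation is that an ADGA has \emph{constant} length, so every play of the associated game has bounded depth. Given an ADGA $\A$, I would first normalize it, using \cref{rem:nonblocking} and \cref{rem:ANF}, so that it is nonblocking and in alternating normal form; then every run reaches a permanent configuration within $\len(\A)$ rounds, and the level types strictly alternate. I would encode a configuration by a tuple of set variables $(U_q)_{q \in Q}$, where $U_q$ collects the nodes currently in state $q$. The target sentence $\varphi_\A$ asserts that the automaton (player~$\EE$) has a winning strategy in $\J(\A,G_\lambda)$, which by \cref{lem:acc-win} is equivalent to acceptance. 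Concretely, $\varphi_\A$ is a quantifier prefix of length $\len(\A)$ that alternates $\exists$ (for existential, automaton-owned levels) and $\forall$ (for universal, pathfinder-owned levels) over the configuration-encoding tuples, followed by a quantifier-free matrix stating that (i) the round-$0$ configuration is the initial one determined by $\sigma \circ \lambda$, (ii) each round-to-round step is a legal global transition, and (iii) the permanent configuration eventually reached is accepting, i.e.\ the set of occurring permanent states lies in $\F$. All three conditions are MSO-expressible: legality of a local transition is a finite boolean combination, over the fixed state set, of membership assertions referring to a node's own state-set and to the states occurring among its incoming neighbors (using the edge relations); acceptance is a finite disjunction over $\F$, encoded by which permanent-state sets are (non)empty.

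The main obstacle is the second direction, and within it the faithful encoding of the alternating acceptance condition. The subtleties are: matching the quantifier alternation to the level types of the normalized automaton, so that existential and universal choices in the game correspond to $\exists$ and $\forall$ over the next configuration; correctly expressing the local transition relation $\delta$, including the aggregation of incoming-neighbor states into a single set; and accommodating permanent states that are reached before the final level, which self-loop and must be carried unchanged through the remaining rounds. The boundedness of $\len(\A)$ is precisely what makes the whole encoding possible with a fixed, finite quantifier prefix; without it, no such direct translation of a run or strategy into a single MSO-sentence would be available.
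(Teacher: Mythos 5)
Your proposal is correct and follows essentially the same route as the paper: the direction $\LL_\MSO\subseteq\LL_\ADGA$ is done by structural induction using the closure properties of \cref{thm:closure} and the projection construction of \cref{lem:projection} (the paper keeps first-order node variables and handles $\logic{∃\meta{x}(\meta{ψ})}$ via an auxiliary ``exactly one'' automaton rather than pre-reducing to singleton set variables, but that is the same gadget in disguise), and the direction $\LL_\ADGA\subseteq\LL_\MSO$ encodes, exactly as you describe, the existence of a winning strategy for the automaton in $\J(\A,G_λ)$ by a length-$\len(\A)$ alternation of second-order quantifiers over configuration-encoding set variables, with legality of global transitions and the acceptance condition expressed by first-order (not quantifier-free, since one must quantify over nodes and their incoming neighbors) subformulas. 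The only cosmetic deviation is that the paper needs neither the nonblocking nor the alternating normal form for the second direction, since early-reached permanent configurations are carried through automatically by the self-looping transition function.
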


We divide the proof of \cref{thm:adga=mso} into
\cref{lem:adga<mso,lem:adga>mso}, which correspond to the two
directions of translation.

\begin{lemma}[$\LL_\ADGA⊆\LL_\MSO$] \label{lem:adga<mso}
  For every ADGA $\A$ over $⟨Σ,Γ⟩$, we can effectively construct an
  $\MSO(Σ,Γ)$-sentence $\phiA$ that is equivalent to $\A$, i.e.,
  \begin{equation*}
    \Lf{Σ,Γ}(\phiA)=\L(\A).
  \end{equation*}
\end{lemma}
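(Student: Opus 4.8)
The plan is to encode directly in MSO the statement that the automaton player has a winning strategy in the game $\J(\A,G_λ)$; by \cref{lem:acc-win} (together with the determinacy of these bounded-depth games, \cref{lem:either-win}, which lets memoryless strategies suffice) this is equivalent to $\A$ accepting $G_λ$. The fact making the encoding possible is that runs, hence plays, have \emph{constant} length: after $\ell\coloneqq\len(\A)$ rounds every node has reached a permanent state. First I would normalize $\A$ via \cref{rem:nonblocking,rem:ANF}, so that without loss of generality $\A$ is nonblocking and in alternating normal form. Nonblocking guarantees that every play reaches a permanent configuration, and since permanent states self-loop I may pad each play to length exactly $\ell$ by letting permanent configurations persist; alternating normal form makes the level types strictly alternate, which will yield a clean alternation of quantifiers.

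The encoding represents the configuration at round $t$ by a tuple of set variables $\bar X^{t}=(X^{t}_q)_{q\in Q}$, one set per state, holding the nodes in state $q$ at round $t$. Round $0$ needs no variables: node $v$ is in state $q$ at round $0$ exactly when $σ(λ(v))=q$, expressible as the finite disjunction over labels $a$ with $σ(a)=q$ of the label atom for $a$. I would then write three groups of $\MSO(Σ,Γ)$-formulas, all of constant size because $Q$, $Γ$, $(2^Q)^Γ$ and $\F\subseteq2^{Q_\P}$ are finite: (i) a \emph{partition} formula asserting that $(X^{t}_q)_q$ genuinely partitions $\VG$; (ii) a \emph{transition} formula asserting that the step from round $t-1$ to round $t$ respects $δ$ at every node; and (iii) an \emph{acceptance} formula asserting that the set of states occurring at round $\ell$ lies in $\F$, written as $\bigvee_{F\in\F}$ over the conditions ``exactly the states of $F$ occur''.

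The assembled sentence $\phiA$ has a quantifier prefix mirroring the level structure: the tuple $\bar X^{t}$ is bound by $∃$ when level $t-1$ is existential (the automaton chooses $\kappa_t$) and by $∀$ when level $t-1$ is universal (the pathfinder chooses $\kappa_t$), with the partition and transition constraints conjoined under each $∃$ and used as a hypothesis (placed before $⇒$) under each $∀$, so that the quantified tuples range exactly over valid successor configurations; the innermost matrix is the acceptance formula. Schematically, $\phiA=\mathsf Q_1\bar X^{1}\,[\,\chi_1\,\square_1\,\mathsf Q_2\bar X^{2}\,[\,\chi_2\,\square_2\cdots\mathsf Q_\ell\bar X^{\ell}\,[\,\chi_\ell∧\mathrm{Acc}\,]\,]\,]$, where $\chi_t$ combines the partition and transition constraints for round $t$ and $\square_t$ is $∧$ or $⇒$ according as $\mathsf Q_t$ is $∃$ or $∀$. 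By construction, truth of $\phiA$ on $G_λ$ says precisely that the automaton can make its existential moves so that, against every pathfinder response, the play ends in an accepting permanent configuration, i.e. that the automaton has a winning strategy in $\J(\A,G_λ)$; by \cref{lem:acc-win} this is equivalent to $G_λ∈\L(\A)$.

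The step I expect to be the main obstacle is the transition formula (ii). To express that the round-$t$ state of $v$ lies in $δ\bigl(\kappa_{t-1}(v),\langle\{\kappa_{t-1}(u)\mid u\arrG{γ}v\}\rangle_{γ}\bigr)$, I would tabulate $δ$ over its finite domain: for each current state $p∈Q$ and each neighborhood type $\langle S_γ\rangle_{γ}∈(2^Q)^Γ$, I write an antecedent saying ``$v$ is in $p$ at round $t-1$ and, for every $γ$, the set of states appearing among the incoming $γ$-neighbors of $v$ is exactly $S_γ$'', the latter being a finite conjunction of formulas $∃u(u\arrG{γ}v∧u∈X^{t-1}_{q})$ and their negations; this antecedent then implies that $v$ lies in some $X^{t}_{q}$ with $q∈δ(p,\langle S_γ\rangle)$. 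This is conceptually routine but notationally heavy, and one must check that, under the padding convention, self-loops on already-permanent nodes are handled correctly, which they are since $δ(q,\S)=\{q\}$ for $q∈Q_\P$. With (ii) in place, the remaining correctness argument is a direct appeal to the game-theoretic characterization.
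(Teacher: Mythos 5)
Your proposal is correct and takes essentially the same approach as the paper's proof: configurations are encoded round by round as tuples of set variables, the transition function is tabulated over the finitely many state/neighborhood-type pairs, the quantifier prefix alternates according to the level types, and correctness is reduced to \cref{lem:acc-win}. The only cosmetic differences are that the paper skips the normalization step (blocking configurations and consecutive levels of the same type are handled directly, and padding by permanent self-loops works automatically) and requires only that no node be in two states at once rather than a full partition.
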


\begin{proof}
  Let $\A=⟨Σ,Γ,\Q,σ,δ,\F⟩$. We have to construct an
  $\MSO(Σ,Γ)$-sentence $\phiA$, such that any $G_λ∈Σ^{\clouded{Γ}}$
  satisfies $\phiA$ \Iff it is accepted by $\A$. Thus, $\phiA$ must
  somehow encode the existence of an accepting run of $\A$ on $G_λ$. A
  direct approach might seem tricky at first, since a run is a
  nontrivial graph itself, whose nodes are not in the domain of
  discourse that is referred to from within $\phiA$. We can simplify
  the problem by taking again the game-theoretic point of view
  introduced in \cref{sec:game-theo}. By \cref{lem:acc-win}, $\A$
  accepts $G_λ$ \Iff the automaton has a winning strategy in the
  associated game $J=\J(\A,G_λ)$. We will exploit this equivalence,
  and construct an MSO-sentence expressing that the automaton can win
  the game, no matter how the pathfinder chooses to play.

  Throughout this proof, we use the abbreviations
  $n\coloneqq\len(\A)$, for the length of the automaton, and
  $Q_i\coloneqq\lev_i(\A)∪Q_\P$, for the set of states that may occur
  in a configuration reachable by $\A$ in round $i$, where
  $0≤i≤n$. Our sentence $\phiA$ will contain the set variables
  $\lsymb{U_{\meta{i},\meta{q}}}∈\Vset$, for every round $i≥1$ and
  state $q∈Q_i$. The intended meaning of the subformula
  \noheight{$\logic{v∈U_{\meta{i},\meta{q}}}$} is the following:
  given a prefix $π=G_{κ_0}\cdots G_{κ_i}$ of a play in $J$, the node
  $v∈\VG$ assigned to the variable $\lsymb{v}∈\Vnode$ is in state $q$
  in round $i$, i.e., $κ_i(v)=q$. An entire play will thus be
  represented by an assignment to the set variables
  $\lsymb{U_{\meta{i},\meta{q}}}$. Note that we do not need set
  variables for round $0$, since the initial configuration
  $G_{κ_0}=G_{σ∘λ}$ is always the same. We abbreviate by $\wh{X}_i$
  the list of set variables for round $i$, i.e., $\wh{X}_0 \coloneqq
  ⟨\,⟩$ and $\wh{X}_i \coloneqq
  ⟨\lsymb{U_{\meta{i},\meta{q}}}⟩_{q∈Q_i}$, for $1≤i≤n$.

  \newcommand{\Fstate}[2]{\text{$\varphi_{#1\colon\! #2 \vphantom{i}}^{\hspace{.05ex}\textnormal{state\vphantom{g}}}$}}
  \newcommand{\Fneigh}[2]{\text{$\varphi_{#1\colon\! #2 \vphantom{i}}^{\hspace{.05ex}\textnormal{neigh\vphantom{g}}}$}\hspace{.05ex}}
  \newcommand{\Flegal}[1]{\text{$\varphi_{#1 \vphantom{i}}^{\hspace{.05ex}\textnormal{legal\vphantom{g}}}$}\hspace{-.1ex}}
  \newcommand{\Fwin}[1]{\text{$\varphi_{#1 \vphantom{i}}^{\hspace{.05ex}\textnormal{win\vphantom{g}}}$}\hspace{-.1ex}}
  \newcommand{\Shiddenhat}{\raisebox{0ex}[1.3ex][0ex]{$\S$}}
  \newcommand{\ShiddenhatScr}{\raisebox{0ex}[1.3ex][0ex]{$\scriptstyle \S$}}

  We now construct $\phiA$ bottom-up, starting with the simple
  building block mentioned above. For $0\!≤\!i\!≤\!n$,\;
  $q\!∈\!Q_i$,\:\! and $x\!∈\!\Vnode$, the subformulas
  $\Fstate{i}{q}[x,\wh{X}_i]$ express that in round $i$, the node
  assigned to $x$ is in state $q$. They are
  defined by \\[-1.5ex]
  \begin{equation*}
    \Fstate{i}{q}[x,\wh{X}_i] \;\coloneqq\;
    \begin{cases}
      \logic{\displaystyle \hspace{-2ex}\bigvee_{\hspace{2ex}\meta{a∈Σ\colon\! σ(a)=q}} \hspace{-5ex}\lab{\meta{a}}\meta{x}\hspace{1.3ex}}
      & \text{if\, $i=0$}, \\[4ex]
      \logic{\:\meta{x}∈U_{\meta{i},\meta{q}} \hspace{2.95ex}} & \text{if\, $1≤i≤n$}.
    \end{cases}
  \end{equation*}

  Next, for\:\! $0\!≤\!i\!≤\!n$,\; $q\!∈\!Q_i$,\:\! and\,
  $\Shiddenhat\!=\!⟨S_γ⟩_{γ∈Γ}\!∈\!(2^{Q_i})^Γ$\!, the
  \mbox{subformulas}
  $\Fneigh{i}{q,\ShiddenhatScr}[\lsymb{v},\wh{X}_i]$ express that in
  round $i$, the node assigned to $\lsymb{v}$ receives the information
  $⟨q,\Shiddenhat⟩$ from its closed incoming neighborhood. This is
  ensured by checking that the local state is $q$, and for each $γ∈Γ$,
  every state in $S_γ$ is seen on some $γ$-edge, and any state seen on
  a $γ$-edge is in $S_γ$:
  \begin{align*}
    \Fneigh{i}{q,\ShiddenhatScr}[\lsymb{v},\wh{X}_i] \;\coloneqq\;\;
    &\logic{\meta{\Fstate{i}{q}[\lsymb{v},\wh{X}_i]} \;\,
      ∧\; \smashoperator{\bigwedge_{\meta{γ∈Γ,\:p∈S_γ}}}\, ∃u\Bigl(\meta{\Fstate{i}{p}[\lsymb{u},\wh{X}_i]}
      \,∧\, u\xarr{\meta{γ}}v\Bigr) \vphantom{\biggr)} } \\[-1ex]
    &\phantom{\Fstate{i}{q}[\lsymb{v},\wh{X}_i] \;\,}
      \logic{\:∧\;\, ∀u \biggl(\, \displaystyle\bigwedge_{\meta{γ∈Γ}} \!\Bigl( u\xarr{\meta{γ}}v \,⇒\:\!
      \smashoperator{\bigvee_{\meta{p∈S_γ}}} \meta{\Fstate{i}{p}[\lsymb{u},\wh{X}_i]} \Bigr)\biggr)}.
  \end{align*}

  With building blocks for these local properties at our disposal, we
  can now proceed to more global statements.  For the remainder of
  this proof, we set $G_{κ_0}=G_{σ∘λ}$, and for $1≤i≤n$, we refer by
  $G_{κ_i}$ to the configuration represented by some given assignment
  to the set variables in $\wh{X}_i$. 

  The meaning of the subformula $\Flegal{i}[\wh{X}_{i-1},\wh{X}_i]$ is
  that $G_{κ_i}$ is a legal successor configuration of
  $G_{κ_{i-1}}$. Two properties have to be checked: On the one hand,
  it must hold that $G_{κ_i}∈δ^\cloud(G_{κ_{i-1}})$, or equivalently,
  that for every node $v∈\VG$, if $v$ receives the information
  $⟨p,\Shiddenhat⟩$ in round $i-1$, then it is in some state
  $q∈δ(p,\Shiddenhat)$ in round $i$. On the other hand, the given
  assignment to the set variables in $\wh{X}_i$ must indeed represent
  a valid configuration, which in particular means that a node cannot
  be in several states at once. This leads to the definition
  \newcommand{\PossibleNeighborhoods}{ \hspace{3ex}
    \begin{subarray}{l}
      \swr{\scriptstyle p\:\!}{\S}∈Q_{i-\one}\:\!, \\
      \S∈(\two^{Q_{i-\one}})^Γ
    \end{subarray}
  }
  \begin{multline*}
    \Flegal{i}[\wh{X}_{i-1},\wh{X}_i] \;\coloneqq\;\;
    \logic{∀v \Biggl( \quad\:\!
      \smashoperator{\bigwedge_{\meta{\PossibleNeighborhoods}}}\: \Bigl( \meta{\Fneigh{i-1}{p,\ShiddenhatScr}[\lsymb{v},\wh{X}_{i-\one}]} \,⇒\:\!
      \smashoperator{\bigvee_{\meta{q∈δ(p,\S)}}} \meta{\Fstate{i}{q}[\lsymb{v},\wh{X}_i]} \Bigr)} \\[-1.3ex]
    \hspace{31ex}
    \logic{\,\,∧\;\, \smashoperator{\bigwedge_{\meta{q,r∈Q_i\colon q≠r}}}
      ¬\Bigl(\meta{\Fstate{i}{q}[\lsymb{v},\wh{X}_i]}\,∧\,\meta{\Fstate{i}{r}[\lsymb{v},\wh{X}_i]} \Bigr)\; \Biggr)}.
  \end{multline*}

  We now come to our goal of expressing that the automaton has a
  winning strategy in $J$. As in the proof of \cref{lem:either-win},
  we consider every position $G_κ$ in $J$ as the starting position of
  a subgame $J_κ$, consisting of $G_κ$ and all its descendant
  configurations. For every round $i$, we construct a subformula
  $\Fwin{i}\:\![\wh{X}_i]$ expressing that the automaton has a winning
  strategy in the subgame $J_{κ_i}$.

  In the last round $n$, the reached configuration can only be
  permanent, i.e., $G_{κ_n}∈(Q_\P)^{\clouded{Γ}}$. Hence, the
  automaton has a winning strategy in $J_{κ_n}$ \Iff $G_{κ_n}$ is
  accepting. We check that there is an accepting set $F∈\F$, such that
  each state $q∈F$ occurs in $G_{κ_n}$, and only such states occur:
  \begin{equation*}
    \Fwin{n}[\wh{X}_n] \;\coloneqq\; \logic{\displaystyle \bigvee_{\meta{F∈\F}}\! \Biggl( \, \smashoperator[r]{\bigwedge_{\meta{q∈F}}}
      ∃v\Bigl(\meta{\Fstate{n}{q}[\lsymb{v},\wh{X}_n]}\Bigr)
      \:∧\; ∀v\Bigl(\smashoperator[r]{\bigvee_{\meta{q∈F}}}\meta{\Fstate{n}{q}[\lsymb{v},\wh{X}_n]} \Bigr) \Biggr)}.
  \end{equation*}

  Working our way backwards, we recursively define the formulas for
  previous rounds $i-1$, where $n≥i≥1$. We have to distinguish two
  cases. If level $i-1$ is existential in $\A$, then the automaton is
  the player who has to make a move from position $G_{κ_{i-1}}$. Thus,
  it has a winning strategy in $J_{κ_{i-1}}$ \Iff there is a legal
  successor configuration $G_{κ_i}$ of $G_{κ_{i-1}}$, for which the
  automaton has a winning strategy in the corresponding subgame
  $J_{κ_i}$. This is expressed by
  \begin{equation*}
    \Fwin{i-1}\:\![\wh{X}_{i-\one}] \;\coloneqq\; \logic{\displaystyle\bigexists \meta{\wh{X}_i}
      \biggl(\meta{\Flegal{i}[\wh{X}_{i-\one},\wh{X}_i]}
      \;∧\; \meta{\Fwin{i}[\wh{X}_i]} \biggr)}.
  \end{equation*}
  Otherwise, level $i-1$ is universal, which means that the pathfinder
  has to make a move. Then the automaton has a winning strategy in
  $J_{κ_{i-1}}$ \Iff it has a winning strategy in every subgame that
  starts at a legal successor configuration of $G_{κ_{i-1}}$. The
  corresponding formula is analogous to the previous one:
  \begin{equation*}
    \Fwin{i-1}\:\![\wh{X}_{i-\one}] \;\coloneqq\; \logic{\displaystyle\bigforall \meta{\wh{X}_i}
      \biggl(\meta{\Flegal{i}[\wh{X}_{i-\one},\wh{X}_i]}
      \:⇒\: \meta{\Fwin{i}[\wh{X}_i]} \biggr)}.
  \end{equation*}
  Note that these formulas also cover the cases where a permanent
  configuration is reached earlier than round $n$. If $G_{κ_i}$ is
  permanent, for some $i<n$, then $δ^\cloud(G_{κ_i})=\{G_{κ_i}\}$,
  which means that $\Flegal{i+1}[\wh{X}_i,\wh{X}_{i+1}]$ is satisfied
  precisely when $κ_{i+1}=κ_i$. Proceeding inductively, we get that,
  regardless of whether level $i$ is existential or universal,
  $\Fwin{i}[\wh{X}_i]$ is satisfied \Iff $\Fwin{n}[\wh{X}_n]$ is
  satisfied when interpreting the set variables in $\wh{X}_n$ such
  that $κ_n=κ_i$. In other words, plays of length less than $n$ are
  implicitly extended to length $n$ by repeating the last
  configuration, and consequently the acceptance condition is always
  checked using the subformula $\Fwin{n}[\wh{X}_n]$.

  We have thus achieved our goal. Since the subgame $J_{κ_0}$ is equal
  to $\J(\A,G_λ)$, the desired MSO-sentence is
  \begin{equation*}
    \phiA \coloneqq\, \Fwin{0}[\wh{X}_0] = \Fwin{0}[\:].
    \qedhere
  \end{equation*}
\end{proof}

\hypertarget{engelfriet-charact}{} 
In \cite{Eng91}, Engelfriet characterized the class of MSO-definable
graph languages as the smallest class that contains certain elementary
graph languages\footnote{According to Engelfriet's definition, a graph
  language $L$ is elementary \Iff there are nonempty finite alphabets
  $Σ$ and $Γ$, such that either $L=Σ^{\clouded{Γ}}$, or $L=\bigl\{
  G_λ∈Σ^{\clouded{Γ}} \bigm| ∃u,v∈\VG\colon λ(u)=a \,∧\, u\arrG{γ}v
  \,∧\, λ(v)=b \bigr\}$, for some fixed $a,b∈Σ$ and $γ∈Γ$.} and is
closed under boolean set operations and under projection. His
elementary graph languages can be easily recognized by ADGAs, and
thus, together with \cref{thm:closure} and \cref{lem:adga<mso}, this
characterization of $\LL_\MSO$ implies our main theorem.
Nevertheless, we give a self-contained proof of the following lemma,
in order to provide a direct translation from MSO-formulas to
ADGAs. Some of the ideas are adapted from \cite{Eng91}.

\begin{lemma}[$\LL_\ADGA⊇\LL_\MSO$] \label{lem:adga>mso}
  For every $\MSO(Σ,Γ)$-sentence $φ$, we can effectively construct an
  ADGA $\A_φ$ over $⟨Σ,Γ⟩$ that is equivalent to $φ$, i.e.,
  \begin{equation*}
    \L(\A_φ)=\Lf{Σ,Γ}(φ).
  \end{equation*}
\end{lemma}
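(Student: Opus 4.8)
The plan is to follow the classical Büchi--Elgot--Trakhtenbrot induction on the structure of $φ$, adapted to the distributed setting. Since the connectives $∧$, $⇒$, $⇔$ and the quantifiers $∀x$ and $∀X$ are definable from $¬$, $∨$, $∃x$ and $∃X$, I may assume that $φ$ is built from atomic formulas using only these primitives. To carry free variables through the induction, I encode a variable assignment into the node labels: for a finite set $W$ of variables I work over the extended alphabet $Σ_W \coloneqq Σ×\{0,1\}^W$, where the $W$-component of a node's label records, for each variable in $W$, whether the node belongs to it (for a set variable) or \emph{is} its value (for a node variable). Such a labeling encodes a genuine assignment precisely when each node-variable coordinate is carried by exactly one node; I call these labelings \emph{valid}. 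The statement I actually prove, by induction on $φ$, is the generalization: for every $\MSO(Σ,Γ)$-formula $φ$ with $\free(φ)⊆W$, there is an ADGA $\A(φ,W)$ over $⟨Σ_W,Γ⟩$ whose language is exactly the set of valid labelings $G_λ∈(Σ_W)^{\clouded{Γ}}$ for which the corresponding $Σ$-labeled graph and encoded assignment satisfy $φ$. Taking $W=∅$ for a sentence then yields $\A_φ \coloneqq \A(φ,∅)$ with $\L(\A_φ)=\Lf{Σ,Γ}(φ)$.

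For the base case I construct, for each atomic formula, a constant-round ADGA over $Σ_W$, first intersecting with a fixed automaton that verifies that the labeling is valid (enforcing the ``exactly one node'' condition for every node variable, which is recognizable just as the uniqueness of the $\a$-node is checked by $\sA{centric}$). The formulas asserting that a node has a given label, that two node variables denote the same node, and that a node lies in a set are purely local: each node inspects its own $W$-component and the automaton accepts when the relevant marked node passes the test, which needs a single round. The atomic formula asserting a $γ$-edge from $x$ to $y$ is handled through state aggregation: in one round the $y$-marked node observes the set of states occurring among its incoming $γ$-neighbors and checks whether the $x$-marked state is among them (a self-loop being detected the same way when $x$ and $y$ coincide), and the automaton accepts when some node reports success. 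Hence every atomic language is ADGA-recognizable.

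The inductive step is where the closure properties of \cref{thm:closure} do the work. For $¬φ$ I apply complementation (\cref{lem:complementation}) to $\A(φ,W)$ and then intersect with the validity-checking automaton, so that the result accepts exactly the valid labelings that are \emph{non}-models. For $φ∨ψ$ I first bring both subautomata to a common alphabet $Σ_W$ with $W=\free(φ)∪\free(ψ)$ by \emph{cylindrification}, extending each automaton to ignore the extra variable coordinates while intersecting in the validity constraints for the newly added variables, and then take their union (\cref{lem:union-intersect}). For $∃x(φ)$ (and analogously $∃X(φ)$) I use projection (\cref{lem:projection}): the map $h\colon Σ_W→Σ_{W\setminus\{x\}}$ forgetting the coordinate of the bound variable carries the language of $\A(φ,W)$ exactly onto the valid models of $∃x(φ)$, since a labeling over the smaller alphabet lies in the $h$-image precisely when it extends to some valid $W$-labeling satisfying $φ$. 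Composing these constructions along the syntax tree of $φ$ produces $\A(φ,W)$, and specializing to $W=∅$ finishes the proof.

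The main obstacle is not any single step but the bookkeeping around the encoding: cylindrification and the validity (singleton) constraints must be arranged so that complementation, union and projection are always applied to automata over a compatible alphabet, and in particular so that negation complements only \emph{within} the valid labelings rather than over all of $(Σ_W)^{\clouded{Γ}}$. The second delicate point is genuinely distributed: both the edge relation and the uniqueness of node-variable markers must be expressed through the state-aggregation and accepting-set mechanism rather than by inspecting individual neighbors, but, as the earlier examples show, each is realizable in a constant number of rounds.
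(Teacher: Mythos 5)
Your proposal is correct and follows essentially the same route as the paper's proof: structural induction on $φ$, encoding of variable assignments into an extended node alphabet, purely local (or one-round) automata for the atomic formulas, and the closure properties of \cref{thm:closure} for the connectives and quantifiers. The only substantive difference is where the ``exactly one node per node variable'' condition is enforced: you maintain the stronger invariant that the constructed automaton's language is \emph{exactly} the set of valid encodings of models (paying for it with validity intersections at the negation and cylindrification steps, and getting a clean projection step for free), whereas the paper keeps the weaker invariant of correctness only on valid encodings and instead inserts a dedicated singleton-checking automaton $\dA[one]{x}$ just before projecting out a node variable $x$ --- both arrangements are sound, and the work they require is essentially the same.
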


\begin{proof}
  It seems natural to prove the claim by induction on the structure of
  $\MSO(Σ,Γ)$-formulas. This forces us to deal with formulas
  containing free occurrences of variables. The truth of such a
  formula $φ$ is evaluated with respect to a labeled graph
  $G_λ∈Σ^{\clouded{Γ}}$ and a variable assignment\, $α\colon
  \free(φ)→\VG ∪ 2^\VG$. We have thus to represent $⟨G_λ,α⟩$ as a
  valid input for an ADGA. This can be done by encoding $α$ into the
  node labels. To this end, we define the inverse function $α^{-1}$ as
  the labeling that assigns to each node $v∈\VG$ the set of variables
  that $α$ associates with $v$, i.e.,
  \begin{align*}
    α^{-1} \colon\: \VG &→ 2^{\free(φ)} \\
    v &↦ \bigl\{x∈\Vnode \bigm| v=α(x)\bigr\} ∪ \bigl\{X∈\Vset \bigm| v∈α(X)\bigr\}.
  \end{align*}
  With this, $⟨G_λ,α⟩$ can be represented as the labeled graph
  $G_{λ×α^{-1}}$ whose labeling is given by
  \begin{align*}
    λ\!×\!α^{-1} \colon\: \VG &→ Σ×2^{\free(φ)} \\
    v &↦ \bigl\langle λ(v),\:α^{-1}(v) \bigr\rangle.
  \end{align*}
  Using this encoding, we generalize the claim of the lemma as
  follows: For any $\MSO(Σ,Γ)$-formula $φ$, there is an effectively
  constructible ADGA $\A_φ$, such that for every $G_λ∈Σ^{\clouded{Γ}}$
  and variable assignment\, $α\colon\!\free(φ)→\VG ∪ 2^\VG$,
  \begin{equation*}
    G_{λ×α^{-1}}∈\L(\A_φ) \quad \text{\Iff} \quad\! ⟨G_λ,α⟩ ⊨ φ.
  \end{equation*}
  If $φ$ is a sentence, i.e., if\, $\free(φ)=∅$, we identify
  $G_{λ×α^{-1}}$ and $⟨G_λ,α⟩$ with $G_λ$. Hence, the statement above
  does indeed imply the lemma.

  We now prove the generalized claim by structural induction on
  $φ$. In each case, we construct a suitable ADGA
  $\A_φ=⟨Σ\!×\!2^{\free(φ)},Γ,\Q,σ,δ,\F⟩$.
  \begin{itemize}
  \item[(\texttt{BC})] We start with the base cases, in which $φ$ is
    an atomic formula.

    For $b∈Σ$,\: $x,y∈\Vnode$ and $X∈\Vset$, the truth of the formulas
    \noheight{$\logic{\lab{\meta{b}}\meta{x}}$},\:
    $\logic{\meta{x}=\meta{y}}$ and $\logic{\meta{x}∈\meta{X}}$
    can be evaluated locally by an ADGA, i.e., without communication
    between the nodes. If $φ$ is equal to such a formula, we define
    the states of $\A_φ$ as
    \begin{equation*}
      Q_\EE=∅, \quad Q_\AA=∅ \quad \text{and} \quad Q_\P=\{\qyes,\qno,\qmaybe\}.
    \end{equation*}
    The intention here is that the node assigned to $x$ (or $y$) will
    answer by “yes” or “no”, while the other nodes remain
    undecided. The automaton then accepts the input \Iff the affected
    node answers positively, i.e.,
    \begin{equation*}
      \F = \bigl\{\{\qyes\},\{\qyes,\qmaybe\}\bigr\}.
    \end{equation*}
    Since all the states are permanent, the transition function is
    already defined implicitly. It remains to specify, for each case,
    the initialization function which directly computes the answer of
    each node. For every $⟨a,M⟩ ∈ Σ×2^{\free(φ)}$,
    \begin{itemize}[topsep=1ex,itemsep=1ex]
    \item if\, $φ =\, \logic{\lab{\meta{b}}\meta{x}}$,\, then \\
      \phantom{\quad} $σ(⟨a,M⟩)=
      \begin{cases}
        \qyes & \text{if  $a=b$ and $M=\{x\}$}, \\
        \qno & \text{if $a≠b$ and $M=\{x\}$}, \\
        \qmaybe & \text{otherwise},
      \end{cases}$
    \item if\, $φ =\, \logic{\meta{x}=\meta{y}}$,\, then \\
      \phantom{\quad} $σ(⟨a,M⟩)=
      \begin{cases}
        \qyes & \text{if $M=\{x,y\}$}, \\
        \qno & \text{if $M=\{x\}$ or $M=\{y\}$,\: with $x≠y$}, \\
        \qmaybe & \text{otherwise},
      \end{cases}$
    \item if\, $φ =\, \logic{\meta{x}∈\meta{X}}$,\, then \\
      \phantom{\quad} $σ(⟨a,M⟩)=
      \begin{cases}
        \qyes & \text{if $M=\{x,X\}$}, \\
        \qno & \text{if $M=\{x\}$}, \\
        \qmaybe & \text{otherwise}.
      \end{cases}$
    \end{itemize}

    The last possible base case is when $φ =
    \logic{\meta{x}\xarr{\meta{τ}}\meta{y}}$,\, with $x,y∈\Vnode$ and
    $τ∈Γ$. To evaluate the truth of such a formula, an ADGA needs one
    communication round, after which the node assigned to $y$ can
    check whether it has received a message from the node assigned to
    $x$ through a $τ$-edge. Then, each node gives a local answer, and
    acceptance is decided as in the previous cases. Accordingly, we
    define the components of $\A_φ$ as follows:
    \begin{itemize}[topsep=1ex,itemsep=.2ex]
    \item $Q_\EE=\{q_x,q_y,q_{x,y}\}$, \quad $Q_\AA=∅$, \quad $Q_\P=\{\qyes,\qno,\qmaybe\}$,
    \item $σ(⟨a,M⟩)=
      \begin{cases}
        q_x & \text{if $M=\{x\}$}, \\
        q_y & \text{if $M=\{y\}$}, \\
        q_{x,y} & \text{if $M=\{x,y\}$,\: with $x≠y$}, \\
        \qmaybe & \text{otherwise},
      \end{cases}$ \\
      for every $⟨a,M⟩ ∈ Σ\!×\!2^{\{x,y\}}$,
    \item $δ(q,\S) =
      \begin{cases}
        \{\qmaybe\} & \text{if $q=q_x$}, \\
        \{\qyes\} & \text{\parbox[t]{.5\textwidth}{\swl{\text{if}}{\text{or}} $q=q_{y\phantom{,x}}$ and $\swr{q_x}{q_{x,y}}∈S_τ$, \\[-.4ex]
            or $q=q_{x,y}$ and $q_{x,y}∈S_τ$,}} \\[2.8ex]
        \{\qno\} & \text{otherwise}
      \end{cases}$ \\
      \quad for every $q∈Q_\N$ and $\S=⟨S_γ⟩_{γ∈Γ}∈(2^Q)^Γ$\!,
    \item $\F = \bigl\{\{\qyes\},\{\qyes,\qmaybe\}\bigr\}$.
    \end{itemize}
    Note that the transition function is deterministic, hence the
    choice of whether a nonpermanent state is existential or universal
    is arbitrary.

  \item[(\texttt{IS})] We now turn to the induction step, for which
    most of the work has already been done by proving the closure
    properties of ADGA-recogniz\-able graph languages
    (\cref{thm:closure}). In the following, let $ψ$, $ψ_1$ and $ψ_2$
    be $\MSO(Σ,Γ)$-formulas that satisfy the induction hypothesis with
    the ADGAs $\A_ψ$, $\A_{ψ_1}$ and $\A_{ψ_2}$, respectively.

    If $φ=\logic{¬\meta{ψ}}$, by \cref{lem:complementation}, it
    suffices to define $\A_φ=\cA_ψ$.

    Similarly, if $φ=\logic{\meta{ψ_\one}∨\meta{ψ_\two}}$, we can use
    the union construction from \cref{lem:union-intersect}. However,
    we must be careful because that construction can only be applied
    on automata that share the same node alphabet. If
    $\free(ψ_1)≠\free(ψ_2)$, we have to extend the node alphabets and
    initialization functions of $\A_{ψ_1}$ and $\A_{ψ_2}$, such that
    each automaton ignores the MSO-variables that are only relevant to
    the other one (as opposed to simply rejecting any input graph that
    contains unknown symbols). For instance, if $⟨a,M⟩$ is a node
    label, and $x∈\free(ψ_2)\setminus\free(ψ_1)$, then the extended
    version of $\A_{ψ_1}$ will initialize a node labeled with
    $⟨a,M\!∪\!\{x\}⟩$ to the same state as one labeled with
    $⟨a,M⟩$. The automaton $\A_φ$ is then obtained by applying the
    union construction on the extended versions of $\A_{ψ_1}$ and
    $\A_{ψ_2}$. We proceed analogously for the case where
    $φ=\logic{\meta{ψ_\one}∧\meta{ψ_\two}}$ (using the intersection
    construction from \cref{lem:union-intersect}), and we reduce
    cases with other logical connectives to the previous ones.

    Next, if $φ=\logic{∃\meta{X}(\meta{ψ})}$, with $X∈\Vset$, we can
    take advantage of the projection construction from
    \cref{lem:projection}. An ADGA can evaluate the truth of $φ$
    by nondeterministically choosing which nodes are in the set
    assigned to $X$, and subsequently simulating $\A_ψ$. We thus
    construct $\A_φ$ by applying the projection construction on
    $\A_ψ$, using the mapping
    \begin{align*}
      h\colon Σ×2^{\free(ψ)} &→ Σ×2^{\free(φ)\:\!\setminus\:\!\{X\}} \\
      ⟨a,M⟩ &↦ ⟨a,M\!\setminus\!\{X\}⟩.
    \end{align*}
    Note that this also works if $X∉\free(ψ)$, since then $h$ is an
    identity function, and consequently $\L(\A_φ)=\L(\A_ψ)$, as
    required.

    Some additional work is needed for the related case in which
    $φ=\logic{∃\meta{x}(\meta{ψ})}$, with $x∈\Vnode$. Like in the
    previous case, a corresponding ADGA can nondeterministically
    choose for each node whether or not it is assigned to $x$. But
    afterwards, it must check that precisely one node has been
    assigned to that variable. We construct a separate ADGA
    $\dA[one]{x}$, specifically for the latter task, and then use it
    as a building block for $\A_φ$. The idea is that any node assigned
    to $x$ can universally choose between two colors. The automaton
    then accepts \Iff exactly one color has been chosen in each
    universal branch. Formally, we define
    $\dA[one]{x}=⟨Σ\!×\!2^{\free(ψ)},Γ,\Q_1,σ_1,δ_1,\F_1⟩$, where
    \begin{itemize}[topsep=1ex,itemsep=.2ex]
    \item $(Q_1)_\EE=∅$, \quad $(Q_1)_\AA=\{q_x\}$, \quad $(Q_1)_\P=\{q_{¬x},\qblack,\qwhite\}$,
    \item $σ_1(⟨a,M⟩)=
      \begin{cases}
        q_x & \text{if $x∈M$}, \\
        q_{¬x} & \text{otherwise},
      \end{cases}$ \quad for every $⟨a,M⟩ ∈ Σ\!×\!2^{\free(ψ)}$,
    \item $δ_1(q_x,\S) = \{\qblack,\qwhite\}$, \quad for every $\S∈(2^{Q_1})^Γ$\!,
    \item $\F_1 = \bigl\{F⊆(Q_1)_\P \bigm| \qblack∈F ⇔ \qwhite∉F\bigr\}$.
    \end{itemize}
    We can now assemble $\A_φ$ by first applying the intersection
    construction on $\A_ψ$ and $\dA[one]{x}$, and then the projection
    construction on the resulting automaton, just as in the previous
    case, with $x$ taking the role of $X$.

    Finally, quantifier duality obviously covers the two cases with
    universal quantifiers, e.g., the formula
    $\logic{∀\meta{x}(\meta{ψ})}$ can be replaced by
    $\logic{¬∃\meta{x}(¬\meta{ψ})}$. It is worth mentioning, however,
    that this indirect approach does not even involve a blow-up of the
    resulting automata, since complementation leaves states and
    transitions unchanged.
    \qedhere
  \end{itemize}
\end{proof}

This concludes the proof of \cref{thm:adga=mso}.

\section{Negative Implications for ADGAs} \label{sec:neg-impl-adga}
We can now take advantage of the equivalence between MSO-logic and
ADGAs to infer some negative results on ADGAs.

The \defd{satisfiability problem} of MSO-logic is the question
whether, for a given $\MSO(Σ,Γ)$-sentence $φ$, there is a labeled
graph $G_λ∈Σ^{\clouded{Γ}}$ that satisfies $φ$, i.e., whether
$\Lf{Σ,Γ}(φ)≠∅$. As remarked in \cite{Cou97,CE12}, the undecidability
of this problem follows directly from Trakhtenbrot's Theorem
\cite{Tra50}, which states that it is undecidable whether a
first-order sentence over a relational vocabulary with at least one
binary relation symbol is \emph{finitely} satisfiable (see, e.g.,
\cite[Thm~9.2]{Lib04}).

\begin{theorem}[Satisfiability Problem]
  The satisfiability problem of MSO-logic (on finite graphs) is
  undecidable.
\end{theorem}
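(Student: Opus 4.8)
The plan is to reduce the finite satisfiability problem of first-order logic over a single binary relation symbol — which is undecidable by Trakhtenbrot's Theorem \cite{Tra50} — to the MSO satisfiability problem in question. The crucial observation is that a finite relational structure over a vocabulary consisting of one binary relation symbol $E$ is nothing but a finite directed graph: its (nonempty) domain is the node set, and the interpretation of $E$ is an arbitrary subset of $\VG × \VG$. This is precisely the notion of $\{\blank\}$-graph from \cref{chap:graphs}, where there is a single edge relation $\arrG{\blank}$; since self-loops are allowed and a single edge label permits \emph{any} subset of $\VG × \VG$, no binary relation is excluded. Fixing $Σ = \{\blank\}$ makes the node labeling trivial, so that it never interferes.

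Concretely, I would fix $Σ = Γ = \{\blank\}$ and read the symbol $E$ as the edge relation, i.e., read the atom ``$E(x,y)$'' as $\logic{\meta{x}\arr\meta{y}}$. Under this reading, every first-order sentence $ψ$ over $\{E\}$ (with equality) becomes an $\MSO(\{\blank\},\{\blank\})$-sentence $\hat{ψ}$, which in fact lies in the first-order fragment, using only the atoms $\logic{\meta{x}\arr\meta{y}}$ and $\logic{\meta{x}=\meta{y}}$; the translation $ψ ↦ \hat{ψ}$ is trivially computable. The map $G ↦ ⟨\VG, \arrG{\blank}⟩$ is then a bijection between the $\{\blank\}$-graphs of this work and the finite \emph{nonempty} structures over $\{E\}$ — nonemptiness matching on both sides, as both $Γ$-graphs and finite structures are required to have at least one element. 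A routine structural induction on $ψ$ shows that this bijection is truth-preserving, so that $G ⊨ \hat{ψ} \;⇔\; ⟨\VG, \arrG{\blank}⟩ ⊨ ψ$. Consequently $\Lf{\{\blank\},\{\blank\}}(\hat{ψ}) ≠ ∅$ holds exactly when $ψ$ is finitely satisfiable, and a decision procedure for the MSO satisfiability problem would decide finite satisfiability of $ψ$, contradicting Trakhtenbrot's Theorem.

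The only delicate point — the main, if mild, obstacle — is to verify this truth-preserving correspondence rigorously, checking that the semantics of \cref{def:mso-syntax}, restricted to the atoms $\logic{\meta{x}\arr\meta{y}}$ and $\logic{\meta{x}=\meta{y}}$ and the usual connectives and (first-order) quantifiers, coincides atom for atom with the standard first-order semantics, and that neither self-loops, the ``at most one edge per label'' convention, nor the trivial node labeling cause a mismatch. I would close by remarking that, since $\hat{ψ}$ is first-order, the argument already establishes undecidability for the first-order fragment of MSO-logic on graphs, and \emph{a fortiori} for full MSO-logic.
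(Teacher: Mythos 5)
Your proposal is correct and follows exactly the route the paper takes: the paper gives no separate proof but notes that the result follows directly from Trakhtenbrot's Theorem by viewing finite structures over one binary relation symbol as $\{\blank\}$-labeled $\{\blank\}$-graphs, which is precisely your reduction, merely spelled out in more detail. Your closing remark that undecidability already holds for the first-order fragment matches the paper's own framing of Trakhtenbrot's Theorem.
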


Together with \cref{thm:adga=mso}, we directly obtain the following
corollary concerning the \defd{emptiness problem} of ADGAs. This
problem is the question whether the graph language $\L(\A)$ of a given
ADGA $\A$ is empty.

\begin{corollary}[Emptiness Problem] \label{cor:adga-emptiness}
  The emptiness problem of ADGAs is undecidable.
\end{corollary}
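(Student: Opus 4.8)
The plan is to obtain the result as an immediate consequence of the two theorems just stated, via a reduction that turns any decision procedure for ADGA emptiness into one for MSO satisfiability. First I would recall that the satisfiability problem asks, for a given $\MSO(Σ,Γ)$-sentence $φ$, whether $\Lf{Σ,Γ}(φ)≠∅$, whereas the emptiness problem asks, for a given ADGA $\A$, whether $\L(\A)=∅$; these two questions are complementary in their answers, and decidability is closed under complementation, so it suffices to reduce satisfiability to \emph{non}-emptiness.

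The reduction itself is the heart of the argument, and it is supplied entirely by \cref{lem:adga>mso}. Given an arbitrary $\MSO(Σ,Γ)$-sentence $φ$, I would apply the \emph{effective} translation of that lemma to construct an ADGA $\A_φ$ over $⟨Σ,Γ⟩$ satisfying $\L(\A_φ)=\Lf{Σ,Γ}(φ)$. Consequently,
\begin{equation*}
  φ \text{ is satisfiable} \quad⇔\quad \Lf{Σ,Γ}(φ)≠∅ \quad⇔\quad \L(\A_φ)≠∅.
\end{equation*}
Now suppose, for contradiction, that the emptiness problem of ADGAs were decidable. Then, given $φ$, I could first compute $\A_φ$ and then decide whether $\L(\A_φ)=∅$, thereby deciding whether $φ$ is satisfiable. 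This would contradict the undecidability of the MSO satisfiability problem, so the emptiness problem of ADGAs must be undecidable.

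The only point that genuinely requires attention --- and the closest thing here to an obstacle --- is the \emph{effectiveness} of the translation from sentences to automata. A reduction between decision problems is valid only if the mapping between instances is itself computable, so I would stress that $\A_φ$ is produced by an algorithm from $φ$ (as asserted in \cref{lem:adga>mso,thm:adga=mso}); this is exactly what licenses composing the hypothetical emptiness oracle with the translation to solve satisfiability. Everything else is routine: none of the blow-up bounds, normal forms, or game-theoretic machinery from \cref{chap:adga} is needed beyond what is already packaged into the equivalence theorem.
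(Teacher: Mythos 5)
Your argument is correct and matches the paper's intended derivation: the corollary is obtained directly from the undecidability of MSO satisfiability together with the effective translation of \cref{lem:adga>mso} (packaged in \cref{thm:adga=mso}), exactly as you describe. Your added emphasis on the effectiveness of the translation is the right point to single out, and nothing further is needed.
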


Furthermore, \cref{thm:adga=mso} allows us to state some graph
properties that cannot be recognized by ADGAs. This is a restatement
of a result on MSO-logic proven by Courcelle and Engelfriet in
\cite[Prp~5.13]{CE12}.

\begin{lemma}[ADGA-Unrecognizable Languages]
  Let $Σ=Γ=\{\blank\}$. The following graph languages are \emph{not}
  ADGA-recognizable:
  \begin{itemize}
  \item $\swr{\sL{Ham}}{\sL{morph}} = \bigl\{G∈Σ^{\clouded{Γ}} \bigm|
    \text{$G$ has a Hamiltonian cycle} \bigr\}$,
  \item $\swr{\sL{match}}{\sL{morph}} = \bigl\{G∈Σ^{\clouded{Γ}} \bigm|
    \text{$G$ has a perfect matching} \bigr\}$,\, and
  \item $\sL{morph} = \bigl\{G∈Σ^{\clouded{Γ}} \bigm| \text{$G$ has a
      nontrivial automorphism} \bigr\}$.
  \end{itemize}
\end{lemma}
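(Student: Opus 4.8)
The plan is to route everything through our main equivalence \cref{thm:adga=mso}: since a graph language is ADGA-recognizable \Iff it is MSO-definable, it suffices to show that none of $\sL{Ham}$, $\sL{match}$ and $\sL{morph}$ is $\MSO(Σ,Γ)$-definable (recall that here $Σ=Γ=\{\blank\}$, so the logic quantifies only over sets of \emph{nodes}). To establish non-definability I would use the standard Ehrenfeucht--Fra\"iss\'e game for MSO, together with the composition (Feferman--Vaught) method for disjoint unions: for each quantifier rank $k$ I exhibit two graphs, one inside and one outside the language in question, on which the duplicator wins the $k$-round MSO game. Since these graphs then agree on every MSO-sentence of quantifier rank at most $k$, no single sentence can separate the language from its complement, and definability fails.

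For both $\sL{Ham}$ and $\sL{match}$ the same family works. I would take the complete bipartite graphs $\K{n,n}$ and $\K{n,n+1}$. The balanced graph $\K{n,n}$ (for $n\ge 2$) has a Hamiltonian cycle and a perfect matching, whereas $\K{n,n+1}$ has neither: a bipartite graph with unequal sides admits no Hamiltonian cycle, and with its odd number $2n+1$ of nodes it cannot have a perfect matching. The crux is that MSO over sets of nodes cannot compare the \emph{cardinalities} of the two sides of a large complete bipartite graph. Although the bipartition is itself MSO-definable (each side is a maximal independent set), deciding whether the two sides are equinumerous would essentially require a bijection between them, which MSO cannot express. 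Making this precise amounts to showing that the duplicator wins the $k$-round MSO game on $\K{n,n}$ versus $\K{n,n+1}$ once $n$ is large enough relative to $k$, a routine threshold-counting argument.

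The automorphism language $\sL{morph}$ I would handle by a disjoint-union gadget. Suppose that, for every rank $k$, there are two \emph{rigid} (automorphism-free) connected graphs $A_k\not\cong B_k$ that agree on all MSO-sentences of quantifier rank at most $k$, written $A_k\equiv_k B_k$. Then $A_k\sqcup A_k$ has the nontrivial automorphism swapping the two copies, so it lies in $\sL{morph}$; while $A_k\sqcup B_k$ is rigid — any automorphism must permute connected components, and since $A_k\not\cong B_k$ both components are fixed setwise and, being rigid, pointwise — so it lies outside $\sL{morph}$. By the composition method, $A_k\equiv_k B_k$ yields $A_k\sqcup A_k\equiv_k A_k\sqcup B_k$, so again no rank-$k$ sentence separates the language. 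Thus the whole case reduces to producing rigid, MSO-indistinguishable, non-isomorphic graphs.

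The main obstacle is precisely this last reduction: exhibiting, for each $k$, rigid graphs $A_k\not\cong B_k$ that are MSO-equivalent up to rank $k$. Rigidity and MSO-equivalence pull in opposite directions — MSO-equivalent graphs are forced to be highly ``similar'', yet I need them provably non-isomorphic and free of symmetry. A workable route is to build asymmetric ``caterpillar''-like graphs whose only distinguishing feature is a hidden count (for instance the length of some undecorated stretch) that differs between $A_k$ and $B_k$ but stays below the threshold visible to rank-$k$ MSO, with rigidity forced by a short asymmetric decoration that pins down a unique basepoint and orientation. Verifying both the rigidity and the $k$-round game win for these gadgets is the technically delicate step, and is exactly the content we may borrow from \cite[Prp~5.13]{CE12}.
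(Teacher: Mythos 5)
Your proposal is correct and takes essentially the same route as the paper: reduce recognizability to MSO-definability via \cref{thm:adga=mso} and then appeal to the non-definability of these three properties, which is exactly the content of \cite[Prp~5.13]{CE12}. The paper in fact offers no proof beyond that reduction and citation, so your Ehrenfeucht--Fra\"iss\'e sketches ($\K{n,n}$ versus $\K{n,n+1}$ for Hamiltonicity and matching, and disjoint unions of rigid MSO-equivalent non-isomorphic graphs for automorphisms) simply reconstruct the cited argument in more detail; note also that your final ``main obstacle'' is easier than you suggest, since directed paths of two distinct sufficiently large lengths are already rigid, connected, non-isomorphic and equivalent up to quantifier rank $k$.
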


\chapter{Nondeterministic and Deterministic DGAs} \label{chap:ndga_ddga}
In this chapter, we consider restrictions on the definition of
ADGAs. It turns out that forbidding universal branchings results in a
loss of expressive power, and additionally forbidding nondeterministic
choices leads to an even weaker class of graph automata. On the other
hand, the emptiness problem becomes decidable, and some closure
properties still hold. Furthermore, as a byproduct of our
investigation, we obtain necessary conditions for recognizability by
those weaker classes of graph automata, loosely similar to pumping
lemmas.

\section{Nondeterministic Distributed Graph Automata}
We start by removing the possibility of universal branching.

\begin{definition}[Nondeterministic Distributed Graph Automaton]
  A \defd{nondeterministic distributed graph automaton} (NDGA) is an
  ADGA $\A=⟨Σ,Γ,\Q,\ab σ,δ,\F⟩$ that has no universal states, i.e.,
  $Q_\AA=∅$. We denote by $\defd{\LL_\NDGA}$ the class of all
  \defd{NDGA-recognizable} graph languages.
\end{definition}

Since the runs of NDGAs do not branch, we can represent them simply as
sequences of configurations of the form $R=G_{κ_0}\!\cdots G_{κ_n}$,
with $n≤\len(\A)$, where $G_{κ_0}$ is the initial configuration, each
$G_{κ_{i+1}}$ is a successor configuration of $G_{κ_i}$, for
$0≤i≤n-1$, and $G_{κ_n}$ is a permanent configuration.

We will compare such sequences from the local point of view of
individual nodes.

\begin{definition*}[Local View]
  Consider a sequence $R=G_{κ_0}\!\cdots G_{κ_n}$ of configurations of
  some NDGA $\A$ on an underlying graph $G$. For each node $v∈\VG$, we
  define the \defd{local view} $R|_v$ of $v$ in $R$ as the sequence of
  informations that $v$ receives from its closed incoming
  neighborhood at each position of $R$, i.e.,
  \begin{equation*}
    \defd{R|_v} \coloneqq ⟨q_0,\S_0⟩\cdots⟨q_n,\S_n⟩,
  \end{equation*}
  where\, $q_i=κ_i(v)$,\, and\, $\S_i=\bigl\langle\{κ_i(u)\mid
  u\arrG{γ}v\}\bigr\rangle_{γ∈Γ}$\:\!,\, for $0≤i≤n$.
\end{definition*}

On several occasions, we will construct a new run from a given one, by
ensuring that every local view in the new run also occurs in the old
run. The following remark formalizes this approach.

\begin{remark} \label{rem:local_view}
  Let $\A=⟨Σ,Γ,\Q,\ab σ,δ,\F⟩$ be an NDGA, $G_λ$ a labeled graph in
  $Σ^{\clouded{Γ}}$, and $R=G_{κ_0}\!\cdots G_{κ_n}$ a run of $\A$ on
  $G_λ$. Consider another labeled graph
  $G'_{λ'}∈Σ^{\clouded{Γ}}$. Suppose we can construct a sequence
  $R'=G'_{κ'_0}\!\cdots G'_{κ'_n}$ of configurations of $\A$ on $G'$,
  such that, for every node $v'∈\VGpr$, there is a node $v∈\VG$ with
  the same label and local view, i.e.,
  \begin{itemize}
  \item $λ(v)=λ'(v')$,\: and
  \item $\swr{R|_v}{λ(v)}=R'|_{v'}$.
  \end{itemize}
  Then $R'$ is a legal run of $\A$ on $G'_{λ'}$. Furthermore, if $R$
  is accepting and the states occurring in $G_{κ_n}$ and $G'_{κ'_n}$
  are the same, i.e., $\{κ_n(v)\mid v∈\VG\} = \{κ'_n(v')\mid
  v'∈\VGpr\}$, then $R'$ is also accepting.
\end{remark}

\begin{proof}
  It is easy to see that $G_{κ_0}=G_{σ∘λ}$ implies
  $G'_{κ'_0}=G'_{σ∘λ'}$, and $G_{κ_{i+1}}∈δ^\cloud(G_{κ_i})$ implies
  $G'_{κ'_{i+1}}\!∈δ^\cloud(G'_{κ'_i})$, for $0≤i≤n-1$, and
  $G_{κ_n}\!∈Q_\P^G$ implies $G'_{κ'_n}\!∈Q_\P^{G'}$. Thus, the
  sequence $R'$ is a legal run of $\A$ on $G'_{λ'}$.
\end{proof}

Next, we want to show that NDGAs are, to a certain extent, blind to
symmetry. To this end, we define a mirroring operation, which
introduces symmetry into a (labeled) graph by duplicating a given
subgraph, together with its connections to the rest of the graph.

\begin{definition}[Graph Mirroring]
  Consider a labeled graph $G_λ∈Σ^{\clouded{Γ}}$ and a subset of nodes
  $U⊆\VG$. Further, let $U'$ be a copy of $U$ that is disjoint from
  $\VG$, and let $f\colon U→U'$ be some bijection. The graph obtained
  by \defd{mirroring} $U$ in $G_λ$, denoted $\defd{\mir(G_λ,U)}$, is
  defined as the labeled graph $G'_{λ'}$, such that
  \begin{itemize}[topsep=1ex,itemsep=0ex]
  \item $\swl{\VGpr}{\arrGpr{γ}} = \VG ∪ U'$,
  \item
    ${\arrGpr{γ}} =
    \begin{aligned}[t]
      \arrG{γ} &∪ \bigl\{⟨u,f(v)⟩ \bigm| u∈(\VG\setminus U) \;∧\; v∈U \;∧\; u\arrG{γ}v \bigr\} \\
               &∪ \bigl\{⟨f(u),v⟩ \bigm| u∈U \;∧\; v∈(\VG\setminus U) \;∧\; u\arrG{γ}v \bigr\} \\
               &∪ \bigl\{⟨f(u),f(v)⟩ \bigm| u,v∈U \;∧\; u\arrG{γ}v \bigr\},
    \end{aligned}$ \\[.5ex]
    for every $γ∈Γ$,
  \item $λ'(v)=λ(v)$\, for every $v∈\VG$,\; and $λ'(f(v))=λ(v)$\, for
    every $v∈U$.
  \end{itemize}
  We call $f$ a \defd{mirroring bijection} between $U$ and $U'$ in
  $G'_{λ'}$, and $f(v)$ a \defd{mirror image} of $v$ in $G'_{λ'}$, for
  every node $v∈U$.
\end{definition}

Note that graph mirroring is well-defined because we consider graphs
only up to isomorphism.

\newpage

\begin{example}
  Let $Σ=\{\a,\b,\c\}$ and $Γ=\{\blank\}$. Mirroring $\{s,t,u\}$ in
  the graph $(G_1)_{λ_1}$ from \cref{fig:graph_1_before_mirroring}
  yields the graph $G_λ$ depicted in
  \cref{fig:graph_1_after_mirroring}. The function $f_1$ indicated in
  that figure is the only possible mirroring bijection between
  $\{s,t,u\}$ and $\{x,y,z\}$ in $G_λ$. We can obtain the same graph
  $G_λ$ by mirroring $\{t,v,x\}$ in the graph $(G_2)_{λ_2}$ from
  \cref{fig:graph_2_before_mirroring}, as shown in
  \cref{fig:graph_2_after_mirroring}. The function $f_2$ is one of two
  possible mirroring bijections between $\{t,v,x\}$ and $\{u,w,y\}$ in
  $G_λ$ (for the other one, the images of $t$ and $x$ are
  swapped). Since there are several ways of obtaining $G_λ$ through
  mirroring, some of the nodes have several mirror images in
  $G_λ$. For instance, $t$ has three of them: $u$, $x$ and $y$.
\end{example}

\begin{figure}
  \alignpic
  \begin{subfigure}{0.38\textwidth}
    \centering
\begin{tikzpicture}[lgraph]
  \node[lnode,label=left:$s$] (s) {$\a$};
  \node[lnode,label=above:$t$] (t) at ([shift={(30:\lnodedistIG)}]s) {$\b$};
  \node[lnode,label=below:$u$] (u) at ([shift={(330:\lnodedistIG)}]s) {$\b$};
  \node[lnode] (v) at ([shift={(30:\lnodedistIG)}]t) {$\c$};
  \node[lnode] (noname) at ([shift={(330:\lnodedistIG)}]t) {$\a$};
  \node[lnode] (w) at ([shift={(330:\lnodedistIG)}]u) {$\c$};
  \path[use as bounding box]
    (s)      edge[loop above] (s)
    (t)      edge (v)
    (u)      edge (w)
    (v)      edge[bend left=20] (noname)
    (noname) edge (t)
             edge (u)
             edge[bend left=20] (v)
             edge[bend right=20] (w)
    (w)      edge[bend right=20] (noname);
  \begin{scope}[on background layer, every path/.style={fill=lightblue}]
    \fill[rounded corners=8ex]
      ([xshift=-8.2ex]s.center) -- ([xshift=3.2ex,yshift=7.5ex]t.center) --
      ([xshift=3.2ex,yshift=-7.5ex]u.center) -- cycle;
  \end{scope}
\end{tikzpicture}
    \caption{\;$(G_1)_{λ_1}$}
    \label{fig:graph_1_before_mirroring}
  \end{subfigure}
  \begin{subfigure}{0.51\textwidth}
    \centering
\begin{tikzpicture}[lgraph, trim right=(z)]
  \node[lnode,label=left:$s$] (s) {$\a$};
  \node[lnode,label=above:$t$] (t) at ([shift={(30:\lnodedistIG)}]s) {$\b$};
  \node[lnode,label=below:$u$] (u) at ([shift={(330:\lnodedistIG)}]s) {$\b$};
  \node[lnode] (v) at ([shift={(30:\lnodedistIG)}]t) {$\c$};
  \node[lnode] (noname) at ([shift={(330:\lnodedistIG)}]t) {$\a$};
  \node[lnode] (w) at ([shift={(330:\lnodedistIG)}]u) {$\c$};
  \node[lnode,label={[yshift=-.65ex]above:$\phantom{\;=f_1(t)}\fade{\unfade{x}=f_1(t)}$}] (x) at ([shift={(330:\lnodedistIG)}]v) {$\b$};
  \node[lnode,label={[yshift=.8ex]below:$\phantom{\;=f_1(u)}\fade{\unfade{y}=f_1(u)}$}] (y) at ([shift={(330:\lnodedistIG)}]noname) {$\b$};
  \node[lnode,label={right:$\fade{\unfade{z}=f_1(s)}$}] (z) at ([shift={(330:\lnodedistIG)}]x) {$\a$};
  \path[use as bounding box]
     (s)      edge[loop above] (s)
     (t)      edge (v)
     (u)      edge (w)
     (v)      edge[bend left=20] (noname)
     (noname) edge (t)
              edge (x)
              edge (u)
              edge (y)
              edge[bend left=20] (v)
              edge[bend right=20] (w)
     (w)      edge[bend right=20] (noname)
     (z)      edge[loop above, counter clockwise] (z)
     (x)      edge (v)
     (y)      edge (w);
  \begin{scope}[on background layer, every path/.style={fill=lightblue}]
    \fill[rounded corners=8ex]
      ([xshift=-8.2ex]s.center) -- ([xshift=3.2ex,yshift=7.5ex]t.center) --
      ([xshift=3.2ex,yshift=-7.5ex]u.center) -- cycle;
    \fill[rounded corners=8ex]
      ([xshift=8.2ex]z.center) -- ([xshift=-3.2ex,yshift=7.5ex]x.center) --
      ([xshift=-3.2ex,yshift=-7.5ex]y.center) -- cycle;
  \end{scope}
\end{tikzpicture}
    \caption{\;$G_λ=\mir\bigl((G_1)_{λ_1},\:\!\{s,t,u\}\bigr)$}
    \label{fig:graph_1_after_mirroring}
  \end{subfigure}

  \vspace{4ex}

  \begin{subfigure}{0.38\textwidth}
    \centering
    \vspace{-9.7ex}
\begin{tikzpicture}[lgraph]
  \node[lnode] (s) {$\a$};
  \node[lnode,label=above:$t$] (t) at ([shift={(30:\lnodedistIG)}]s) {$\b$};
  \node[lnode,label=above:$v$] (v) at ([shift={(30:\lnodedistIG)}]t) {$\c$};
  \node[lnode] (noname) at ([shift={(330:\lnodedistIG)}]t) {$\a$};
  \node[lnode,label=above:$x$] (x) at ([shift={(330:\lnodedistIG)}]v) {$\b$};
  \node[lnode] (z) at ([shift={(330:\lnodedistIG)}]x) {$\a$};
  \path[use as bounding box]
     (s)      edge[loop above] (s)
     (t)      edge (v)
     (v)      edge[bend left=20] (noname)
     (noname) edge (t)
              edge (x)
              edge[bend left=20] (v)
     (z)      edge[loop above, counter clockwise] (z)
     (x)      edge (v);
  \begin{scope}[on background layer, every path/.style={fill=lightblue}]
    \fill[rounded corners=8ex]
      ([yshift=7.5ex]v.center) -- ([xshift=-7.2ex,yshift=-2.3ex]t.center) --
      ([xshift=7.2ex,yshift=-2.3ex]x.center) -- cycle;
  \end{scope}
\end{tikzpicture}
    \caption{\;$(G_2)_{λ_2}$}
    \label{fig:graph_2_before_mirroring}
  \end{subfigure}
  \begin{subfigure}{0.51\textwidth}
    \centering
\begin{tikzpicture}[lgraph, trim right=(z)]
  \node[lnode,label=left:\phantom{$s$}] (s) {$\a$};
  \node[lnode,label=above:$t$] (t) at ([shift={(30:\lnodedistIG)}]s) {$\b$};
  \node[lnode,label={[yshift=.8ex]below:$\fade{f_2(t)=\unfade{u}}\phantom{;f_2(t)=}$}] (u) at ([shift={(330:\lnodedistIG)}]s) {$\b$};
  \node[lnode,label=above:$v$] (v) at ([shift={(30:\lnodedistIG)}]t) {$\c$};
  \node[lnode] (noname) at ([shift={(330:\lnodedistIG)}]t) {$\a$};
  \node[lnode,label={[yshift=.8ex]below:$\phantom{;=f_2(v)}\fade{\unfade{w}=f_2(v)}$}] (w) at ([shift={(330:\lnodedistIG)}]u) {$\c$};
  \node[lnode,label=above:$x$] (x) at ([shift={(330:\lnodedistIG)}]v) {$\b$};
  \node[lnode,label={[yshift=.8ex]below:$\phantom{;=f_2(x)}\fade{\unfade{y}=f_2(x)}$}] (y) at ([shift={(330:\lnodedistIG)}]noname) {$\b$};
  \node[lnode,label=right:\phantom{$z=f_1(s)$}] (z) at ([shift={(330:\lnodedistIG)}]x) {$\a$};
  \path[use as bounding box]
     (s)      edge[loop above] (s)
     (t)      edge (v)
     (u)      edge (w)
     (v)      edge[bend left=20] (noname)
     (noname) edge (t)
              edge (x)
              edge (u)
              edge (y)
              edge[bend left=20] (v)
              edge[bend right=20] (w)
     (w)      edge[bend right=20] (noname)
     (z)      edge[loop above, counter clockwise] (z)
     (x)      edge (v)
     (y)      edge (w);
  \begin{scope}[on background layer, every path/.style={fill=lightblue}]
    \fill[rounded corners=8ex]
      ([yshift=7.5ex]v.center) -- ([xshift=-7.2ex,yshift=-2.3ex]t.center) --
      ([xshift=7.2ex,yshift=-2.3ex]x.center) -- cycle;
    \fill[rounded corners=8ex]
      ([yshift=-7.5ex]w.center) -- ([xshift=-7.2ex,yshift=2.3ex]u.center) --
      ([xshift=7.2ex,yshift=2.3ex]y.center) -- cycle;
  \end{scope}
\end{tikzpicture}
    \vspace{.5ex}
    \caption{\;$G_λ=\mir\bigl((G_2)_{λ_2},\:\!\{t,v,x\}\bigr)$}
    \label{fig:graph_2_after_mirroring}
  \end{subfigure}
  \caption{Mirroring in $\{\a,\b,\c\}$-labeled $\{\blank\}$-graphs.}
\end{figure}

We can now use the notion of graph mirroring to establish a necessary
condition for NDGA-recognizability.

\begin{lemma}[Mirroring Lemma] \label{lem:weak-mirroring}
  Every NDGA-recognizable graph language $L$ is closed under
  mirroring, i.e., for every labeled graph $G_λ$ and subset of nodes
  $U⊆\VG$,
  \begin{equation*}
    G_λ∈L \quad \text{implies} \quad \mir(G_λ,U)∈L.
  \end{equation*}
\end{lemma}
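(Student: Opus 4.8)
The plan is to lean entirely on Remark~\ref{rem:local_view}: I would take an accepting run of a recognizing NDGA on $G_\lambda$, transport it verbatim to the mirrored graph by copying states across the mirroring bijection, and then verify that every node of the mirrored graph replicates the label and local view of some node of the original. The conceptual reason this works is that mirroring creates new incoming edges only from nodes whose states are \emph{already present} in the relevant neighborhood; since an NDGA perceives its incoming neighborhood only as a \emph{set} of states, this duplication is invisible to it.

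Concretely, assume $L=\L(\A)$ for an NDGA $\A=\langle\Sigma,\Gamma,\Q,\sigma,\delta,\F\rangle$, and let $R=G_{\kappa_0}\cdots G_{\kappa_n}$ be an accepting run of $\A$ on $G_\lambda$. Writing $G'_{\lambda'}=\mir(G_\lambda,U)$ with mirroring bijection $f\colon U\to U'$, I would define the candidate sequence $R'=G'_{\kappa'_0}\cdots G'_{\kappa'_n}$ by setting $\kappa'_i(w):=\kappa_i(w)$ for $w\in\VG$ and $\kappa'_i(f(v)):=\kappa_i(v)$ for $v\in U$, for each $0\le i\le n$. The heart of the argument is then a short case analysis of incoming neighborhoods using the definition of $\arrGpr{\gamma}$. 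For an original node $w\in\VG$, the only edges mirroring adds into $w$ are the edges $f(u)\arrGpr{\gamma}w$ coming from original edges $u\arrG{\gamma}w$ with $u\in U$ (and these exist only when $w\in\VG\setminus U$); each such $f(u)$ carries the state $\kappa_i(u)$ that $u$ itself already contributes, so the state set received by $w$ along each $\gamma$-relation is unchanged. For a mirror image $f(v)$ with $v\in U$, its incoming $\gamma$-neighbors are the nodes $u\in\VG\setminus U$ with $u\arrG{\gamma}v$ together with the images $f(u)$ of nodes $u\in U$ with $u\arrG{\gamma}v$; these carry exactly the states $\kappa_i(u)$ seen by $v$ in $R$. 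Since $\lambda'$ agrees with $\lambda$ under $f$ by definition of mirroring, every node of $G'$ matches some node of $G$ in both label and local view — each $w\in\VG$ matches itself, and each $f(v)$ matches $v$ — so Remark~\ref{rem:local_view} guarantees that $R'$ is a legal run of $\A$ on $G'_{\lambda'}$.

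It remains to transfer acceptance, which is immediate: the set of states occurring in $G'_{\kappa'_n}$ is $\{\kappa_n(w)\mid w\in\VG\}\cup\{\kappa_n(v)\mid v\in U\}$, and since $U\subseteq\VG$ this equals $\{\kappa_n(w)\mid w\in\VG\}$, the state set of $G_{\kappa_n}$. The second part of Remark~\ref{rem:local_view} then yields that $R'$ is accepting, whence $\mir(G_\lambda,U)\in\L(\A)=L$. The only genuinely delicate step I anticipate is the incoming-neighborhood bookkeeping for the mirror images; once that is pinned down, closure follows mechanically from the set-aggregation behaviour of NDGAs captured by Remark~\ref{rem:local_view}.
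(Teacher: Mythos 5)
Your proposal is correct and follows essentially the same route as the paper: both construct the candidate run on $\mir(G_λ,U)$ by copying states across the mirroring bijection and then invoke Remark~\ref{rem:local_view}, the only difference being that you spell out the incoming-neighborhood case analysis that the paper compresses into a parenthetical remark. The bookkeeping you give (original nodes gain only duplicate states from mirror images, mirror images see exactly the state sets their originals saw) is accurate, as is the observation that the final state sets coincide because $U⊆\VG$.
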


\begin{proof}
  Let $\A=⟨Σ,Γ,\Q,\ab σ,δ,\F⟩$ be an NDGA. Consider any
  $G_λ∈Σ^{\clouded{Γ}}$ and $U⊆\VG$. We set $G'_{λ'}=\mir(G_λ,U)$, and
  fix a mirroring bijection $f\colon U→(\VGpr\setminus \VG)$ in
  $G'_{λ'}$. If $G_λ∈\L(\A)$, then there must be an accepting run
  $R=G_{κ_0}\!\cdots G_{κ_n}$ of $\A$ on $G_λ$. By
  \cref{rem:local_view}, we can derive from it an accepting run
  $R'=G'_{κ'_0}\!\cdots G'_{κ'_n}$ of $\A$ on $G'_{λ'}$, in which the
  behaviour of every node $v∈\VG$ remains the same, i.e.,
  $κ'_i(v)=κ_i(v)$, for $0≤i≤n$, and every node $v∈U$ is imitated by
  its mirror image under $f$, i.e., $κ'_i(f(v))=κ_i(v)$, for
  $0≤i≤n$. (The local view of the nodes in $(\VG\setminus U)$ does not
  change because they cannot distinguish between nodes that are in the
  same state.) Consequently, $G'_{λ'}∈\L(\A)$.
\end{proof}

\Cref{lem:weak-mirroring} directly yields the following corollary.

\begin{corollary} \label{cor:ndga_infinite}
  Every nonempty NDGA-recognizable graph language is (countably)
  infinite.
\end{corollary}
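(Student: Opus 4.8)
The plan is to apply the Mirroring Lemma (\cref{lem:weak-mirroring}) repeatedly in order to manufacture arbitrarily large graphs inside the language. Since $L$ is nonempty, it contains some labeled graph $G_λ$. The key observation is that mirroring a \emph{nonempty} set of nodes strictly increases the node count: by definition, $\mir(G_λ,U)$ has node set $\VG ∪ U'$, where $U'$ is a fresh copy of $U$ disjoint from $\VG$, so it has exactly $\card{\VG}+\card{U}$ nodes. Taking, for instance, $U=\VG$ (which produces the disjoint union of $G_λ$ with an isomorphic copy of itself) yields a graph with $2\card{\VG}$ nodes that, by \cref{lem:weak-mirroring}, still lies in $L$.

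Next I would iterate this construction. Starting from $G^{(0)}_λ \coloneqq G_λ$ and setting $G^{(k+1)} \coloneqq \mir\bigl(G^{(k)}, V_{G^{(k)}}\bigr)$, each step keeps us inside $L$ (applying the Mirroring Lemma to the previous graph) while doubling the number of nodes. This produces an infinite sequence $G^{(0)}, G^{(1)}, G^{(2)}, \dots$ of members of $L$ whose node counts $\card{\VG}\cdot 2^{\,k}$ are strictly increasing. Since we identify labeled graphs only up to isomorphism, and isomorphic graphs necessarily have the same number of nodes, these graphs are pairwise distinct. Hence $L$ is infinite.

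To see that $L$ is \emph{countably} infinite, I would recall that $L⊆Σ^{\clouded{Γ}}$ for some finite alphabets $Σ$ and $Γ$, and that $Σ^{\clouded{Γ}}=\bigcup_{G} Σ^G$ is a countable union (over the countably many finite $Γ$-graphs $G$, up to isomorphism) of the finite sets $Σ^G$. Thus $Σ^{\clouded{Γ}}$, and a fortiori $L$, is at most countable; combined with the infiniteness just established, $L$ is countably infinite.

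This argument presents no real obstacle, being a direct consequence of the Mirroring Lemma. The only points that warrant a moment's care are verifying that the chosen mirroring set is nonempty, so that the node count strictly grows (which holds because $\VG$ is nonempty by the very definition of a graph), and that working up to isomorphism legitimately licenses concluding distinctness from differing node counts.
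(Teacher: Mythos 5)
Your proof is correct and is exactly the argument the paper has in mind: the paper gives no explicit proof, stating only that the Mirroring Lemma ``directly yields'' the corollary, and the intended justification is precisely your iteration of mirroring to produce members of $L$ with strictly increasing node counts (hence pairwise non-isomorphic), together with the observation that $Σ^{\clouded{Γ}}$ is countable. Your care about the mirrored set being nonempty and about distinctness up to isomorphism is exactly the right due diligence.
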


This implies that we have lost some expressive power by forbidding
universal branchings.

\begin{lemma}[$\LL_\NDGA⊂\LL_\ADGA$] \label{lem:ndga<adga}
  There are (infinitely many) ADGA-recognizable graph languages that
  are not NDGA-recognizable.
\end{lemma}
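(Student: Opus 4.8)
The plan is to exploit \cref{cor:ndga_infinite}, which asserts that every nonempty NDGA-recognizable graph language is infinite. It therefore suffices to produce infinitely many graph languages that are ADGA-recognizable yet \emph{finite} and nonempty: by the corollary, none of them can be NDGA-recognizable, so each is a witness separating the two classes. This routes the entire argument through the machinery already developed, rather than through a fresh automaton construction.

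First I would record the observation that every finite graph language is ADGA-recognizable. By \cref{thm:adga=mso} it is enough to show that every finite graph language is MSO-definable. A single labeled graph $G_0$ is characterized up to isomorphism by a first-order sentence over $⟨Σ,Γ⟩$: it asserts the existence of as many pairwise distinct nodes as $G_0$ has, stipulates that there are no further nodes, and fully specifies, among these nodes, which labels they carry and which $γ$-edges are present or absent. A finite language is then defined by the (finite) disjunction of one such sentence per member, and by \cref{thm:adga=mso} the resulting MSO-sentence translates into an equivalent ADGA.

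Next I would exhibit the required infinite supply of witnesses. Fix $Σ=Γ=\{\blank\}$ and, for each $k\geq 1$, let $G_k$ be the graph on $k$ nodes with no edges (so, in particular, no self-loops); set $L_k \coloneqq \{G_k\}$. Each $L_k$ is finite and nonempty, hence ADGA-recognizable by the previous step, while \cref{cor:ndga_infinite} forbids any nonempty finite language from being NDGA-recognizable; thus no $L_k$ lies in $\LL_\NDGA$. Since the $G_k$ are pairwise non-isomorphic, the $L_k$ are pairwise distinct, giving infinitely many languages in $\LL_\ADGA\setminus\LL_\NDGA$ and establishing the strict inclusion $\LL_\NDGA\subset\LL_\ADGA$.

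The only point that genuinely needs care — and what I expect to be the main (if minor) obstacle — is the claim that a single finite graph is first-order definable up to isomorphism over the fixed vocabulary; this is standard but should be spelled out explicitly, because it is precisely what licenses the appeal to \cref{thm:adga=mso}. A more self-contained alternative would bypass MSO and construct an ADGA for some $L_k$ directly, but recognizing an exact node count with a constant number of communication rounds is delicate, so passing through the equivalence theorem is decidedly the cleaner route. (One could equally well invoke \cref{lem:weak-mirroring} instead of \cref{cor:ndga_infinite}, noting that a singleton language is trivially not closed under mirroring, but the finiteness argument is the most direct.)
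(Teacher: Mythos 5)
Your proof is correct, and its core is the same as the paper's: both arguments reduce the separation to exhibiting nonempty \emph{finite} graph languages that are ADGA-recognizable, and both invoke \cref{cor:ndga_infinite} to conclude that no such language can be NDGA-recognizable. Where you diverge is in how the ADGA-recognizability half is established. The paper picks the languages $\dL[max]{c}$ of all graphs with at most $c$ nodes and gives an explicit two-level automaton $\dA[max]{c}$ (see \cref{fig:ADGA_max_two}): a universal branching sends each node to one of $c+1$ permanent states, and a configuration is accepting iff at most $c$ distinct states occur, so any graph with more than $c$ nodes is caught in the branch where $c+1$ nodes choose pairwise distinct states. You instead take singletons $L_k=\{G_k\}$, observe that a finite graph is first-order definable up to isomorphism (which is exactly equality in this paper's sense), and pass through \cref{thm:adga=mso}. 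This is legitimate and non-circular, since the equivalence theorem is proved in the preceding chapter; it buys you generality (every finite graph language is ADGA-recognizable, with no automaton to design) at the cost of invoking the heaviest result in the thesis for what the paper handles with a three-permanent-state automaton. Your side remark that recognizing an exact node count directly is ``delicate'' slightly overstates the difficulty: intersecting $\dA[max]{k}$ with the NDGA $\dA[min]{k}$ via \cref{lem:union-intersect} does it with no appeal to logic. Your parenthetical alternative via \cref{lem:weak-mirroring} also works, since mirroring a nonempty set of nodes strictly enlarges the graph and thus leaves any singleton language.
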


\begin{proof}
  Let $Σ=Γ=\{\blank\}$. We consider the language $\dL[max]{2}$ of all
  graphs that have at most two nodes, i.e.,
  $\dL[max]{2}=\bigl\{G∈Σ^{\clouded{Γ}} \bigm| \card{\VG}≤2
  \bigr\}$. This language is nonempty and finite, which by
  \cref{cor:ndga_infinite} implies that it is not
  NDGA-recognizable. On the other hand, it is clearly
  ADGA-recognizable, since it is recognized by the ADGA $\dA[max]{2}$,
  specified in \cref{fig:ADGA_max_two}. (The accepting configurations
  of that automaton are those that comprise at most two different
  states.)

  We can apply the same reasoning to any graph language
  $\dL[max]{c}⊆Σ^{\clouded{Γ}}$ in which the number of nodes is
  limited by a constant $c≥1$.
\end{proof}

\begin{figure}[h!]
  \alignpic
  \begin{subfigure}{0.5\textwidth}
    \centering
\begin{tikzpicture}[automaton]
  \matrix[states] {
                                                  &  \node[permanent] (q_1) {$q_1$}; \\
    \node[initial,universal] (q_ini) {$\q{ini}$}; &  \node[permanent] (q_2) {$q_2$}; \\
                                                  &  \node[permanent] (q_3) {$q_3$}; \\
  };
  \matrix[accepting sets] {
    \x &    &    & \x & \x &    \\
       & \x &    & \x &    & \x \\
       &    & \x &    & \x & \x \\
  };
  \DrawColumnBackground{1}{3}{6}
  \path (q_ini.15) edge (q_1)
        (q_ini)    edge (q_2)
                   edge (q_3);
\end{tikzpicture}
    \caption{$\dA[max]{2}$}
    \label{fig:ADGA_max_two}
  \end{subfigure}
  \begin{subfigure}{0.45\textwidth}
    \centering
\begin{tikzpicture}[automaton]
  \matrix[states] {
                                                    &  \node[permanent] (q_1) {$q_1$}; \\
    \node[initial,existential] (q_ini) {$\q{ini}$}; &  \node[permanent] (q_2) {$q_2$}; \\
                                                    &  \node[permanent] (q_3) {$q_3$}; \\
  };
  \matrix[accepting sets] {
    \x \\
    \x \\
    \x \\
  };
  \DrawColumnBackground{1}{3}{1}
  \path (q_ini) edge (q_1)
                edge (q_2)
                edge (q_3);
\end{tikzpicture}
    \caption{$\dA[min]{3}=\dcA[max]{2}$}
    \label{fig:ADGA_min_three}
  \end{subfigure}
  \caption{$\dA[max]{2}$ and $\dA[min]{3}$\!, two ADGAs over
    $\bigl\langle\{\blank\},\{\blank\}\bigr\rangle$ whose graph
    languages consist of the graphs that have at most two nodes, and
    at least three nodes, respectively.}
  \label{fig:ADGA_max_and_min}
\end{figure}

Following this line of thought, we also get that NDGAs cannot, in
general, be complemented.

\begin{lemma}[Complementation] \label{lem:ndga-complementation}
  The class $\LL_\NDGA$ of NDGA-recognizable graph languages is
  \emph{not} closed under complementation.
\end{lemma}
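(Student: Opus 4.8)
The plan is to exhibit a single NDGA-recognizable graph language whose complement is \emph{not} NDGA-recognizable. The two languages introduced in \cref{lem:ndga<adga}, namely $\dL[max]{2}$ and its complement $\dL[min]{3}$ (the graphs with at least three nodes), are the natural candidates, since all the machinery needed to distinguish them is already in place.

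First I would observe that $\dL[min]{3}$ is NDGA-recognizable. The automaton $\dA[min]{3}$ depicted in \cref{fig:ADGA_min_three} has no universal states---it is simply the existential dual $\dcA[max]{2}$---and is therefore an NDGA. Each node nondeterministically chooses one of the three permanent states $q_1,q_2,q_3$, and the sole accepting set is $\{q_1,q_2,q_3\}$, so a graph is accepted precisely when all three states occur in the final configuration, i.e., when it has at least three nodes. Hence $\L(\dA[min]{3})=\dL[min]{3}$, which shows that $\dL[min]{3}\in\LL_\NDGA$.

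Next I would argue that the complement $Σ^{\clouded{Γ}}\setminus\dL[min]{3}=\dL[max]{2}$ lies outside $\LL_\NDGA$. This language is nonempty (it contains, for instance, the single-node graph) and finite (up to isomorphism there are only finitely many graphs on at most two nodes over $Σ=Γ=\{\blank\}$). By \cref{cor:ndga_infinite}, every nonempty NDGA-recognizable language is countably infinite, so a nonempty finite language such as $\dL[max]{2}$ cannot be NDGA-recognizable. Combining the two observations, $\dL[min]{3}$ witnesses that $\LL_\NDGA$ is not closed under complementation.

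There is essentially no hard step here: the real work was already carried out in the Mirroring Lemma (\cref{lem:weak-mirroring}) and its corollary. The only thing to verify carefully is that the explicit automaton $\dA[min]{3}$ genuinely recognizes $\dL[min]{3}$ and genuinely contains no universal states, so that it qualifies as an NDGA rather than merely as an ADGA; once that is checked, the conclusion follows immediately from \cref{cor:ndga_infinite}.
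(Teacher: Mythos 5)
Your proposal is correct and follows essentially the same route as the paper: it exhibits $\dL[min]{3}$ as NDGA-recognizable via the automaton $\dA[min]{3}$ from \cref{fig:ADGA_min_three}, and rules out its complement $\dL[max]{2}$ using \cref{cor:ndga_infinite} (as established in the proof of \cref{lem:ndga<adga}). No gaps.
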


\begin{proof}
  As mentioned in the proof of \cref{lem:ndga<adga}, the language
  $\dL[max]{2}$ of all $\{\blank\}$-labeled $\{\blank\}$-graphs that
  have at most two nodes is not NDGA-recognizable. However, its
  complement, the language $\dL[min]{3}$ of all graphs that have at
  least three nodes, is recognized by the NDGA $\dA[min]{3}$ specified
  in \cref{fig:ADGA_min_three}.
\end{proof}

However, all the other closure properties of ADGAs mentioned in
\cref{sec:closure-properties} are preserved.

\begin{lemma}[Closure Properties] \label{lem:ndga-closure}
  The class $\LL_\NDGA$ of NDGA-recognizable graph languages is
  effectively closed under union, intersection and projection.
\end{lemma}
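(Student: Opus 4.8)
The plan is to revisit the constructions already established for ADGAs in \cref{lem:union-intersect,lem:projection} and to check, case by case, whether they preserve the defining property of an NDGA, namely $Q_\AA=∅$. For union and projection this will be immediate (and effective); the intersection construction will have to be replaced, and that is where the real work lies.

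For projection, the automaton $\A'$ built in \cref{lem:projection} leaves the universal component untouched — its universal states are exactly the $Q_\AA$ of the input automaton — and the only preprocessing it needs is trimming, which by \cref{rem:trim} creates no universal states. Hence $\A'$ is an NDGA whenever $\A$ is. For union, the construction of \cref{lem:union-intersect} adds states only to the existential and permanent components, with $(Q_∪)_\AA=(Q_1)_\AA∪(Q_2)_\AA$, so two NDGAs yield an NDGA. Here I would only be careful about the preprocessing hypotheses: making an NDGA nonblocking (\cref{rem:nonblocking}) or trim (\cref{rem:trim}) introduces no universal states, and the requirement that the two automata agree on the sequence of quantifiers is automatic for NDGAs, all of whose nonpermanent levels are existential. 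In particular I must \emph{not} pass through alternating normal form, since \cref{rem:ANF} would insert universal levels and destroy the NDGA property.

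The obstacle is intersection. The construction in \cref{lem:union-intersect} derives $\A_∩$ from complementation via De Morgan's law and, concretely, places the coordination states $Q_Σ$ into $(Q_∩)_\AA$; since $\LL_\NDGA$ is not closed under complementation (\cref{lem:ndga-complementation}), this route is unavailable. Instead I would give a direct \emph{product} construction $\A_\otimes$ in which every node simulates $\A_1$ and $\A_2$ simultaneously. A node's state is a pair $⟨q^1,q^2⟩∈Q_1×Q_2$, initialized by $σ_\otimes(a)=⟨σ_1(a),σ_2(a)⟩$; upon receiving the set $S$ of pairs occurring in its incoming neighborhood, it projects $S$ onto its two coordinates $S_1,S_2$ and moves nondeterministically to some pair in $δ_1(q^1,S_1)×δ_2(q^2,S_2)$. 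Since neither $\A_1$ nor $\A_2$ has universal states, neither does $\A_\otimes$. A pair is permanent exactly when both coordinates are, and I would set $\F_\otimes=\{F \mid \pi_1(F)∈\F_1 \text{ and } \pi_2(F)∈\F_2\}$, where $\pi_j$ projects a set of pairs onto its $j$-th coordinate. Correctness then follows in two directions: projecting an accepting product run onto each coordinate yields an accepting run of $\A_1$ and of $\A_2$ (the set of $j$-th coordinates of the received pairs is precisely the set that $\A_j$ would observe, so each projected step is a legal $\A_j$-transition), and conversely, interleaving a given pair of accepting runs round by round produces an accepting product run.

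The point needing the most care is the level structure demanded by \cref{def:adga}: the product must assign an unambiguous level to each of its states. The key observation is that both coordinates advance in lockstep — each nonpermanent coordinate climbs exactly one level per round, while a coordinate that has reached a permanent state simply self-loops there — so in any reachable configuration the two nonpermanent coordinates necessarily sit at the same level $i$. I would therefore restrict $Q_\otimes$ to the \emph{level-consistent} pairs (those in which any two nonpermanent coordinates share a level), assign level $i$ to a pair having a nonpermanent coordinate at level $i$, and declare the pairs with both coordinates permanent to be the permanent states at the top level $\max\{\len(\A_1),\len(\A_2)\}$. Checking that this respects all clauses of \cref{def:adga} — nonpermanent transitions step from level $i$ to $i+1$, permanent pairs self-loop and may receive incoming transitions from any level, and all states of a common level lie in one component — then completes the argument; the unreachable level-inconsistent pairs are harmless and would in any case be discarded by \cref{rem:trim}.
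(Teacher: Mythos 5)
Your proposal is correct and follows essentially the same route as the paper's proof: union and projection are handled by observing that the constructions of \cref{lem:union-intersect,lem:projection} introduce no universal states (with the same caveats about preprocessing), and intersection is handled by the very same product construction, with componentwise initialization, transitions via coordinate projections of the received sets, permanent states being the pairs of permanent states, and acceptance via projections of the reached set. Your additional care about restricting to level-consistent pairs so that the product satisfies the level requirements of \cref{def:adga} is a point the paper leaves implicit (it is absorbed by trimming), and is a welcome refinement rather than a divergence.
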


\begin{proof}
  The union construction from \cref{lem:union-intersect} and the
  projection construction from \cref{lem:projection} do not introduce
  any universal states, and thus yield NDGAs when applied on
  NDGAs.\footnote{Note that those constructions require the input
    NDGAs to satisfy certain properties. Fortunately, they can be
    assumed to hold without loss of generality: The constructions for
    nonblocking and trim ADGAs (see \cref{rem:nonblocking,rem:trim})
    remain valid when restricted to NDGAs. Furthermore, the
    requirement that two automata agree on the sequence of quantifiers
    is trivially fulfilled for NDGAs.}

  It remains to show closure under intersection. This can be done
  using a simple product construction, similar to the one for finite
  automata on words. Consider two NDGAs
  $\A_1=⟨Σ,Γ,\Q_1,σ_1,δ_1,\F_1⟩$ and $\A_2=⟨Σ,Γ,\ab\Q_2,\ab
  σ_2,δ_2,\F_2⟩$. Without loss of generality, we may assume that they
  share the same node and edge alphabets, and that they are both
  nonblocking (see \cref{rem:nonblocking}). For any set $S⊆Q_1×Q_2$,
  we define the projection on the first component as
  \begin{equation*}
    \prj_1(S) \coloneqq \bigl\{ q_1∈Q_1 \bigm| ∃q_2∈Q_2\colon\!⟨q_1,q_2⟩∈S \bigr\},
  \end{equation*}
  and analogously for the projection $\prj_2(S)$ on the second
  component. With this, we construct the product NDGA
  $\A_⊗=⟨Σ,Γ,\Q_⊗,σ_⊗,δ_⊗,\F_⊗⟩$, where
  \begin{itemize}
  \item $(Q_⊗)_\P=(Q_1)_\P×(Q_2)_\P$, \quad $(Q_⊗)_\EE=\bigl(Q_1×Q_2\bigr)\setminus(Q_⊗)_\P$,
  \item
    $\swl{σ_⊗(a)}{(Q_⊗)_\P} =
    \bigl\langle σ_1(a),σ_2(a) \bigr\rangle,
    \quad \text{for every $a∈Σ$}$,
  \item
    $δ_⊗\bigl(⟨q_1,q_2⟩,\:\! \S \:\!\bigr) =\,
    δ_1\bigl( q_1,\:\! \bigl\langle\prj_1(S_γ)\bigr\rangle_{γ∈Γ} \:\!\!\bigr) ×\,
    δ_2\bigl( q_2,\:\! \bigl\langle\prj_2(S_γ)\bigr\rangle_{γ∈Γ} \:\!\!\bigr)$, \\
    for every $⟨q_1,q_2⟩∈Q_⊗$ and $\S=⟨S_γ⟩_{γ∈Γ}∈(2^{Q_⊗})^Γ$\!,
  \item $\F_⊗ = \bigl\{ F⊆(Q_⊗)_\P \bigm| \prj_1(F)∈\F_1 \,∧\, \prj_2(F)∈\F_2 \bigr\}$.
  \end{itemize}
  It is easy to see that $\A_⊗$ recognizes $\L(\A_1)∩\L(\A_2)$.

  Additionally, we also get an alternative construction for the union
  by changing the definition of $\F_⊗$ to
  \begin{equation*}
    \F_⊗ = \bigl\{ F⊆(Q_⊗)_\P \bigm| \prj_1(F)∈\F_1 \,∨\, \prj_2(F)∈\F_2 \bigr\}.
  \end{equation*}
  While this is significantly less efficient than the union
  construction from \cref{lem:union-intersect}, in terms of number of
  states, it has the advantage of not relying on nondeterminism. It
  thus remains applicable when we restrict ourselves to deterministic
  automata in the next section.
\end{proof}

As we have already seen in the Mirroring Lemma
(\cref{lem:weak-mirroring}), all NDGAs have some runs containing
redundancies that prevent them from distinguishing between some nodes
of the input graph. In fact, we can show that \emph{every} run on a
sufficiently large input graph will contain such redundancies. In
order to formally express this idea, we first define two node merging
operations.

Given a graph and two of its nodes $w$ and $w'$, asymmetrically
merging $w'$ into $w$ means to remove $w'$, together with its adjacent
edges, and add outgoing edges from $w$ to all of the former
\emph{outgoing} neighbors of $w'$. If we additionally replicate the
incoming edges of $w'$, the merging is called symmetric.

\begin{definition}[Node Merging]
  Consider a labeled graph $G_λ∈Σ^{\clouded{Γ}}$ and two nodes
  $w,w'∈\VG$. We say that a labeled graph $G'_{λ'}$ is obtained by
  \defd{asymmetric merging} of $w'$ into $w$ in $G_λ$, and denote it
  by $\defd{\amrg(G_λ,w,w')}$, if
  \begin{itemize}[topsep=1ex,itemsep=0ex]
  \item $\swl{\VGpr}{λ'(v)} = \VG \setminus \{w'\}$,
  \item $\swl{\arrGpr{γ}}{λ'(v)} = \bigl({\arrG{γ}}∩(\VGpr\!×\!\VGpr)\bigr)
    \,∪\, \bigl\{⟨w,v⟩\bigm|w'\arrG{γ}v\bigr\}$, \quad for every $γ∈Γ$,
  \item $λ'(v) = λ(v)$, \quad for every $v∈\VGpr$.
  \end{itemize}
  Similarly, if instead of the second condition it holds that
  \begin{equation*}
    {\arrGpr{γ}} = \bigl({\arrG{γ}}∩(\VGpr\!×\!\VGpr)\bigr)
    \,∪\, \bigl\{⟨w,v⟩\bigm|w'\arrG{γ}v\bigr\} \,∪\, \bigl\{⟨v,w⟩\bigm|v\arrG{γ}w'\bigr\},
  \end{equation*}
  for every $γ∈Γ$, then $G'_{λ'}$ is said to be obtained by
  \defd{symmetric merging} of $w$ and $w'$ in $G_λ$, and is denoted by
  $\defd{\smrg(G_λ,w,w')}$.
\end{definition}

With the notation defined, we can now derive another necessary
condition for NDGA-recognizability. The following result reminds
strongly of the Pumping Lemma for regular word languages, and its
proof is somewhat similar in spirit (also based on the Pigeonhole
Principle).

\begin{lemma}[Merging Lemma] \label{lem:merging}
  For every NDGA-recognizable graph language $L$ there exist natural
  numbers $n$ and $m$ (with $n<m$) such that every labeled graph $G_λ$
  satisfies the following node merging properties:
  \begin{itemize}
  \item If $G_λ$ has at least $n$ nodes, then there exist nodes
    $w,w'∈\VG$ such that \\[-2ex]
    \begin{equation*}
      G_λ∈L \quad \text{implies} \quad \amrg(G_λ,w,w')∈L.
    \end{equation*}
  \item If $G_λ$ has at least $m$ nodes, then there exist nodes
    $w,w'∈\VG$ such that \\[-2ex]
    \begin{equation*}
      G_λ∈L \quad \text{implies} \quad \smrg(G_λ,w,w')∈L.
    \end{equation*}
  \end{itemize}
  Moreover, if $\A=⟨Σ,Γ,\Q,\ab σ,δ,\F⟩$ is an NDGA that recognizes
  $L$, then we have $n ≤ \card{Q}^{\len(\A)+1}$,\, and\, $m ≤
  \bigl(\card{Q}·2^{\card{Γ}·\card{Q}}\bigr)^{\!\len(\A)+1}$.
\end{lemma}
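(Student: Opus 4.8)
The plan is to prove both node‑merging properties by the same two‑step recipe: first use a counting (Pigeonhole) argument to locate a pair of nodes that are ``indistinguishable'' in a fixed accepting run, and then apply \cref{rem:local_view} to transport that run across the merging operation. Throughout, let $\A=⟨Σ,Γ,\Q,σ,δ,\F⟩$ be an NDGA recognising $L$. The interesting case is $G_λ∈L$, so that there is an accepting run $R=G_{κ_0}\!\cdots G_{κ_n}$ of $\A$ on $G_λ$ with $n\le\len(\A)$; when $G_λ\notin L$ the asserted implications hold vacuously and any two distinct nodes (which exist because the graph is large) may be taken.

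For the asymmetric part, I would assign to each node $v∈\VG$ its \emph{state trajectory} $⟨κ_0(v),\dots,κ_n(v)⟩$. Since within the single run $R$ all trajectories have the same length $n+1$, there are at most $\card{Q}^{\len(\A)+1}$ of them, so once $G_λ$ has more nodes than trajectory types the Pigeonhole Principle yields two distinct nodes $w,w'$ with $κ_i(w)=κ_i(w')$ for all $i$. I then set $G'_{λ'}=\amrg(G_λ,w,w')$ and define $R'=G'_{κ'_0}\!\cdots G'_{κ'_n}$ by $κ'_i(v)=κ_i(v)$ for every retained node $v∈\VGpr=\VG\setminus\{w'\}$. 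The key observation is that asymmetric merging only \emph{redirects the outgoing edges} of $w'$ onto $w$, so the incoming neighbourhood, and hence the local view, of every node is unchanged \emph{provided} $w$ and $w'$ always carry the same state: a node formerly receiving $κ_i(w')$ now receives $κ_i(w)$, but these are equal and the received set is unaffected. One thus verifies $R'|_v=R|_v$ for each $v∈\VGpr$, and \cref{rem:local_view} gives that $R'$ is a legal run; since $κ_n(w)=κ_n(w')$, deleting $w'$ removes no state from the final configuration, the final state sets coincide, and $R'$ is accepting. Hence $\amrg(G_λ,w,w')∈L$.

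For the symmetric part I would refine the invariant so that the type of a node $v$ is its full \emph{local view} $R|_v=⟨q_0,\S_0⟩\cdots⟨q_n,\S_n⟩$, of which there are at most $(\card{Q}\cdot 2^{\card{Γ}\cdot\card{Q}})^{\len(\A)+1}$; this gives the bound $m$, and since local views strictly refine trajectories we get $n<m$. Pigeonhole again produces distinct $w,w'$ with $R|_w=R|_{w'}$, and I set $G'_{λ'}=\smrg(G_λ,w,w')$ with $κ'_i(v)=κ_i(v)$ as before. The essential new point, which I expect to be the main obstacle, is that symmetric merging additionally redirects the \emph{incoming} edges of $w'$ onto $w$, so $w$'s own received information is enlarged to the union of what $w$ and $w'$ used to receive. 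This is exactly where ``same trajectory'' no longer suffices: one needs the families of received state‑sets of $w$ and $w'$ to agree in every round, i.e.\ the equality of the whole local views, so that the enlarged union collapses back to $w$'s original received family and $R'|_w=R|_w$ still holds. The other nodes are handled as in the asymmetric case (their received sets are unchanged because $w$ and $w'$ carry equal states), and the acceptance argument via $κ_n(w)=κ_n(w')$ and \cref{rem:local_view} is identical.

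The routine but slightly fiddly points I would check are the edge cases where $w$ and $w'$ are adjacent or loop on themselves, so that a dropped edge $⟨w',w⟩$ reappears as a self‑loop $⟨w,w⟩$; this is harmless precisely because $w$ and $w'$ share a state in every round. The remaining bookkeeping is pinning the Pigeonhole thresholds to exactly $\card{Q}^{\len(\A)+1}$ and $(\card{Q}\cdot 2^{\card{Γ}\cdot\card{Q}})^{\len(\A)+1}$, which relies on the facts that all trajectories (resp.\ local views) in one run have the same length and that the final round contributes only permanent states, so that the sharper per‑level count stays within the stated bounds.
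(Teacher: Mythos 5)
Your proposal is correct and follows essentially the same route as the paper's proof: a Pigeonhole argument on local state sequences (for asymmetric merging) and on full local views (for symmetric merging), followed by an application of \cref{rem:local_view} to transfer the accepting run to the merged graph. You even isolate the same key subtlety the paper notes, namely that symmetric merging enlarges the merged node's incoming neighbourhood and therefore requires agreement of entire local views rather than just state trajectories.
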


\begin{proof}
  We fix an NDGA $\A=⟨Σ,Γ,\Q,\ab σ,δ,\F⟩$ that recognizes $L$ and use
  the abbreviations $g=\card{Γ}$,\, $s=\card{Q}$, and $ℓ=\len(\A)$.

  Note that there cannot be more than $s^{ℓ+1}$ different local
  sequences of states in a run of $\A$. Consider a labeled graph
  $G_λ\!∈\!Σ^{\clouded{Γ}}$ that has more than $s^{ℓ+1}$ nodes, and
  some run $R=G_{κ_0}\!\cdots G_{κ_m}$ of $\A$ on $G_λ$, where
  $m≤ℓ$. By the Pigeonhole Principle, there must be two distinct nodes
  $w,w'∈\VG$ that have the same local sequence of states in $R$. If we
  asymmetrically merge $w'$ into $w$, i.e., if we construct
  $G'_{λ'}=\amrg(G_λ,w,w')$, the remaining nodes in $G'_{λ'}$ will not
  see the difference if they maintain their behaviour from $R$, since
  their local views will remain the same. More formally, by
  \cref{rem:local_view}, we can derive from $R$ a run
  $R'=G'_{κ'_0}\!\cdots G'_{κ'_m}$ of $\A$ on $G'_{λ'}$, such that
  $κ'_i(v)=κ_i(v)$ for every node $v∈\VGpr$ and $0≤i≤m$. If $R$ is
  accepting, so is $R'$, hence $G_λ∈\L(\A)$ implies $G'_{λ'}∈\L(\A)$.

  If we want to symmetrically merge two nodes, the reasoning is very
  similar, but slightly more involved because the merged node inherits
  the unified incoming neighborhood of the original nodes, and
  consequently would get a new local view if the two local views of
  the original nodes were different. A simple solution is to require a
  larger minimum number of nodes. Altogether, there cannot be more
  than $(s\,2^{gs})^{ℓ+1}$ different local views in any run of
  $\A$. Again by the Pigeonhole Principle, if a graph has more than
  $(s\,2^{gs})^{ℓ+1}$ nodes, there must be two distinct nodes that
  have the same local view. The rest of the argument is analogous to
  the previous scenario.
\end{proof}

While the previously seen Mirroring Lemma (\cref{lem:weak-mirroring})
allows us to enlarge (“pump up”) graphs without leaving a given
NDGA-recognizable graph language, the Merging Lemma allows us to
shrink (“pump down”) some of them. The combination of both could thus
be considered as some sort of “graph pumping lemma”.

The Merging Lemma provides further evidence of the expressive weakness
of NDGAs, but, perhaps more importantly, it also tells us that their
emptiness problem is decidable. The daunting time complexities
indicated in the following lemma are only rough upper bounds. Better
estimates and algorithms can hopefully be found through further
investigation.

\begin{lemma}[Emptiness Problem] \label{lem:ndga-emptiness}
  The emptiness problem of NDGAs is decidable in doubly-exponential
  time. More precisely, for every NDGA $\A=⟨Σ,Γ,\Q,\ab σ,δ,\F⟩$,
  \begin{itemize}
  \item whether its recognized graph language $\L(\A)$ is empty or not
    can be decided in time $2^k$\!, where $k∈\O\bigl(\:\!
    \card{Γ}·\card{Q}^{4\len(\A)}·\len(\A) \bigr)$,\, and
  \item whether its undirected graph language $\L_\undir(\A)$ is empty
    or not can be decided in time $2^{2^{k'}}$\!\!, where
    $k'∈\O\bigl(\:\!  \card{Γ}·\card{Q}·\len(\A) \bigr)$.
  \end{itemize}
\end{lemma}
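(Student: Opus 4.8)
The plan is to decide emptiness by combining two ingredients: the Merging Lemma (\cref{lem:merging}), which bounds the size of a smallest witness graph, and a brute-force search over all graphs up to that size. Throughout I abbreviate $s=\card{Q}$, $g=\card{Γ}$, and $\ell=\len(\A)$. For the first bullet I would use the \emph{asymmetric} part of \cref{lem:merging}: starting from any $G_λ∈\L(\A)$ and repeatedly applying $\amrg$ — which stays inside $\L(\A)$ and deletes one node — as long as the graph has at least $n≤s^{\ell+1}$ nodes, I obtain an accepted graph with fewer than $N:=s^{\ell+1}$ nodes. Hence $\L(\A)=∅$ iff no graph on at most $N$ nodes is accepted. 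To enumerate candidates cheaply I would note that a node's behaviour depends only on its initial state $σ(λ(v))∈σ(Σ)⊆Q$, so for emptiness it suffices to range over initial-state labelings $\VG→Q$ (at most $s^N$) rather than over $Σ$-labelings; together with the $≤2^{gN^2}$ choices for the $g$ edge relations this gives at most $2^{gN^2+N\log s}$ graphs. For each fixed graph I would decide acceptance by computing, round by round, the set of reachable configurations (at most $s^N$ of them) for $\ell$ rounds, and then checking whether any reachable permanent configuration is accepting; this per-graph test costs $2^{O(N\log s)}$. Substituting $N=s^{\ell+1}$ (so $N^2=s^{2\ell+2}$ and $N\log s≤s^{\ell+2}$) gives a total time $2^{O(g\,s^{2\ell+2}+s^{\ell+2})}$, which for $\ell≥1$ is bounded by $2^{k}$ with $k∈\O(g\,s^{4\ell}\,\ell)$, as stated (the tiny case $\ell=0$ is a finite check).

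For the second bullet the same scheme applies, but the search must stay within \emph{undirected} graphs, and the crucial point is that $\amrg$ is no longer usable: it copies only outgoing edges and so destroys undirectedness, whereas $\smrg$ copies incoming and outgoing edges symmetrically and hence preserves it (a quick check of the definition shows that if $u\arrG{γ}v⇔v\arrG{γ}u$ for all $u,v$, then the same holds in $\smrg(G_λ,w,w')$). I would therefore invoke the \emph{symmetric} part of \cref{lem:merging}, whose necessarily larger threshold is $m≤(s\,2^{gs})^{\ell+1}=:M'$: repeated symmetric merging turns any accepted undirected graph into an accepted undirected graph with fewer than $M'$ nodes, so $\L_\undir(\A)=∅$ iff no undirected graph on at most $M'$ nodes is accepted. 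Since $M'=2^{\O(gs\ell)}=2^{\O(k')}$ with $k'∈\O(gs\ell)$, there are $2^{gM'^2+M'\log s}=2^{2^{\O(k')}}$ undirected graphs to inspect, each testable in time $2^{\O(M'\log s)}=2^{2^{\O(k')}}$, for a total of $2^{2^{k'}}$ with $k'∈\O(g\,s\,\ell)$, matching the claimed doubly-exponential bound.

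The hard part will not be any single calculation but getting the undirected case conceptually right: one must verify that $\smrg$ genuinely preserves undirectedness (so that the pumped-down witness remains a legal input for $\L_\undir(\A)$, whose definition restricts to undirected graphs) and must explain why the cheaper $\amrg$ is unavailable there. This asymmetry between the two merging operations is precisely what accounts for the jump from a single to a double exponential, since it forces the use of the much larger symmetric threshold $m$ in place of $n$. A secondary bookkeeping point worth stating cleanly is the reduction of the node alphabet to initial states, which removes any dependence on $\card{Σ}$ from the bounds; everything else is the routine arithmetic of checking that these crude estimates fit inside the loosely stated exponents $\O(g\,s^{4\ell}\,\ell)$ and $\O(g\,s\,\ell)$.
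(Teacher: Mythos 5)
Your proposal is correct and follows essentially the same route as the paper's proof: apply the Merging Lemma recursively (asymmetric merging for $\L(\A)$, symmetric merging for $\L_\undir(\A)$, since only the latter preserves undirectedness) to bound the size of a smallest witness, then brute-force over all graphs up to that size, checking acceptance per graph by enumerating runs over initial-state labelings. The bookkeeping differs only cosmetically (reachable-configuration sets versus enumerating all runs), and your explicit check that $\smrg$ preserves undirectedness merely spells out what the paper states in a parenthetical.
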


\begin{proof}
  We use again the abbreviations $g=\card{Γ}$,\, $s=\card{Q}$, and
  $ℓ=\len(\A)$. By applying the first part of the Merging Lemma
  (\cref{lem:merging}) recursively, we conclude that if $\L(\A)$ is
  not empty, then it contains a labeled graph that has at most
  $s^{ℓ+1}$ nodes. Similarly, by the second part of the Merging Lemma,
  if $\L_\undir(\A)$ is not empty, then it contains an undirected
  labeled graph that has at most $(s\,2^{gs})^{ℓ+1}$ nodes. (For
  undirected graphs, asymmetric merging is, in general, not
  applicable.) Hence, the emptiness problem is decidable because the
  search space is finite.

  We now derive a rough asymptotic upper bound on the time
  complexities of the naive approaches that check every (directed)
  graph that has at most $s^{ℓ+1}$ nodes, and every undirected graph
  that has at most $(s\,2^{gs})^{ℓ+1}$ nodes, respectively.
  \begin{itemize}
  \item The maximum numbers of nodes can be over-approximated by
    $\O(s^{2ℓ})$ and $\O(2^{4gsℓ})$, respectively.
  \item Given a natural number $n$, there are $\O(2^{gn^2})$
    $Γ$-graphs with precisely $n$ nodes. (This is only an upper bound
    because we consider isomorphic graphs to be equal.)
  \item Given a $Γ$-graph $G$ with $n$ nodes, we can decide whether
    $G_λ∈\L(\A)$ for some labeling $λ\colon \VG→Σ$, by checking every
    possible run of $\A$ that starts with a configuration
    $G_{κ_0}∈\bigl(σ(Σ)\bigr)^G$. This can be done in time
    $\O(s^{n(ℓ+1)})⊆\O(s^{2nℓ})$.
  \end{itemize}
  Hence, the total time complexities are bounded by
  \begin{align*}
    &\O\Biggl(\: \sum_{n=1}^{s^{2ℓ}} 2^{gn^2} \!· s^{2nℓ} \Biggr)
    \,⊆\; \O\biggl(\! 2^{8gs^{4ℓ}ℓ} \biggr), \quad \text{and} \\
    &\O\Biggl(\: \sum_{n=1}^{2^{4gsℓ}} 2^{gn^2} \!· s^{2nℓ} \Biggr)
    \,⊆\; \O\biggl(\! 2^{2^{16gsℓ\strut}} \biggr),
  \end{align*}
  respectively.
\end{proof}

\section{Deterministic Distributed Graph Automata}
As a further restriction, we now forbid nondeterministic choices.

\begin{definition}[Deterministic Distributed Graph Automaton]
  A \defd{deterministic distributed graph automaton} (DDGA) is a
  \emph{nonblocking} NDGA $\A=⟨Σ,Γ,\Q,\ab σ,δ,\F⟩$ in which every
  state $q∈Q$ has at most one outgoing transition for every
  $\S∈(2^Q)^Γ$\!, i.e., $\card{δ(q,\S)}≤1$. We denote by
  $\defd{\LL_\DDGA}$ the class of all \defd{DDGA-recognizable} graph
  languages.
\end{definition}

The transition function being deterministic forces every node of an
input graph to behave like its mirror images. This allows us to state
a stronger Mirroring Lemma for DDGAs.

\begin{lemma}[Strong Mirroring Lemma] \label{lem:strong-mirroring}
  Every DDGA-recognizable graph language $L$ is closed under both
  mirroring and the reversal of mirroring, i.e., for every labeled
  graph $G_λ$ and subset of nodes $U⊆\VG$,
  \begin{equation*}
    G_λ∈L \quad \text{\Iff} \quad\! \mir(G_λ,U)∈L.
  \end{equation*}
\end{lemma}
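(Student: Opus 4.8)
The forward implication is free of charge: since every DDGA is in particular an NDGA, the language $L$ is NDGA-recognizable, so the (weak) Mirroring Lemma (\cref{lem:weak-mirroring}) already gives $G_\lambda \in L \Rightarrow \mir(G_\lambda, U) \in L$. The entire content of the strengthening is therefore the reverse implication, and the plan is to exploit determinism to show that the two halves of a mirrored graph behave identically throughout the \emph{unique} run, so that the mirror image carries no new information and can simply be deleted.

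Concretely, let $\A = \langle \Sigma, \Gamma, \Q, \sigma, \delta, \F\rangle$ be a DDGA recognizing $L$, write $G'_{\lambda'} = \mir(G_\lambda, U)$ with mirroring bijection $f \colon U \to (\VGpr \setminus \VG)$, and assume $G'_{\lambda'} \in L$. Because $\A$ is deterministic and nonblocking, there is exactly one run $R' = G'_{\kappa'_0} \cdots G'_{\kappa'_n}$ of $\A$ on $G'_{\lambda'}$, and it is accepting. The central claim is the synchronization property
\[
  \kappa'_i(f(v)) = \kappa'_i(v) \quad \text{for all } v \in U \text{ and all } 0 \le i \le n,
\]
which I would prove by induction on $i$. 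For $i = 0$ it follows from $\kappa'_0 = \sigma \circ \lambda'$ and the fact that mirroring preserves labels, so that $\lambda'(f(v)) = \lambda(v) = \lambda'(v)$. For the step, the key observation is that $v$ and $f(v)$ receive exactly the same set-abstracted input in round $i$: by construction of $\mir$, the node $v \in U$ retains its original incoming neighborhood, whereas $f(v)$ sees the outside-$U$ incoming neighbors of $v$ (the very same nodes) together with the images $f(u)$ of its in-$U$ incoming neighbors $u$; by the induction hypothesis $\kappa'_i(f(u)) = \kappa'_i(u)$, so the two nodes see the same set of states along each edge label $\gamma$. Being also in the same state by the induction hypothesis, they are forced into the same successor state by determinism ($\card{\delta(q,\S)} \le 1$).

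Granting the synchronization property, I would finish by restricting $R'$ to the original nodes: set $\kappa_i(v) := \kappa'_i(v)$ for $v \in \VG$, obtaining a candidate sequence $R = G_{\kappa_0} \cdots G_{\kappa_n}$ on $G_\lambda$. To see that $R$ is a legal run I would apply \cref{rem:local_view} with the roles of the two graphs interchanged (old graph $G'_{\lambda'}$, new graph $G_\lambda$): each $v \in \VG$ must have the same label and local view as its counterpart $v$ in $G'_{\lambda'}$. Labels agree since $\lambda' = \lambda$ on $\VG$; and the local views agree because a node $v \in U$ has identical incoming neighborhoods in $G$ and $G'$, while a node $v \in \VG \setminus U$ gains in $G'$ only the extra incoming neighbors $f(u)$ (for in-$U$ neighbors $u$ of $v$), which by the synchronization property contribute only states $\kappa'_i(u) = \kappa'_i(f(u))$ that $v$ already sees from $u$ in $G$ — so the received sets of states coincide. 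Acceptance then transfers: by the synchronization property the set of states occurring in $G'_{\kappa'_n}$ equals the set occurring in $G_{\kappa_n}$, so the ``furthermore'' part of \cref{rem:local_view} yields that $R$ is accepting, whence $G_\lambda \in L$.

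The main obstacle is the synchronization claim, and specifically checking that $v$ and $f(v)$ genuinely receive the same set-abstracted input in each round — this is exactly where determinism is indispensable. In a merely nondeterministic automaton the mirror image $f(v)$ could resolve its transition differently from $v$, breaking the induction and hence the whole argument; so one should not expect \cref{lem:weak-mirroring} to sharpen into an equivalence without the determinism hypothesis. It is precisely this forced lock-step behaviour of symmetric duplicates under a deterministic transition function that upgrades the one-way closure into the two-way equivalence asserted here.
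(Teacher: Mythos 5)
Your proposal is correct and follows the paper's proof essentially verbatim: the forward direction is delegated to \cref{lem:weak-mirroring}, and the reverse direction establishes $κ'_i(v)=κ'_i(f(v))$ by induction using determinism, then deletes the mirror copies and invokes \cref{rem:local_view} to transfer the accepting run back to $G_λ$. You merely spell out the inductive step (the comparison of set-abstracted incoming neighborhoods of $v$, $f(v)$, and the nodes outside $U$) in more detail than the paper, which only asserts it.
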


\begin{proof}~
  \vspace{-1ex}
  \begin{itemize}
  \item[($⇒$)] The “only if” direction is a specialization of the
    weaker Mirroring Lemma (\cref{lem:weak-mirroring}) to DDGAs.
  \item[($⇐$)] Let $\A=⟨Σ,Γ,\Q,\ab σ,δ,\F⟩$ be a DDGA. Consider any
    $G_λ∈Σ^{\clouded{Γ}}$ and $U⊆\VG$. We set $G'_{λ'}=\mir(G_λ,U)$,
    and fix a mirroring bijection $f\colon U→(\VGpr\setminus \VG)$ in
    $G'_{λ'}$. If $G'_{λ'}∈\L(\A)$, then the (unique) run
    $R'=G'_{κ'_0}\!\cdots G'_{κ'_n}$ of $\A$ on $G'_{λ'}$ is
    accepting. Since $\A$ is deterministic, it can be shown
    inductively that every node $v∈U$ behaves like its mirror image
    $f(v)$, i.e., $κ'_i(v)=κ'_i(f(v))$, for $0≤i≤n$. If we remove
    $f(v)$ for every $v∈U$, we do not change the local view of any
    node in $\VG$. Hence, by \cref{rem:local_view}, we can derive from
    $R'$ the accepting run $R=G_{κ_0}\!\cdots G_{κ_n}$ of $\A$ on
    $G_λ$, where $κ_i(v)=κ'_i(v)$, for every $v∈\VG$ and
    $0≤i≤n$. Consequently, $G_λ∈\L(\A)$.  \qedhere
  \end{itemize}
\end{proof}

The Strong Mirroring Lemma implies that DDGAs are utterly incapable of
breaking symmetry. This makes them strictly weaker than NDGAs.

\begin{lemma}[$\LL_\DDGA⊂\LL_\NDGA$] \label{lem:ddga<ndga}
  There are (infinitely many) NDGA-recognizable graph languages that
  are not DDGA-recognizable.
\end{lemma}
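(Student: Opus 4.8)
The plan is to exhibit a specific NDGA-recognizable graph language that is closed under neither direction of mirroring in the strong sense, so that by the Strong Mirroring Lemma (\cref{lem:strong-mirroring}) it cannot be DDGA-recognizable. The natural candidate is a language whose membership depends on the \emph{presence} of asymmetry — something an NDGA can detect nondeterministically by having a single node ``raise its hand'' and verify it is distinguishable, but which a DDGA, being forced to treat every node exactly like its mirror images, cannot. Concretely, I would take $Σ=Γ=\{\blank\}$ and consider the language $L$ of all graphs that have at least one node $v$ whose outgoing neighborhood differs from that of some other node — or, more cleanly, a language where asymmetry is witnessed. But to keep the argument self-contained and to match the bound of ``infinitely many'' languages, the cleanest choice is to use the complement phenomenon already set up in \cref{lem:ndga<adga}: just as $\dL[max]{2}$ separated NDGAs from ADGAs via a counting/finiteness obstruction, here I want a \emph{symmetry} obstruction separating DDGAs from NDGAs.

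The concrete approach I would follow is this. First, I would pick a language $L$ that is NDGA-recognizable by a short explicit automaton but that is \emph{not} closed under the reversal of mirroring. A good candidate: the language of graphs containing at least two nodes that have identical closed incoming neighborhoods is forced on a DDGA, whereas detecting a node that is \emph{unique} in some respect needs symmetry breaking. I would design $L$ so that some graph $G_λ$ lies in $L$ but $\mir(G_λ,U)$ does not (or vice versa) for a suitable $U$. Then by \cref{lem:strong-mirroring}, $L\notin\LL_\DDGA$. Second, I would construct an explicit NDGA recognizing $L$, using nondeterminism precisely to pick out the witnessing node, thereby showing $L\in\LL_\NDGA$. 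Finally, to get the ``infinitely many'' clause, I would parametrize the construction — e.g., by requiring the witnessing structure to involve $c$ distinguished nodes for each $c\geq 1$ — yielding an infinite family $L_c$ of NDGA-recognizable languages, each violating strong-mirror-closure, hence none DDGA-recognizable.

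The step I expect to be the main obstacle is choosing the separating language $L$ so that \emph{both} the NDGA construction and the mirror-asymmetry are genuinely clean. The tension is that any property an NDGA can check must survive the ``blindness to multiplicity'' inherent in \cref{rem:local_view}: an NDGA cannot count how many neighbors are in a given state, only which states occur. So $L$ must be expressible purely in terms of sets of neighboring states (so that an NDGA can recognize it), yet must be sensitive to duplication of a subgraph (so that strong-mirror-closure fails). The resolution is to exploit that mirroring can create a nontrivial automorphism where none existed — so $L$ should be a language like ``the graph admits \emph{no} nontrivial pair of nodes with identical local structure'' or, dually, a language that nondeterministically marks exactly one node and accepts only if that mark is never duplicable. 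Verifying that the chosen $L$ is simultaneously NDGA-recognizable and strongly-mirror-violating, and that this extends to an infinite family, is where the real care is needed; the invocation of \cref{lem:strong-mirroring} to conclude non-DDGA-recognizability is then immediate.
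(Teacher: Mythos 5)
Your high-level strategy is the paper's: exhibit an NDGA-recognizable language that violates closure under the reversal of mirroring and invoke the Strong Mirroring Lemma (\cref{lem:strong-mirroring}) to conclude it is not DDGA-recognizable. However, you never commit to a concrete separating language, and the candidates you do float are the wrong kind and would sink the argument. Languages of the form ``the graph admits no nontrivial pair of nodes with identical local structure,'' or ``some marked node is never duplicable,'' are obstructed not just by the \emph{reversal} of mirroring but by \emph{forward} mirroring itself: mirroring a set $U$ manufactures, for every $v\in U$, a mirror image $f(v)$ with the same label and an indistinguishable local view. By the weak Mirroring Lemma (\cref{lem:weak-mirroring}), \emph{every} NDGA-recognizable language is closed under forward mirroring, so any language that forbids such duplication (or requires a unique witness) is already outside $\LL_\NDGA$ and cannot separate NDGAs from DDGAs. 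What you need is a language closed under mirroring but not under its reversal; ``asymmetry detection'' languages are closed under neither.

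The paper sidesteps all of this with the counting language you brushed past: it takes $\dL[min]{3}$, the complement of the $\dL[max]{2}$ from \cref{lem:ndga<adga}, i.e.\ the graphs with at least three nodes. This is NDGA-recognizable by the automaton $\dA[min]{3}$ of \cref{fig:ADGA_min_three} (each node nondeterministically picks one of three permanent states; the sole accepting set is all three, so acceptance forces at least three nodes), and it is monotone under mirroring, so the weak Mirroring Lemma is no obstacle. But mirroring the single node $u$ of a two-node graph $G$ yields a three-node graph $G'=\mir(G,\{u\})\in\dL[min]{3}$ while $G\notin\dL[min]{3}$, contradicting \cref{lem:strong-mirroring} if the language were DDGA-recognizable. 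The family $\dL[min]{c}$ for $c\ge 2$ gives the infinitely many examples. So the missing idea in your proposal is precisely that the separating property should be a threshold on cardinality (preserved by adding mirror copies, broken by removing them) rather than a symmetry or uniqueness condition (broken already by adding them).
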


\begin{proof}
  Let $Σ=Γ=\{\blank\}$. We consider again the language
  $\dL[min]{3}=\bigl\{G∈Σ^{\clouded{Γ}} \bigm| \card{\VG}≥3 \bigr\}$
  of all graphs that have at least three nodes. As already mentioned
  in the proof of \cref{lem:ndga-complementation}, this language is
  NDGA-recognizable because it is recognized by the NDGA $\dA[min]{3}$
  from \cref{fig:ADGA_min_three}. Now, assume that $\dL[min]{3}$ is
  also DDGA-recognizable and consider the graphs $G$ and
  $G'=\mir(G,\{u\})$ from
  \cref{fig:graphs_two_and_three_nodes}. Clearly $G'∈\dL[min]{3}$, and
  by \cref{lem:strong-mirroring}, it follows that $G∈\dL[min]{3}$,
  which is a contradiction. Hence, $\dL[min]{3}$ is not
  DDGA-recognizable.

  We can apply a similar reasoning to any graph language
  $\dL[min]{c}⊆Σ^{\clouded{Γ}}$ in which the number of nodes is
  bounded from below by a constant $c≥2$.
\end{proof}

\begin{figure}[h!]
  \alignpic
  \begin{subfigure}{0.35\textwidth}
    \centering
\begin{tikzpicture}[lgraph]
  \node[lnode,label=below:$u$] (u) {};
  \node[lnode,label=below:\phantom{$f(u)$}] (v) at ([shift={(0:\lnodedistIG)}]u) {};
  \path (u) edge (v);
\end{tikzpicture}
    \caption{$G$}
    \label{fig:graph_two_nodes}
  \end{subfigure}
  \begin{subfigure}{0.35\textwidth}
    \centering
\begin{tikzpicture}[lgraph]
  \node[lnode,label=below:$u$] (u) {};
  \node[lnode] (v) at ([shift={(0:\lnodedistIG)}]u) {};
  \node[lnode,label=below:$f(u)$] (w) at ([shift={(0:\lnodedistIG)}]v) {};
  \path (u) edge (v)
        (w) edge (v);
\end{tikzpicture}
    \caption{$G'=\mir(G,\{u\})$}
    \label{fig:graph_three_nodes}
  \end{subfigure}
  \caption{Two $\{\blank\}$-labeled $\{\blank\}$-graphs related by
    mirroring.}
  \label{fig:graphs_two_and_three_nodes}
\end{figure}

Instead of expecting the automaton to break symmetry through
nondeterminism, we could also rely on additional information provided
by the node labels of the input graph. Well-chosen labels can make the
structure of the input graph visible to a deterministic
automaton. This strong dependence on the node labeling implies the
following negative result.

\begin{lemma}[Projection] \label{lem:ddga-projection}
  The class $\LL_\DDGA$ of DDGA-recognizable graph languages is
  \emph{not} closed under projection.
\end{lemma}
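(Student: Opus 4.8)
The plan is to exploit the contrast, emphasized throughout this section, between the label-driven power of DDGAs and their incapacity to break symmetry. Concretely, I would exhibit a single DDGA-recognizable language $L$ over an enlarged node alphabet, together with a projection $h$, such that $h(L)$ is a language already known \emph{not} to be DDGA-recognizable. The obvious target for $h(L)$ is $\dL[min]{3}$, the language of all $\{\blank\}$-labeled $\{\blank\}$-graphs with at least three nodes, whose non-recognizability by DDGAs was established in \cref{lem:ddga<ndga} by means of the Strong Mirroring Lemma (\cref{lem:strong-mirroring}).

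First I would fix $Σ=\{\a,\b,\c\}$, $Σ'=Γ=\{\blank\}$, and let $h\colon Σ→Σ'$ be the projection collapsing all three node labels to $\blank$. For $L$ I would take the set of all $\{\a,\b,\c\}$-labeled $\{\blank\}$-graphs in which each of the three labels occurs on at least one node. The crucial observation is that this global ``every label is present'' condition can be verified with no communication and no symmetry breaking, as it is read off from the node labels alone via the aggregation of the reached states. A witnessing DDGA over $⟨\{\a,\b,\c\},\{\blank\}⟩$ has $Q_\EE=Q_\AA=∅$ and $Q_\P=\{q_\a,q_\b,q_\c\}$, with $σ(\a)=q_\a$,\, $σ(\b)=q_\b$,\, $σ(\c)=q_\c$, and $\F=\bigl\{\{q_\a,q_\b,q_\c\}\bigr\}$. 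Since all of its states are permanent, the unique run on any input consists of its initial configuration alone, which is accepting exactly when all three permanent states occur, i.e.\ when all three labels appear; its transition function is trivially deterministic and nonblocking, so this is a legitimate DDGA and thus $L∈\LL_\DDGA$.

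It then remains to check that $h(L)=\dL[min]{3}$. For this I would argue that a graph $G$ admits a $\{\a,\b,\c\}$-labeling using all three labels \Iff $\card{\VG}≥3$: if $G$ has at least three nodes, one can label three distinct nodes with $\a$, $\b$, $\c$ and the remaining nodes arbitrarily, whereas two nodes can never carry three distinct labels. Crucially, this is independent of the edge relation, so it holds for every underlying $\{\blank\}$-graph $G$. Hence $h(L)=\dL[min]{3}$, and since $L∈\LL_\DDGA$ while $\dL[min]{3}∉\LL_\DDGA$, the class $\LL_\DDGA$ is not closed under projection. The argument involves no real difficulty; the one point deserving care is precisely this identity $h(L)=\dL[min]{3}$, together with the sanity check that the witnessing automaton needs neither nondeterminism nor universal branching. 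This is exactly the intuitive moral of the construction: a label-encoded witness can be read deterministically, but projection erases it, leaving behind a cardinality condition that a symmetry-blind DDGA can no longer certify.
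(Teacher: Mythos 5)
Your proposal is correct and follows essentially the same route as the paper's proof: the same witness language of $\{\a,\b,\c\}$-labeled graphs in which every label occurs (recognized by the same purely-permanent three-state DDGA), the same collapsing projection $h$, and the same reduction to the non-DDGA-recognizability of $\dL[min]{3}$ established via the Strong Mirroring Lemma. The only difference is that you spell out the identity $h(L)=\dL[min]{3}$, which the paper treats as immediate.
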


\begin{proof}
  Let $Σ=\{\a,\b,\c\}$ and $Γ=\{\blank\}$. We consider the language
  $\dL[occur]{\a,\b,\c}$ of all $Σ$-labeled graphs in which every node
  label occurs at least once, i.e.,
  \begin{equation*}
    \dL[occur]{\a,\b,\c} = \bigl\{G_λ∈Σ^{\clouded{Γ}} \bigm|
    ∃u,v,w∈\VG\colon λ(u)=\a \,∧\, λ(v)=\b \,∧\, λ(w)=\c \bigr\}.
  \end{equation*}
  This language is recognized by the DDGA $\dA[occur]{\a,\b,\c}$
  specified in \cref{fig:ADGA_occur_abc}. Now, consider the projection
  $h\colon Σ→\{\blank\}$ with $h(\a)=h(\b)=h(\c)=\blank$. Clearly,
  $h(\dL[occur]{\a,\b,\c})$ is equal to the language $\dL[min]{3}$ of
  all $\{\blank\}$-labeled graphs that have at least three nodes. As
  shown in the proof of \cref{lem:ddga<ndga}, this language is not
  DDGA-recognizable.
\end{proof}

\begin{SCfigure}[1.8][h!]
  \alignpic
\begin{tikzpicture}[automaton]
  \matrix[states] {
    \node[permanent] (q_a) {$q_\a$}; \\
    \node[permanent] (q_b) {$q_\b$}; \\
    \node[permanent] (q_c) {$q_\c$}; \\
  };
  \matrix[symbols] { 
    \node (a) {$\a$}; \\
    \node (b) {$\b$}; \\
    \node (c) {$\c$}; \\
  };
  \matrix[accepting sets] {
    \x \\
    \x \\
    \x \\
  };
  \DrawColumnBackground{1}{3}{1}
  \path (a) edge (q_a)
        (b) edge (q_b)
        (c) edge (q_c);
\end{tikzpicture}
  \caption{$\dA[occur]{\a,\b,\c}$, a DDGA over
    $\bigl\langle\{\a,\b,\c\},\{\blank\}\bigr\rangle$ whose graph
    language consists of the labeled graphs in which each of the three
    node labels occurs at least once.}
  \label{fig:ADGA_occur_abc}
\end{SCfigure}

On the positive side, the union and intersection constructions for
NDGAs remain valid, and complementation becomes trivial.

\begin{lemma}[Closure Properties] \label{lem:ddga-closure}
  The class $\LL_\DDGA$ of DDGA-recognizable graph languages is
  effectively closed under boolean set operations.
\end{lemma}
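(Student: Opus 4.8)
The plan is to handle the three boolean operations separately, reusing as much of the NDGA machinery as possible: for union and intersection I would simply invoke the product construction already given in the proof of \cref{lem:ndga-closure}, and for complementation I would exploit the fact that a DDGA, being deterministic \emph{and} nonblocking, has a unique run on every input.

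First, complementation. Let $\A=⟨Σ,Γ,\Q,σ,δ,\F⟩$ be a DDGA. Because $\A$ is nonblocking and satisfies $\card{δ(q,\S)}≤1$, every reachable nonpermanent configuration has \emph{exactly} one successor configuration (the product of the per-node choices is nonempty but of size at most one), and since $Q_\AA=∅$ there is no universal branching; hence $\A$ has a single run $G_{κ_0}\!\cdots G_{κ_n}$ on each $G_λ$, terminating in a unique permanent configuration $G_{κ_n}$, and $G_λ$ is accepted \Iff $\{κ_n(v)\mid v∈\VG\}∈\F$. Consequently, the automaton $\A'=⟨Σ,Γ,\Q,σ,δ,2^{Q_\P}\setminus\F⟩$, which differs from $\A$ only in its accepting sets, recognizes exactly $Σ^{\clouded{Γ}}\setminus\L(\A)$. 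The point worth stressing is that I would \emph{not} use the dual-automaton construction of \cref{lem:complementation} here: dualization turns existential states into universal ones, which would violate the requirement $Q_\AA=∅$ and thus leave the DDGA class. The direct argument above leaves $δ$ and the state partition untouched, so $\A'$ is again a deterministic, nonblocking NDGA, i.e.\ a DDGA.

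Second, for union and intersection I would reuse verbatim the product automaton $\A_⊗$ constructed in the proof of \cref{lem:ndga-closure}, taking $\F_⊗$ to encode the conjunction of $\prj_1(F)∈\F_1$ and $\prj_2(F)∈\F_2$ for intersection, and their disjunction for union. DDGAs are nonblocking by definition, so the preprocessing via \cref{rem:nonblocking} that the construction nominally assumes is not required. It then remains only to confirm that $\A_⊗$ is itself deterministic and nonblocking. Determinism is immediate, since $\card{δ_⊗(⟨q_1,q_2⟩,\S)}=\card{δ_1(q_1,\cdot)}·\card{δ_2(q_2,\cdot)}≤1$; and nonblocking-ness is inherited because the product of two nonempty successor sets is nonempty. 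Hence $\A_⊗$ is a DDGA recognizing $\L(\A_1)∩\L(\A_2)$ or $\L(\A_1)∪\L(\A_2)$, as desired.

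The construction presents no real computational difficulty; the one point that requires care is conceptual rather than technical, namely keeping complementation inside the deterministic class by arguing through the uniqueness of runs instead of through dualization. Once complementation and intersection are in hand, union would also follow by De Morgan's law, but since the product construction yields it directly, no extra work is needed.
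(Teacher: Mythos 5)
Your proposal is correct and matches the paper's own proof: the paper likewise reuses the product construction from \cref{lem:ndga-closure} for union and intersection (noting it yields DDGAs when applied to DDGAs) and handles complementation by simply replacing $\F$ with $2^{Q_\P}\setminus\F$, justified by the uniqueness of the run of a deterministic automaton. Your additional remarks --- that dualization must be avoided because it would introduce universal states, and that determinism and nonblocking-ness are preserved by the product --- are accurate elaborations of the same argument.
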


\begin{proof}
  The product constructions for union and intersection specified in
  the proof of \cref{lem:ndga-closure} yield DDGAs when applied on
  DDGAs.

  It remains to show closure under complementation. Consider any DDGA
  $\A=⟨Σ,Γ,\Q,\ab σ,δ,\F⟩$. Since its transition function is
  deterministic, $\A$ has precisely one run on each labeled graph
  $G_λ∈Σ^{\clouded{Γ}}$. Hence, we obtain a complement automaton $\cA$
  by simply complementing the acceptance condition, i.e.,
  $\cA=⟨Σ,Γ,\Q,\ab σ,δ,2^{Q_\P}\setminus\F⟩$.
\end{proof}


\chapter{Conclusion}
We first summarize and comment upon the results obtained in this work,
and then conclude with a small selection of open questions that seem
worth pursuing.

\section{Commented Summary}
We have introduced ADGAs, a new class of finite graph
automata. Although many graph automaton models have been defined over
the last decades, ADGAs are probably the first automata to be
equivalent to MSO-logic on graphs. In this regard, it seems remarkable
that the individual ingredients of this model of computation are not
spectacular at all, and, for the most part, well-established.

To a certain extent, ADGAs can be considered as synchronous
distributed algorithms, where each node is limited to a finite-state
machine. The model is further restricted by a constant running time
and the fact that nodes see only an abstract representation of their
neighborhood, in the form of sets of states, which drastically limits
their ability to distinguish between different neighbors. The latter
restriction allows ADGAs to operate on graphs of unbounded degree, a
feature that sets them apart from many other types of finite graph
automata defined in the literature (for instance in \cite{WR79} and
\cite{Tho91}).

Besides this distributed character, there is also a centralized aspect
to ADGAs, which contributes greatly to their expressive power. On the
one hand, acceptance is decided on a global level, based on the set of
states reached by the local processors. This combines and generalizes
two decision-making approaches that may appear more natural in a
distributed setting: decision by a unique leader and decision by
unanimous agreement of all the nodes. On the other hand, ADGAs
implement the powerful concept of alternation, a kind of
parallelization, which is also expressed in terms of the global
configuration of the entire system.

As already mentioned above, taken in isolation these concepts
represent nothing new. The contribution of the present thesis is
mainly to combine them into a model of computation that balances
between distribution and centralization in a way that matches
precisely MSO-logic.

For finite automata on words and (bottom-up) tree automata,
alternation and nondeterminism do not increase expressiveness (see
\cite[Thm~5.2]{CKS81} and \cite[Thm~7.4.1]{TATA08}). For our graph
automata, on the other hand, they are essential ingredients that
cannot be eliminated without losing expressive power. We have seen
that the deterministic, nondeterministic and alternating variants of
distributed graph automata form a strict hierarchy, i.e.,
\begin{equation*}
  \LL_\DDGA \hyperref[lem:ddga<ndga]{⊂} \LL_\NDGA 
  \hyperref[lem:ndga<adga]{⊂} \LL_\ADGA.
\end{equation*}
On an intuitive level, this is not very surprising, since
nondeterministic choices and universal branchings are closely related
to existential and universal quantification in MSO-logic, and removing
one type of quantifier (without allowing to negate the other)
drastically diminishes expressiveness.

Another way to look at this is from the perspective of the closure
properties of the three variants of automata, which are summarized in
\cref{tab:closure-decidability}. As already mentioned
\hyperlink{engelfriet-charact}{in-between} the two proofs of
\cref{sec:adga=mso}, Engelfriet has characterized the class of
MSO-definable graph languages as the smallest class that contains
certain elementary graph languages and is closed under boolean set
operations and under projection. To achieve closure under projection
with our distributed automata, we need nondeterminism so that the
nodes can guess which label they might have had without application of
the projection function. But if the additional expressive power
introduced by nondeterministic choices is not matched by the
corresponding dual, namely universal branchings, then there is an
asymmetry that makes us lose closure under complementation.

\begin{table}[h!]
  \alignpic
  \begin{tabular}{lccccc}
    \toprule
         & \multicolumn{4}{c}{Closure Properties} & Decidability \\
    \cmidrule(rl){2-5} \cmidrule(rl){6-6}
         & Complement & Union & Intersection & Projection & Emptiness \\
    \addlinespace
    ADGA & \hyperref[lem:complementation]{\cmark} & \hyperref[lem:union-intersect]{\cmark}
         & \hyperref[lem:union-intersect]{\cmark} & \hyperref[lem:projection]{\cmark} 
         & \hyperref[cor:adga-emptiness]{\xmark} \\
    \addlinespace
    NDGA & \hyperref[lem:ndga-complementation]{\xmark} & \hyperref[lem:ndga-closure]{\cmark}
         & \hyperref[lem:ndga-closure]{\cmark} & \hyperref[lem:ndga-closure]{\cmark}
         & \hyperref[lem:ndga-emptiness]{\cmark} \\
    \addlinespace
    DDGA & \hyperref[lem:ddga-closure]{\cmark} & \hyperref[lem:ddga-closure]{\cmark}
         & \hyperref[lem:ddga-closure]{\cmark} & \hyperref[lem:ddga-projection]{\xmark}
         & \hyperref[lem:ndga-emptiness]{\cmark} \\
    \bottomrule
  \end{tabular}
  \caption{Closure and decidability properties of alternating,
    nondeterministic, and deterministic distributed graph automata.}
  \label{tab:closure-decidability}
\end{table}

However, as also indicated in \cref{tab:closure-decidability}, if we
do not enable universal branchings, it has the positive effect that
emptiness remains decidable, which is not possible anymore once we
have reached the expressive power of MSO-logic on graphs. The reason
for this positive decidability result is that, for NDGAs, any run on a
sufficiently large input graph contains redundancies. This allows us
to narrow down the search space to a finite number of graphs.

\section{Open Questions}
In the present work, we have not gone much further than introducing
new definitions. Whether these definitions are sensible largely
depends on the insights that we might gain through them. In this
regard, it would be very interesting to obtain answers to the
following questions.

\begin{description}[style=nextline]
\item[Logics Equivalent to NDGAs and DDGAs.] Although we have been
  mostly concerned with ADGAs here, the fact that emptiness is
  decidable for NDGAs and DDGAs might make their corresponding classes
  of graph languages attractive, despite their less robust closure
  properties. As mentioned in \cref{sec:neg-impl-adga}, Trakhtenbrot’s
  Theorem states that, even when restricting formulas to first-order
  logic, satisfiability is undecidable on graphs. This tells us that
  alternation in our graph automata is required to even cover the
  expressiveness of first-order logic, but also arouses curiosity
  regarding the logical equivalent of the weaker variants. \emph{What
    would be logical formalisms on graphs that precisely define the
    NDGA- and DDGA-recognizable graph languages?}
\item[Alternative Definitions of ADGAs.] The definition of ADGAs given
  in this thesis (\cref{def:adga}) seems quite involved, and to
  loosely paraphrase a famous French aviator, as long as there remains
  something nonessential to remove, there is also room for
  improvement. On some classes of graphs, the necessary running time
  of ADGAs can be bounded by a fixed constant. For instance, if we
  only consider (graphs representing) words, simulating the classical
  finite automata shows us that every regular language is recognizable
  by some ADGA of length~2. (Nondeterministically guess a run of the
  word automaton in the first round, then check that it is legal and
  accepting in the second round.) But on general graphs, there is no
  such fixed constant, as can be easily inferred from the infinity of
  the MSO quantifier alternation hierarchy investigated by Matz and
  Thomas in \cite{MT97}. The question remains: \emph{can we impose
    simplifications or restrictions on the definition of ADGAs without
    sacrificing expressive power?}
\item[Impact on other Research.] Since ADGAs arose from an open-ended
  question, it seems only fitting to conclude this thesis with other
  questions of that type.
  \begin{itemize}
  \item On words and trees, the equivalence between finite automata
    and MSO-logic led to the decidability of the satisfiability and
    validity problems of MSO-logic. Unfortunately, this is not
    extendable to graphs. But maybe ADGAs can help us finding
    necessary conditions for MSO-definability, similar to what we got
    for NDGA-recognizability in the Mirroring Lemma
    (\cref{lem:weak-mirroring}). More generally, we might ask:
    \emph{what can ADGAs tell us about the class of MSO-definable
      graph languages?}
  \item Conversely, we could also use MSO-logic as a means to an
    end. As already mentioned, ADGAs can be considered, to some
    extent, as distributed algorithms. \emph{What can the connection
      to MSO-logic tell us about distributed algorithms?}
  \end{itemize}
\end{description}



\end{document}